\def\BibTeX{{\rm B\kern-.05em{\sc i\kern-.025em b}\kern-.08em
    T\kern-.1667em\lower.7ex\hbox{E}\kern-.125emX}}
\tikzstyle{block} = [draw, fill=blue!20, rectangle, 
\tikzstyle{input} = [coordinate]
\tikzstyle{sum}=[draw,minimum size=1mm,inner sep=0pt,outer sep=0pt,shape=circle,fill=black]
\tikzstyle{output} = [coordinate]
\tikzstyle{pinstyle} = [pin edge={to-,thin,black}]
\tikzstyle{int}=[draw, fill=blue!20, minimum size=3em]
\tikzstyle{init} = [pin edge={to-,thin,black}]
\newtheorem{thm}{Theorem}
\newtheorem{cor}{Corollary}
\newtheorem{assum}{Assumption}
\newtheorem{prop}{Proposition}
\newtheorem{exmp}{Example}
\DeclareMathOperator{\R}{\mathbb{R}}
\begin{document}

\title{\LARGE \bf Risk of Phase Incoherence in Wide Area Control of \\ Synchronous Power Networks}
\author{Christoforos Somarakis, Guangyi Liu  and Nader Motee 
\thanks{*Christoforos Somarakis is with Palo Alto Research Center, a Xerox Company, 3333 Coyote Hill Road
Palo Alto, CA 94304 USA {\tt\small somarakis@parc.com}, Guangyi Liu,  and Nader Motee are with the Department of Mechanical Engineering and Mechanics, Lehigh University, Bethlehem, PA, 18015, USA.  \{\tt\small gliu,motee\}@lehigh.edu}}

\maketitle

\begin{abstract}
We develop a framework to quantify systemic risk measures in a class of Wide-Area-Control (WAC) laws in power networks in the presence of noisy and time-delayed sensory data.  A closed-form calculation of the risk of phase incoherence in interconnected power networks is presented, and the effect of network parameters, information flow in WAC architecture, statistics of noise, and time-delays are characterized. We show that in the presence of time-delay and noise, a fundamental trade-off between the best achievable performance (via tuning feedback gains) and value-at-risk emerges. The significance of our results is that they provide a guideline for developing algorithmic design tools to enhance the coherency and robustness of closed-loop power networks simultaneously. Finally, we validate our theoretical findings through extensive simulation examples.
\end{abstract}

\section{Introduction}

The modern power networks have been struggling with evermore narrow stability margins under the ongoing deregulation of energy markets and the growth of highly volatile sources of renewable energy as well as variable load endpoints  \cite{fox2010smart,momoh2012smart}. These systems may be steered to undesirable contingency events under continued stressful operating conditions. The 1996 Western American blackout \cite{blackout1996}, the 2003 blackouts in North-East USA, Canada \cite{blackout2003_2} and Italy \cite{blackoutitaly_1} are real-world examples of failures due to outdated stabilization techniques or naive interconnectivity \cite{1705631,5454394}.


The synchronous power systems are considered robust if they remain in or return to synchronism after experiencing a fault event. The concept of transient stability measures this property, where it characterizes the extent to which generators remain in phase as they recover from a nontrivial disturbance \cite{sauer2006power,1457634}. The transient dynamics in power systems are handled via local power system stabilizer (PSS) modules. These are controllers that perform feedback stabilization in the event of generator excitation \cite{1994power}. However, PSSs perform poorly in the absence of coordinating authority that may result in grid instability \cite{260906}. Furthermore, the existing proposed solutions require careful tuning with high-gain feedback control laws \cite{1709097} that suffer from scalability issues \cite{6740090}. A different approach utilizes Wide-Area-Control (WAC) methods \cite{6580901,6740090,5728885}, where network stabilization relies on remote measurements from the entire grid. These methods enjoy the advantage of being technologically feasible due to high-bandwidth phasor measurement units and flexible AC transmission system devices \cite{doi:10.1002/etep.545}. Furthermore, the theory of multi-agent systems is reaching the level of maturity to provide scalable algorithms for control of modern grids \cite{Annaswamy2013,bullo2018lectures,6740090}.
On the downside, measurements have to be transmitted over some cyber layer \cite{SOUDBAKHSH2017171}. Therefore, in contrast to local control, WAC is vulnerable to asynchronous propagation of information or measurement noise that corrupts actual data. Moreover, their simultaneous interplay may severely impact WAC capabilities on stabilizing power networks \cite{6740090}.   In addition to the aforementioned references, some other related works investigate stability problems in power systems with nonlinear models \cite{SCHIFFER2017261,1457634} and time-delay mitigation in WAC loops \cite{Vu2017,985276,5688286,6517501}. The latter work focuses on the worst-case scenarios for robust control, which usually yields conservative control policies.




In this work, we consider the problem of phase incoherence in linear transient dynamics of a noisy power network.  We assess the performance of a specific class of WAC policies that seeks to control the power grid through a virtual communication network. The network is prone to asynchronous processing and propagation of information. Moreover, information transmitted between stations is corrupted by measurement noise that captures the impact of imperfect sensors and/or communication protocols.  A schematic diagram of the problem setup is illustrated in Figure \ref{fig: problem}. The robustness of the closed-loop network is assessed through the notion of  Value-At-Risk, which is a systemic risk measure and adopted from Finance literature \cite{follmer11} and has been recently utilized by the authors in the context of multi-agent control protocols \cite{somarakisnader_tac_1}-\cite{8884747} as a surrogate for robustness that scales with high-dimensional dynamical systems.

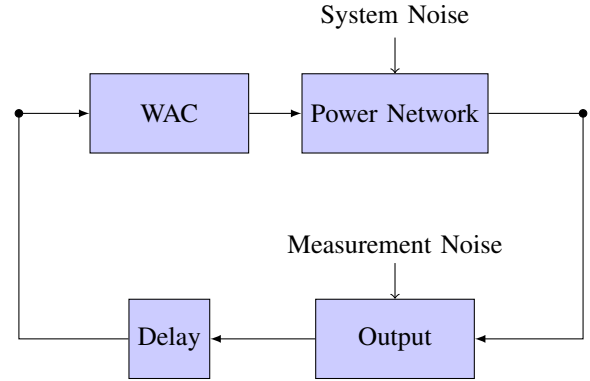
\begin{figure}
\center
\begin{tikzpicture}[auto, node distance=2cm,>=latex]
	\node [input, name=input] {};
	\node [sum, right of=input] (sum) {};
	\node [block, right of=sum] (controller) {WAC};
	\node [block, right of=controller, pin={[pinstyle]above:System Noise}, node distance=3cm] (system) {Power Network};
	    \node[sum, right of=system, node distance=2.5cm] (y) {};
	\node [block, below of=system,pin={[pinstyle]above: Measurement Noise}, node distance=3cm] (measurements) {Output};
	\node [int, below of=controller, node distance=3cm] (delay) {Delay};
	\draw [->] (controller) -- node[name=u] {} (system);
	\draw [->] (sum) -- node {} (controller);
	\draw [->] (measurements) -- (delay);
	\draw [-] (system) -- (y);
	\draw [->] (y) |- (measurements);
	\draw [-] (delay) -| (sum);
\end{tikzpicture}     \caption{A schematic diagram of WAC of power networks with heterogeneous sources of noise is depicted. In addition, data transmitted over the communication network is  noisy and time-delayed.}\label{fig: problem}
\end{figure}

We examine the role of WAC in reducing the risk of undesirable behaviors in power networks in the presence of measurement noise and time-delay. We introduce a notion of nested systemic events to characterize various levels of experiencing undesirable events. The value-at-risk of phase incoherence between a given pair of synchronous generators is calculated using the corresponding nested systemic events. It is shown that the systemic risk measure depends on the spectral properties of the underlying graph of the network, the WAC gains, statistics of noise, and time-delay. In the presence of time-delay, fundamental limits emerge between the best (lowest) achievable levels of risk and highest possible values for the feedback gains in the WAC. This limit suggests that while higher feedback gains may result in better performance, e.g., smaller total resistive power loss \cite{6759860}-\cite{7086037}, it will increase the risk of phase incoherence in the network. This paper is an outgrowth of  \cite{SOMARAKIS2018142,somnader20}, which contains several new technical contributions with respect to its conference versions. 

\section{Mathematical Notations}\label{sect: notation} 

We denote the non-negative orthant of the Euclidean space $\R^n$ by and  $\mathbb{R}_{+}^n$. The $n\times n$ diagonal matrix is denoted by $D=\text{diag}\big \{d^{(i)}\big\}$, the $n \times r$ zero matrix is $O_{n\times r}$, and $O_n$ when $r=n$. The $n\times n$ identity matrix is denoted by $I_n$. Finally, by $1_n$ we understand the $n\times 1$ vector of all ones.

\vspace{0.1cm}
{\it Algebraic Graph Theory:} A undirected weighted graph is defined by $\mathcal{G} = (\mathcal{V}, \mathcal{E}, \omega)$, where $\mathcal{V}$ is the set of nodes, $\mathcal{E}$ is the set of edges (feedback links), and $\omega: \mathcal{V} \times \mathcal{V} \rightarrow \mathbb{R}_{+}$ is the weight function that assigns a non-negative number (feedback gain) to every link. Two nodes are directly connected if and only if $(i,j) \in \mathcal{E}$.

\begin{assum}  \label{asp:connected}
    Every graph in this paper is connected. In addition, for every $i,j \in \mathcal{V}$, the following properties hold:
    \begin{itemize}
        \item $\omega(i,j) > 0$ if and only if $(i,j) \in \mathcal{E}$.
        \item $\omega(i,j) = \omega(j,i)$, i.e., links are undirected.
        \item $\omega(i,i) = 0$, i.e., links are simple.
    \end{itemize}
\end{assum}

The Laplacian matrix of $\mathcal{G}$ is a $n \times n$ matrix $L = [l_{ij}]$ with elements
\[
    l_{ij}:=\begin{cases}
        \; -\omega(i,j)  &\text{if } \; i \neq j \\
        \; \omega(i,1) + \ldots + \omega(i,n)  &\text{if } \; i = j 
    \end{cases},
\]
where $k_{i,j} := \omega(i,j)$. Laplacian matrix of a graph is symmetric and positive semi-definite. Assumption \ref{asp:connected} implies the smallest Laplacian eigenvalue is zero with algebraic multiplicity one. The spectrum of $L$ can be ordered as 
$
    0 = \lambda_1 < \lambda_2 \leq \dots \leq \lambda_n.
$
The eigenvector of $L$ corresponding to $\lambda_k$ is denoted by $\bm{q}_{k}$. By letting $Q = [\bm{q}_{1} | \dots | \bm{q}_{n}]$, it follows that $L = Q \Lambda Q^T$ with $\Lambda = \text{diag}[0, \lambda_2, \dots, \lambda_n]$. We normalize the Laplacian eigenvectors such that $Q$ becomes an orthogonal matrix, i.e., $Q^T Q = Q Q^T = I_{n}$ with $\bm{q}_1 = \frac{1}{\sqrt{n}} \bm{1}_n$. The complete incidence matrix with dimensions of $n(n-1)/2\times n$ is denoted by $B_n$.

\vspace{0.1cm}
{\it Probability Theory:} Let $\mathcal{L}^{2}(\mathbb{R}^{q})$ be the set of all $\R^q$-valued random vectors $z = [z^{(1)}, \dots ,z^{(q)}]^T$ of a probability space $(\Omega, \mathcal{F}, \mathbb{P})$ with finite second moments. A normal random variable $y \in \mathbb{R}^{q}$ with mean $\mu \in \mathbb{R}^{q}$ and $q \times q$ covariance matrix $\Sigma$ is represented by $y \sim \mathcal{N}(\mu, \Sigma)$. We employ standard notation $\text{d} \bm{\xi}_t$ for the formulation of stochastic differential equations. Finally, with $\nu_{\varepsilon}$ we denote the unique solution of $\int_{-\nu_\varepsilon}^{\nu_\varepsilon}e^{-\frac{t^2}{2}}\,dt=\sqrt{2\pi}(1-\varepsilon)$ for $\varepsilon \in (0,1)$.

\section{Time-Delayed Synchronous Power Networks}\label{sect: formulation} 
Let us consider a network of $n$ synchronous generators connected over $m$ transmission lines. The $i$'{th} generator is defined through the (static) triplet $(J_i,\beta_i,E_i)$ and (dynamic) state vector $(\theta^{(i)}_t,\omega^{(i)}_t)$. The triplet of fixed parameters consists of the rotational inertia, the damping coefficient, and the voltage magnitude, respectively. The state variables at time instant $t$ are the rotor phase $\theta^{(i)}_t$ and the rotor frequency $\omega^{(i)}_t=\frac{d}{dt}\theta^{(i)}_t$. The states of the synchronous generators can be stacked as $$\theta_t=\big[\theta_t^{(1)},\dots,\theta_t^{(n)}\big]^T~~~~\text{and}~~~~\omega_t=\big[\omega_t^{(1)},\dots,\omega_t^{(n)} \big]^T,$$ to define the network's state vector. 
Interactions in the power network are characterized through the reduced complex admittance matrix $Y=[Y_{ij}\angle \varphi_{ij}]$, where the angles $\varphi_{ij}\in [0,\frac{\pi}{2}]$ are the phase shifts that characterize the energy loss due to the transfer conductance. Throughout the paper, any transfer conductance is assumed zero, i.e., $\varphi_{ij}=\frac{\pi}{2}$ for all $i\neq j$. This makes $Y_{ij}$ the normalized susceptance of the transmission line connecting the $i^{th}$ and $j^{th}$ generators. The effective power input $p_i$ of generator $i$ is defined as the difference of the mechanical power and the electrical internal voltage power, i.e., $p_i=P_i^{m} - E_i^2 \, \mathrm{Re} \{Y_{ii}\}$.

We consider the following benchmark model \cite{1457634} of the dynamic phase of the $i$'{th} generator 
\begin{equation}\label{eq: classicalmodel}
	J_i\ddot{\theta}^{(i)}_t=-\beta_i \dot{\theta}^{(i)}_t+\sum_{j=1}^n E_i E_j Y_{ij} \sin(\theta^{(j)}_t-\theta^{(i)}_t)+p_i
\end{equation} 
for $i=1,\dots,n$. Although this model has been criticised for being over-simplified  \cite{7924418}, it has proven to be a reliable and benchmark model to investigate control policies in power networks \cite{Vu2017}. 
The central objective of transient stability is to investigate conditions under which \eqref{eq: classicalmodel} will converge to an operating equilibrium point. For fixed voltage magnitudes, admittances, and power inputs, the equilibrium point belongs to the manifold 
\begin{equation*}
	\begin{split}
		\mathbb S=\bigg\{&\big(\theta,~ \omega\big) \in \mathbb R^{2n}~\Big|~ \omega=0 ~~\text{and} ~~\big|\theta^{(i)}-\theta^{(j)}\big|<\frac{\pi}{2}~~\text{with} \\
 		&p_i=\sum_{j=1}^n E_i E_j Y_{ij} \sin\big(\theta^{(i)}-\theta^{(j)}\big),~i,j=1,\dots,n \bigg\} 
	\end{split}
\end{equation*} 
The condition that geodesic distance between phase angles is restricted in $(-\pi/2,\pi/2)$ allows for the consistency of angle differences;  we refer to \cite{doi:10.1137/110851584} for more details.


\subsection{Deviation from synchronous states} Let us denote the desired equilibrium point by  $(\theta_*,0)\in \mathbb S$. Due to exogenous stochastic disturbances, the phase and frequency states tend to fluctuate around the equilibrium state. This allows us to investigate behaviour of the system \eqref{eq: classicalmodel} in presence of additive noise using its linearization around the equilibrium point\footnote{To keep our notations simple, the state variables $\theta_t$ and $\omega_t$ have been recycled to denote the error variables $\theta_t-\theta_*$ and $\omega_t-\omega_*$.} by considering the error dynamics      
\begin{equation}\label{eq: linearized3}
	\begin{split}
		\begin{bmatrix}
		d\theta_t \\ d\omega_t
		\end{bmatrix}&=\left[
	\begin{array}{ccc}
		0  & \vline &  I \\ \hline
		-L & \vline & -D
	\end{array}\right] \begin{bmatrix}
		\theta_t \\
		\omega_t
		\end{bmatrix}dt+ \begin{bmatrix}
		0 \\ 
		H
	\end{bmatrix} d\xi_t,
	\end{split}
\end{equation}  
where $d\xi_t=\big[d\xi^{(1)}_t,\dots,d\xi^{(n)}_t\big]^T$. 
The corresponding Laplacian matrix $L=[l_{ij}]$ is defined elementwise as 
$$ 
l_{ij}=\begin{cases}
J_{i}^{-1}E_iE_jY_{ij}\cos(\theta^{(i)}_*-\theta^{(j)}_*) &~~~\textrm{if}~~~ i \neq j \\
-  \displaystyle \sum_{k\neq i}l_{ik} &~~~\textrm{if}~~~i=j
\end{cases}
$$ 
 and the diagonal matrix $D=\text{diag}\{\frac{\beta_1}{J_1},\dots,\frac{\beta_n}{J_n}\}$ contains the damping over inertia ratios of power machines. In this model,  the stochastic fluctuations in effective power inputs $p_i$ for $i=1,\ldots, n$ are considered as one source of exogenous disturbance, where the effects of mismatch between power generation and load demand are modeled as multiple mutually independent white noise processes $d\xi^{(1)}_t,d\xi^{(2)}_t\dots,d\xi^{(n)}_t$. The diffusion coefficient for the $i$'th generator is $\eta J_i^{-1}$. Therefore, the diffusion matrix is given by $H=\eta \, J^{-1}$, where $J=\mathrm{diag}\{J_1, \ldots, J_n\}$.





%
%

\subsection{Virtual  State Feedback  Control}\label{subsect: control} 

In \eqref{eq: linearized3}, the term involving $d\xi_t$ induces stochastic fluctuation for the state vector around $(\theta_*,0) \in \mathbb S$. One possible approach to mitigate the effect of the exogenous disturbance is to adjust power voltage on admittance parameters, i.e., to probe into elements of $L$ and $D$. However, this is a costly, if not infeasible, solution. For large-scale power networks, the distributed WAC with the aid of a communication network of sensors/controllers is a tractable alternative. The idea is for actuators to be installed on the virtual network and use measurements of their adjacent remote peers \cite{6740090,5728885}. 
To this end, we consider introducing the state feedback control in \eqref{eq: linearized3}, such that
  \begin{equation}\label{eq: linearizedwithfeedback}
		\begin{bmatrix}
		d\theta_t \\ d\omega_t
		\end{bmatrix}=\left[
	\begin{array}{ccc}
		0  & \vline &  I \\ \hline
		-L & \vline & -D
	\end{array}\right] \begin{bmatrix}
		\theta_t \\
		\omega_t
		\end{bmatrix}dt+ \begin{bmatrix}
		0 \\ 
		H
	\end{bmatrix} d\xi_t +\begin{bmatrix}
0 \\  I
\end{bmatrix} u_t,
\end{equation} 
where the control input $u_t$ utilizes information of the virtual network from the power network. The communicating sensors are assumed to operate in parallel with the electric grid and established over multiple spatially distributed local controllers. 
As is the case with every transmitted and processed information, time-delays in signals are ubiquitous and they account for severe impacts on the closed-loop system.
\begin{assum}\label{assum: delay}
	The control mechanism over the communication network processes state information with a constant and uniform time-delay $\tau>0$.
\end{assum} 

The main sources of time-lagged information are generic latency in measurements, communication, and actuation of the control policies. 
In this work, we regard $\tau>0$ as a nominal constant that all time-delay sources are lumped into \footnote{In reality, time-delay occurs for various reasons, and its amount ought to be considered stochastic, time-dependent, or heterogeneous, which is beyond the scope of our work.}.

The virtual state feedback control for generator $i$ is given by  
\begin{align}\label{eq: feedback}
        u_t^{(i)} & =-\sum_{j=1}^n \Big[ m_{ij} \left(\theta_{t-\tau}^{(j)} +\eta'  d\xi_t^{(j+n)} \right) \nonumber \\
        & \hspace{3cm} +  k_{ij} \left(\omega_{t-\tau}^{(j)} +\eta'      d\xi_t^{(j+2n)} \right) \Big]. 
\end{align}
The real data are usually corrupted by measurement noises due to, e.g., imperfect sensors, simultaneous casting, or other conditions that may occur at other levels of communication \cite{Stallings:2006:DCC:1215307}. The effect of such uncertainties on the rotor phase and frequency in the virtual feedback control law are modeled as $\eta' \, d\xi_t^{(j+n)}$ and  $\eta' \, d\xi_t^{(j+2n)}$ , for $j=1,\dots,n$, respectively, where the diffusion coefficient $\eta'$  models the noise magnitude.

The feedback gains, which characterize interactions among the  communicating controllers  over the virtual network, are assumed to be symmetric, i.e., $m_{ij}=m_{ji}$ and $k_{ij}=k_{ji}$. By denoting the corresponding interaction matrices by $M=[m_{ij}]$ and $K=[k_{ij}]$,  the overall closed-loop network can be represented in the following compact form 
\begin{equation}\label{eq: model}
	\begin{split}
		\begin{bmatrix}
		d\,\theta_t \\ d\,\omega_t
		\end{bmatrix}&=\mathbf{A} \begin{bmatrix}
		\theta_t \\
		\omega_t
		\end{bmatrix}dt+\mathbf{K} \begin{bmatrix}
		\theta_{t-\tau} \\
		\omega_{t-\tau}
		\end{bmatrix}dt+ \mathbf{H}\, d\xi_t
	\end{split}
\end{equation} 
in which 
\[
\mathbf{A}=\left[
	\begin{array}{ccc}
		0  & \vline &  I \\ \hline
		-L & \vline & -D
	\end{array}\right], ~\mathbf K=\left[
\begin{array}{ccc}
0 & \vline & 0  \\ \hline
-J^{-1}M & \vline & -J^{-1}K
\end{array}\right].
\]
Summing up perturbation input and communication noise, the diffusion matrix  can be written as 
\begin{equation}\label{eq: H}
\mathbf{H} =\left[ \begin{array}{ccccc}
      0 & \vline & 0 & \vline &0 \\ \hline
    \eta J^{-1}  & \vline & \eta' M & \vline & \eta' K 
\end{array}\right] \in \R^{2n \times 3n}.
\end{equation} 
The diffusion matrix includes two types of uncertainties, i.e., the exogenous noise and the measurement noise, that affect the dynamics of the network simultaneously, but independently. The network \eqref{eq: model} is initialized with arbitrary, but fixed, functions $\phi(t)=\big(\phi^{\theta}(t),\omega^{\theta}(t)\big)$ over time interval $t\in [-\tau,0]$ that are statistically independent of $d\xi_0$.

\begin{assum}\label{assum: uniformity}
For all $i=1,\dots,n$, the damping and inertia parameters of generators are assumed to be identical, i.e., $$\beta_i = \beta>0  ~~~~\text{and}~~~~ J_i = J>0.$$
\end{assum}

Although this simplification implies that all generators have the same damping over inertia ratios, the above assumption has been widely addressed in the literature \cite{DBLP:journals/corr/abs-1711-10348,doi:10.1137/110851584,7924418,Siami16TACa} and will facilitate the risk analysis and enable the possibilities to obtain explicit results. 

\subsection{Problem Statement}      
In the error dynamics \eqref{eq: model}, the states of all generators in the power network will fluctuate around the operating equilibrium point. Hence, some of the generators may experience phase incoherence and deterioration in robustness. Our goal is to quantify value-at-risk of phase incoherence in the power network as a function of the underlying graph Laplacian, communication time-delay in virtual control, and noise statistics. Furthermore, we  characterize fundamental limits and tradeoffs that emerge from existence of time-delay and noise in the system.

\section{Internal Stability \& Measurement Statistics}\label{sect: preliminary} 

In order to calculate systemic risk measures, one needs to develop some intermediate results about stability and statistics of the steady state variables, which constitutes the subject of this section. First, we establish explicit necessary and sufficient conditions for asymptotic stability of the unperturbed system, i.e., in \eqref{eq: model} when $\eta=\eta'=0$, where the stability region depends on system and feedback control parameters and the time-delay  $\tau>0$. Then, based on the derived stability window, we will calculate the steady-state statistics of the network. 

\subsection{Simultaneous Diagonalizable Feedback Control}
To allow derivation of analytical results, the feedback gain matrices, which are of interest in this work, satisfy the following diagonalizability condition. 
\begin{assum}\label{assum: commute} 
		The feedback gain matrices $M$ and $K$ are designed such that each pair out of $L,M,K$ commutes.
\end{assum}  
 
A standard result  \cite{Horn:2012:MA:2422911} asserts that this assumption is equivalent to the existence of a unitary matrix $Q$ such that $Q^T  U Q$ is diagonal for every $U \in \{L,M, K \}$ (cf. Theorem 4.1.6 of \cite{Horn:2012:MA:2422911}). Therefore, by virtue of Assumption \ref{assum: commute}, it is assumed that there exists matrix $Q$ such that $Q^{T}LQ=\Lambda_L,~Q^{T}MQ=\Lambda_{M}$  and $Q^{T}KQ=\Lambda_{K}$, where  $\Lambda_{L},~\Lambda_{M},~\Lambda_{K}$ are  diagonal matrices. In particular, columns of $Q$ can be arranged so that $\Lambda_L=\text{diag}\big\{\lambda_1,\lambda_2,\dots,\lambda_n \big\}$ with elements sorted as $0=\lambda_1<\lambda_2\leq \dots  \leq \lambda_n.$  Similarly, the $j$'th diagonal elements of $\Lambda_{M}$ and $\Lambda_{K}$ are $\mu_j$ and $\kappa_j$, respectively.

Examples of simultaneously diagonalizable structures with respect to $L$ appear when $M$ and $K$ are proportional to: identity matrix $I_n$, graph Laplacian $L$, centering matrix $M_n$, or matrix $B_nB_n^T$, where $B_n$ is the incidence matrix. Another case is when $M$ and $K$ are proportional to each other and  $M$ commutes with $L$.  While Assumption \ref{assum: commute} may impose some restrictions on admissible WAC policies, one can still form a feasible WAC policy. The reason is that feedback gains $M$ and $K$ are only parameters of the virtual network for which it is reasonable to  assume that the designer has enough synthesis authority to choose admissible gains, whereas the power network matrices $L,~D$ are much less accessible to any redesign and modification. 

\subsection{Internal Stability}



For general matrices $M$ and $K$, sufficient conditions for the asymptotic behavior of the unperturbed closed-loop system can be inferred using standard methods in the stability theory of delay differential equations \cite{gu2003stability}. These conditions are usually conservative and of little use for further analysis. However, when $M$ and $K$ satisfy Assumption \ref{assum: commute}, one can explicitly characterize the stability region of the unperturbed system. 
For the exposition of the following result, we introduce the sets $\big\{\mathbb W_i(s;k)~\big|~(s;k)\in \mathbb R^{2}_+ \times \mathbb R^2 \big\}_{i=0}^3$ that are defined in Table \ref{table: stability} of Appendix \ref{appendix1}. 
\begin{thm}\label{thm: stability} 
    Suppose that Assumptions \ref{assum: uniformity}-\ref{assum: commute} hold. The solution of \eqref{eq: linearizedwithfeedback} when noise is absent and time-delay $\tau>0$ converges to $\big[ \rho {\bf 1}_n , 0_n\big]^T $ for some $\rho\in \mathbb R$ if and only if $$\big(d\tau ,\lambda_j\tau^2~;~\mu_j\tau^2,\kappa_j\tau \big) \in \bigcup_{r=0}^3 \mathbb W_r$$
    for $j=1,\dots,n$, where $d$ is the uniform damping ratio and $\lambda_j,~\mu_j,~\kappa_j$ are the $j$'th eigenvalue of matrices $L$, $M$ and $K$, respectively. Furthermore, if $\mu_1=\kappa_1=0$, then $\rho=\frac{1}{n}1_n^T \phi^\theta(0)+ \frac{1}{dn} 1_n^T\phi^{\omega}(0)$.
\end{thm}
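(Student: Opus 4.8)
The plan is to diagonalize the deterministic (noise‑free) closed loop into $n$ decoupled scalar second‑order delay equations, recast convergence of each mode as a root‑location condition for its characteristic quasi‑polynomial, match those conditions to the tabulated sets $\mathbb{W}_r$, and read off $\rho$ from the single surviving slow mode. First I would set $\eta=\eta'=0$ in \eqref{eq: linearizedwithfeedback}--\eqref{eq: feedback}, which leaves the retarded system $\dot\theta_t=\omega_t$, $\dot\omega_t=-L\theta_t-D\omega_t-M\theta_{t-\tau}-K\omega_{t-\tau}$. Assumption \ref{assum: uniformity} gives $D=d\,I_n$ with $d=\beta/J$, and Assumption \ref{assum: commute} furnishes an orthogonal $Q$ that simultaneously diagonalizes $L$, $M$, $K$. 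Setting $y_t:=Q^T\theta_t$, $v_t:=Q^T\omega_t$ and using $\dot\theta_t=\omega_t$ (hence $\dot y_t=v_t$), the system splits into the scalar equations $\ddot y^{(j)}_t+d\,\dot y^{(j)}_t+\lambda_j y^{(j)}_t+\mu_j y^{(j)}_{t-\tau}+\kappa_j\dot y^{(j)}_{t-\tau}=0$ for $j=1,\dots,n$, each a scalar autonomous linear DDE with the single delay $\tau$ of Assumption \ref{assum: delay}. Its characteristic quasi‑polynomial is $P_j(s)=s^2+ds+\lambda_j+(\mu_j+\kappa_j s)e^{-s\tau}$, and the substitution $z=s\tau$ shows that the location of the roots of $P_j$ depends on the data only through $(d\tau,\lambda_j\tau^2,\mu_j\tau^2,\kappa_j\tau)$, exactly the parameters of $\mathbb{W}_r$ (with $s_1=d\tau$, $s_2=\lambda_j\tau^2$, $k_1=\mu_j\tau^2$, $k_2=\kappa_j\tau$).

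Next I would invoke the spectral theory of linear autonomous delay equations \cite{gu2003stability}: for each mode, $y^{(j)}_t$ and $\dot y^{(j)}_t$ tend to $0$ for every admissible history iff all roots of $P_j$ lie in the open left half‑plane, whereas $y^{(j)}_t$ converges to a finite constant and $\dot y^{(j)}_t\to0$ iff all roots of $P_j$ lie in the open left half‑plane except for a simple root at the origin. Reading Table \ref{table: stability}, $\bigcup_{r=0}^3\mathbb{W}_r$ is precisely the set of $(s;k)$ for which the rescaled quasi‑polynomial $z^2+s_1z+s_2+(k_1+k_2z)e^{-z}$ has all roots in the open left half‑plane, together with the marginal tuples whose only root off the open left half‑plane is a simple zero at the origin; and such a marginal tuple must have $s_2=0$, hence can only concern the mode $j=1$ (the unique mode with $\lambda_j=0$, since $0=\lambda_1<\lambda_2\le\cdots$). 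Therefore $(d\tau,\lambda_j\tau^2;\mu_j\tau^2,\kappa_j\tau)\in\bigcup_r\mathbb{W}_r$ for all $j$ holds iff every mode $j\ge2$ is asymptotically stable and the first mode converges to a constant; undoing the change of variables $\theta_t=Qy_t$, $\omega_t=Qv_t$, this is exactly $\theta_t\to\rho\,\mathbf{1}_n$ with $\rho:=n^{-1/2}\lim_t y^{(1)}_t$ and $\omega_t\to0$. For the converse, if the condition fails for some $j$ then $P_j$ has a root with nonnegative real part that is not a simple zero, so a history $\phi$ exciting mode $j$ prevents $y^{(j)}_t$, and hence $(\theta_t,\omega_t)$, from converging.

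It remains to identify $\rho$ when $\mu_1=\kappa_1=0$. Then the $j=1$ equation loses its delayed terms and becomes the plain ODE $\ddot y^{(1)}_t+d\,\dot y^{(1)}_t=0$, which is driven only by the endpoint data $y^{(1)}_0=\mathbf{q}_1^T\phi^\theta(0)$ and $\dot y^{(1)}_0=\mathbf{q}_1^T\phi^\omega(0)$; integrating gives $\dot y^{(1)}_t=\dot y^{(1)}_0\, e^{-dt}\to0$ and $y^{(1)}_t\to y^{(1)}_0+d^{-1}\dot y^{(1)}_0$. Since the higher modes decay, $\rho=n^{-1/2}\big(y^{(1)}_0+d^{-1}\dot y^{(1)}_0\big)$, and substituting $\mathbf{q}_1=n^{-1/2}\mathbf{1}_n$ yields $\rho=\tfrac1n\mathbf{1}_n^T\phi^\theta(0)+\tfrac1{dn}\mathbf{1}_n^T\phi^\omega(0)$.

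The step I expect to be the main obstacle is the construction of the regions $\mathbb{W}_0,\dots,\mathbb{W}_3$ themselves, i.e.\ the D‑subdivision (stability‑crossing) analysis of the quasi‑polynomial $z^2+(d\tau)z+\lambda_j\tau^2+(\mu_j\tau^2+\kappa_j\tau z)e^{-z}$: one must locate the crossing set on which it has a root on the imaginary axis (an algebraic system in the crossing frequency and the four parameters), partition the four‑dimensional parameter space into the resulting cells, and track by a continuity/Rouch\'e argument how many right‑half‑plane roots each cell carries, with particular care near $z=0$ to separate the admissible simple zero root from a genuine instability and to accommodate the free signs of $\mu_j$ and $\kappa_j$ (while $d,\lambda_j\ge0$). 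Everything else --- the diagonalization, the explicit ODE computation of $\rho$, and the passage between modal convergence and convergence of $(\theta_t,\omega_t)$ --- is routine.
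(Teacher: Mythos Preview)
Your proposal is correct and follows essentially the same route as the paper: diagonalize via $Q$ into $n$ scalar second-order DDEs, analyze the characteristic quasi-polynomial $c_j(\eta)=\eta^2+d\eta+\lambda_j+(\mu_j+\kappa_j\eta)e^{-\eta\tau}$ after the rescaling \eqref{eq: transform}, separate $j=1$ from $j>1$, and read off $\rho$ from the undamped first mode when $\mu_1=\kappa_1=0$. The ``main obstacle'' you flag is exactly where the paper invokes Theorem~3.1 of \cite{kuang1993delay}, which classifies stability switches by whether $c_j(\eta)=0$ has zero, one, or two purely imaginary roots and thereby produces the sets $\mathbb{W}_1,\mathbb{W}_2,\mathbb{W}_3$ directly (with $\mathbb{W}_0$ handling the $s_2=k_1=0$ marginal mode), so your D-subdivision outline is the right idea and the cited theorem is the off-the-shelf tool that completes it.
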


\subsection{Measurement Statistics}
Theorem \ref{thm: stability} describes the parametric set that guarantees the asymptotic convergence of the unperturbed system to $\mathbb S$. The infused Uncertainty into the network generates stochastic fluctuations of the state variables around the equilibrium. It is convenient to study these solutions using  the associated transition matrix $\Phi(t)$\footnote{The solution of zero-input system \eqref{eq: model} are typically represented with the $2n\times 2n$ transition matrix $\Phi(t)$  that solves the matrix-valued system after initialization $\Phi(t)\equiv 0$ for $t<0$ and $\Phi(0)=I_{2n}$  \cite{gu2003stability,lunel93}.} that retains the stability properties of the system according to Theorem \ref{thm: stability}.  The overall process is formulated as the sum of a deterministic and a stochastic term \begin{equation}\label{eq: solution}
    \begin{bmatrix}
    \theta_t \\
    \omega_t
    \end{bmatrix}=T_t+\int_0^t \Phi(t-s)\begin{bmatrix}
    O_{n\times 3n} \\ 
    H
    \end{bmatrix}\,d\xi_s.
\end{equation} The term $T_t=\int_{-\tau}^0\Phi(t-s-\tau)\phi(s)\,d\mu(s)$ is presented as a Stieltjes integral for some appropriate measure $\mu$ with bounded variation; see \cite{Mohammed84}. Equation \eqref{eq: solution} illustrates the statistics of states of the error dynamics. For $t<\infty$, the system's state admits a normal random vector distribution as
$$
    \begin{bmatrix}
    \theta_t \\ \omega_t
    \end{bmatrix} \sim \mathcal N~\bigg( T_t, \int_0^t \Phi(s) \begin{bmatrix}
    O_n & O_n \\ O_n & H\,H^T
    \end{bmatrix} \Phi^T(s)\,ds \bigg ).
$$
The quantity of interest for risk assessment  is the phase incoherence between two generators, i.e., the elements of the vector 
$$
    y_t=B_n\, \theta_t.
$$
One way to obtain insight on the manner with which the power grid, the controllers, the time-delay, and the noise contribute to phase incoherency is to consider the long-term statistics of $y_t$, i.e., $\overline{y}:=\lim_{t \rightarrow +\infty}y_t$. 

\begin{thm}\label{thm: statistics} 
Suppose that Assumptions \ref{assum: delay} and  \ref{assum: commute} hold for the solution $(\theta_t,\omega_t)$ of system \eqref{eq: model} and let us consider function $f:\bigcup_{r=1}^3 \mathbb W_r \rightarrow \mathbb R_+ $ which is defined in Appendix \ref{app: spectralfunctions}. The output vector $y_t$ converges, in distribution, to 
	$$
    	\overline y \sim \mathcal N \bigg( 0, \frac{1}{2\pi  }B_n\, Q \; \textnormal{diag}\big\{\, \mathfrak{f}_l\, \big\}\,Q^T\,  B_n^T \bigg),
	$$ 	
	where $\mathfrak{f}_l=\mathfrak{f}_l\big((s;k)_l\big)$ for every  $l=1,\dots,n$ is defined by 
	$$\mathfrak{f}_l=\begin{cases}
    	0 & \textrm{if} ~~~~ l = 1 \\
    	\tau^3\left[\frac{\, \eta^2}{J^2}+ \eta'\,^2 \left(\frac{(k_1)_l^2}{\tau^4}+\frac{(k_2)_l^2}{\tau^2} \right)\right]f\big((s;k)_l\big) & \textrm{if} ~~~~ l>1
    	\end{cases},
	$$ and $f=f\big((s;k)_l\big)$ is spectral function defined in \eqref{eq: f} and evaluated at $(s;k)_l:=(s_1,s_2;k_1,k_2)_l$ in which $$(s_1,s_2;k_1,k_2)_l:=\big( d\,\tau,\lambda_l\,\tau^2~;~\mu_l\,\tau^2, \kappa_l \,\tau\big)~\in~\bigcup_{r=1}^3 \mathbb W_r.$$
	
\end{thm}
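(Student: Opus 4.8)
The plan is to diagonalize \eqref{eq: model} into $n$ scalar modes, discard the one mode annihilated by the incidence matrix, compute the steady-state variance of each surviving scalar mode as a spectral integral, and reassemble. \emph{Step 1 (modal decomposition).} By Assumption~\ref{assum: commute} there is an orthogonal $Q$ with first column $\mathbf 1_n/\sqrt n$ that simultaneously diagonalizes $L,M,K$, and under Assumption~\ref{assum: uniformity} it diagonalizes $D=dI_n$ and $J=JI_n$ trivially. The coordinate change $\tilde\theta_t=Q^T\theta_t,\ \tilde\omega_t=Q^T\omega_t$ block-diagonalizes $\mathbf A$ and $\mathbf K$, so \eqref{eq: model} splits into $n$ mutually independent scalar second-order delay SDEs. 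Mode $l$ is driven by a scalar white noise whose intensity is the $l$'th diagonal entry of $Q^T\mathbf H\mathbf H^T Q$; from \eqref{eq: H} one has $\mathbf H\mathbf H^T=\tfrac{\eta^2}{J^2}I_n+\eta'^2(M^2+K^2)$ on the $\omega$-block, so this intensity is $\sigma_l^2:=\tfrac{\eta^2}{J^2}+\eta'^2(\mu_l^2+\kappa_l^2)=\tfrac{\eta^2}{J^2}+\eta'^2\big(\tfrac{(k_1)_l^2}{\tau^4}+\tfrac{(k_2)_l^2}{\tau^2}\big)$, exactly the bracket in $\mathfrak f_l$. Diagonality of $Q^T\mathbf H\mathbf H^T Q$ makes the modes jointly independent.

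\emph{Step 2 (the consensus mode drops out).} Since $B_n$ is an incidence matrix, $B_n\mathbf 1_n=0$, so the first column of $B_nQ$ vanishes and $y_t=B_nQ\tilde\theta_t$ is insensitive to mode $l=1$; this is why $\mathfrak f_1$ may be set to $0$. For $l\ge2$ the hypothesis $(s_1,s_2;k_1,k_2)_l\in\bigcup_{r=1}^3\mathbb W_r$ places mode $l$ inside the stability window of Theorem~\ref{thm: stability}, so its scalar transition function $g_l(\cdot)$ --- the impulse response from a kick in $\tilde\omega^{(l)}$ to $\tilde\theta^{(l)}$ --- decays exponentially, in particular $g_l\in L^2(0,\infty)$, and the deterministic term $T_t$ in \eqref{eq: solution} tends to $0$.

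\emph{Step 3 (steady-state variance and assembly).} Reading \eqref{eq: solution} mode by mode, $\tilde\theta^{(l)}_t$ is a decaying deterministic term plus $\sigma_l\int_0^t g_l(t-s)\,dW^{(l)}_s$, hence Gaussian with mean $\to0$ and variance $\to\sigma_l^2\int_0^\infty g_l(s)^2\,ds$. I would evaluate this by Plancherel using the retarded transfer function $\hat g_l(s)=\big(s^2+ds+\lambda_l+(\mu_l+\kappa_l s)e^{-s\tau}\big)^{-1}$, then apply the nondimensionalizing substitution $\varpi\mapsto\varpi/\tau$, obtaining $\int_0^\infty g_l(s)^2\,ds=\tfrac{1}{2\pi}\int_{-\infty}^\infty\big|\hat g_l(i\varpi)\big|^2\,d\varpi=\tfrac{\tau^3}{2\pi}f\big((s;k)_l\big)$, the last equality being the definition of the spectral function $f$ in \eqref{eq: f} of Appendix~\ref{app: spectralfunctions}, which depends on the mode only through the dimensionless group $(d\tau,\lambda_l\tau^2;\mu_l\tau^2,\kappa_l\tau)$. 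Thus mode $l$ has limiting variance $\tfrac{1}{2\pi}\tau^3\sigma_l^2 f((s;k)_l)=\tfrac{1}{2\pi}\mathfrak f_l$. By independence of the modes, $\tilde\theta_t\to\mathcal N(0,\tfrac{1}{2\pi}\diag\{\mathfrak f_l\})$ in $L^2$ (finite-variance limit, $T_t\to0$), and transforming back $\overline y=\lim_{t\to\infty}B_nQ\tilde\theta_t\sim\mathcal N\big(0,\tfrac{1}{2\pi}B_nQ\diag\{\mathfrak f_l\}Q^TB_n^T\big)$.

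\emph{Main obstacle.} The delicate part is Step~3: establishing $g_l\in L^2$ on the \emph{entire} region $\bigcup_{r}\mathbb W_r$ (not merely asymptotic stability), legitimately invoking Parseval for a transfer function with the essential singularity $e^{-s\tau}$ --- its characteristic roots lie strictly off the imaginary axis throughout the stability window and the $\varpi^{-2}$ tail secures integrability --- and carrying out the change of variables carefully enough to recognize the closed form \eqref{eq: f}. The simultaneous diagonalization, the annihilation of the consensus mode by $B_n$, and the in-distribution convergence of $y_t$ are otherwise routine consequences of Assumption~\ref{assum: commute}, the structure of $B_n$, and Theorem~\ref{thm: stability}.
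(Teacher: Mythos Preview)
Your proposal is correct and follows essentially the same route as the paper: diagonalize via $Q$, use $B_n\mathbf 1_n=0$ to drop the consensus mode, compute the steady-state variance of each surviving mode by Parseval applied to the characteristic function $c_l(i\gamma)$, and pull out the $\tau^3$ factor by the nondimensionalizing substitution \eqref{eq: transform} to recognize $f$ in \eqref{eq: f}. The paper organizes the calculation around the full transition matrix $\Phi(t)$ and its block $\tilde\Phi_{\vartheta\varpi}$ rather than scalar impulse responses $g_l$, but this is cosmetic; the only small slip is your phrase ``$\tilde\theta_t\to\mathcal N(\cdot)$ in $L^2$'' --- mode $l=1$ need not have a finite limiting variance, so the convergence statement should be made for $B_nQ\tilde\theta_t$ (or for modes $l\ge2$ only), exactly as you already argued in Step~2.
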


The spectral function $f(s;k)$ is implicitly defined using an improper integral and its basic properties are discussed in Appendix \ref{app: spectralfunctions}. Using the statistics of $\overline{y}$, we will calculate the risk of phase incoherence in the error dynamics \eqref{eq: model}. 

\section{Risk of Phase Incoherence} \label{sect: main}
The stability of \eqref{eq: classicalmodel} is challenged when the error dynamics exceed certain margins, which may be the case for certain types and magnitudes of $\mathbf H$ in the first-order approximation. Higher-order terms may then prevail and steer the system away from the nominal equilibrium. This section aims to quantify ``how much uncertainty'' can the linearized dynamics sustain before they are steered into unsafe regions, which are named the \textit{systemic sets} henceforth. 

\begin{figure}
    \centering
    \includegraphics[scale=0.55]{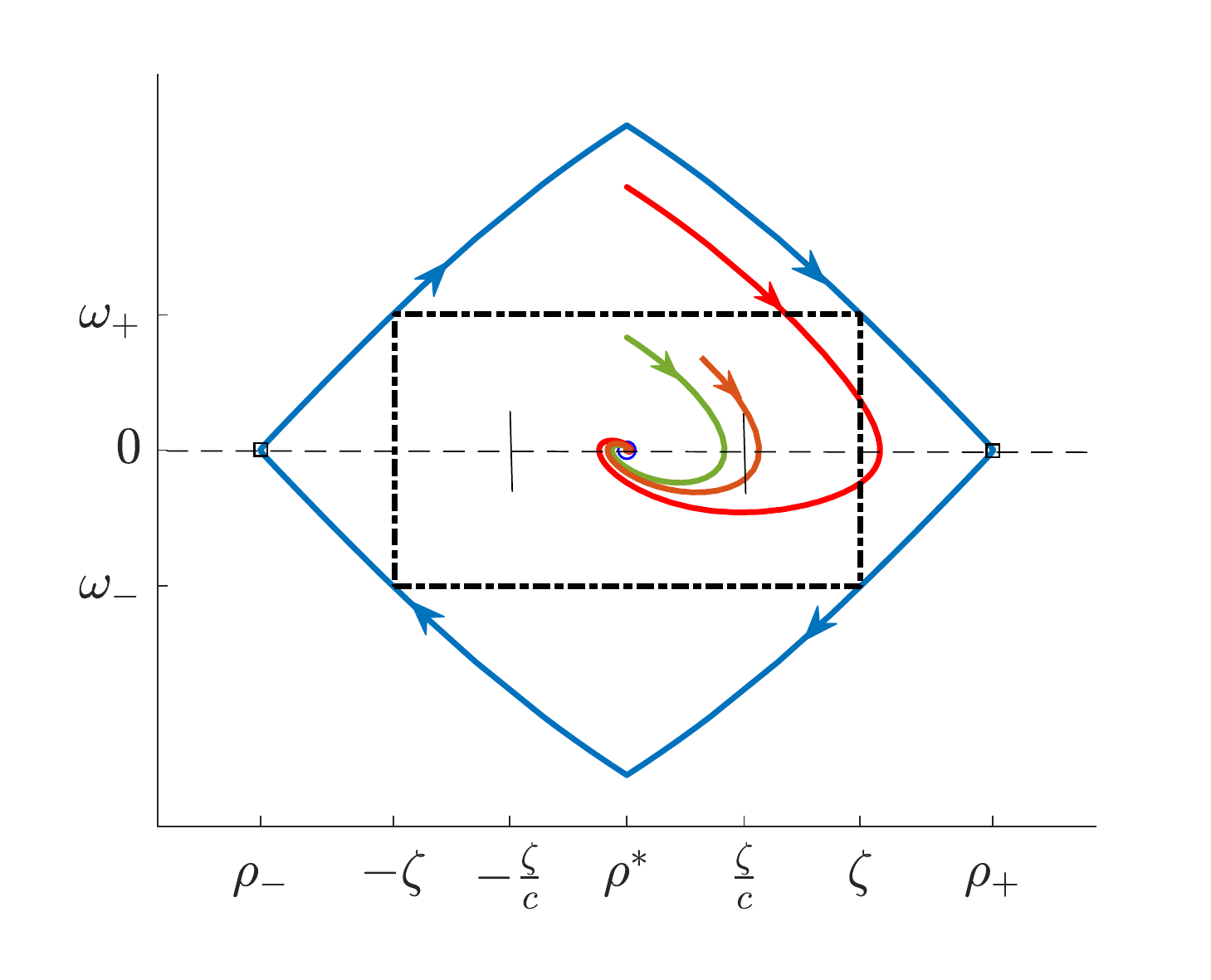}
    \caption{The schematics of systemic sets for phase coherence. The swing system \eqref{eq: classicalmodel} admits countable equilibria both stable and unstable. Parameters $\zeta$ and $c$ can be selected with respect to separatrix curves and fixed bounds on admissible frequencies $(\omega_-, \omega_+)$.}
    \label{fig: schematics}
\end{figure}

\subsection{The Systemic Set} 
The $(i,j)$'th element of the column vector $y_t$ measures the relative coherence of two synchronous generators $i$ and $j$, i.e., $\theta_t^{(j)}-\theta_t^{(i)}$. We consider systemic sets on the range of $|\theta_t^{(j)}-\theta_t^{(i)}|$ as follows. Evidently, the closer the elements of $y_t$ are to zero, the safer the performance of the grid. Let us choose $\zeta>0$ as the lower limit for which $|y_t^{(i,j)}| > \zeta$ should not happen and any value $|y_t^{(i,j)}| \in (\zeta,\infty)$ is flagged as unsafe. An additional checkpoint between $0$ and $\zeta$ is defined as the permissible, presumably harmless, magnitude of fluctuations, i.e., $\zeta/c \in (0,\zeta)$ with $c>1$, which introduces a trichotomy of characteristic ranges for $|y_t^{(i,j)}|$. The underlying assumptions behind this construction are 
\begin{itemize}
    \item When $|y_t^{(i,j)}| \in [0,\zeta/c)$, there is no risk of encountering phase incoherence and the risk value is assigned as zero.
    \item When $|y_t^{(i,j)}| \in [\zeta/c, ~ \zeta]$, it will trigger a positive value of risk, as an index of safety margin between phase incoherence and undesirable behavior.
    \item For consistency reasons, as phase incoherence approaches $\zeta$, the risk increases and becomes $+ \infty$ for any value $|y_t^{(i,j)}| \in (\zeta,\infty)$. 
\end{itemize}

\vspace{2mm}
\begin{exmp}
The motivation behind the $(\zeta,c)$ trichotomy stems from standard transient stability analysis \cite{1457634}. Let us consider  a $2$-machine system with coupled dynamics \eqref{eq: classicalmodel} with nominal (constant) internal voltages $p_1,p_2$ that satisfies the uniformity Assumption \ref{assum: uniformity}. If $J|p_1-p_2|<2E_1 E_2 Y_{12}$, the phase difference dynamics attains a family of equilibria $\big\{0,\theta_1^*-\theta_2^*\big\}$, for which  $\sin(\theta_1^*-\theta_2^*)=\frac{J(p_1-p_2)}{2E_1 E_2 Y_{12}}$, and are of interlacing stability. In Figure \ref{fig: schematics}, we depict three such fixed points $(0,\rho_-)$ (unstable), $(0,\rho_*)$ (stable), $ (0,\rho_+)$ (unstable). Separatrices that connect the unstable equilibria essentially determine the stability region of $(0,\rho_*)$. As explained in \cite{1457634}, the transient stability assesses the ability of post-fault dynamics to remain in the region of attraction of nominal stable equilibrium. Meaningful choices of parameters $\zeta$ and $c$ can relate margins of error dynamics governed by \eqref{eq: linearizedwithfeedback} from areas of the initial phase space that lie close to separatrix curves and stability boundaries. In Figure \ref{fig: schematics}, we additionally assume that the frequency difference ought to remain in $(\omega_-,\omega_+)$, which assists in focusing on phase incoherence only. We can then appropriately select $\zeta$ and $c$ as functions of $\omega_{\pm}$ and separatrix curves. Analysis of frequency incoherence is beyond the scopes of this work. However, in the discussion section, we highlight the necessary steps towards this direction.
\end{exmp}
\vspace{2mm}

By introducing parameter $\delta>0$, the formulation of the aforementioned desired properties for areas of phase incoherence yields the family of nested sets $\{U_\delta \}_{\delta>0}$  where 
$$
    U_\delta =\bigg( \zeta \frac{1+\delta}{c+\delta},+\infty \bigg). 
$$ 
It is immediate to identify the two extremes $U_0:=(\zeta/c,\infty)$ and $U_\infty:=(\zeta,\infty)$ as the safe (zero risk) set and the undesirable set, respectively. 

\subsection{Risk Evaluation}
The distribution of the observables enables us to investigate robustness properties in the noisy power grid with respect to systemic sets. The notion of  value-at-risk measure of systemic events is a suitable index known to provide an answer on how close a random variable may get to an undesirable set in probability \cite{Rockafellar_Royset_2015}, which is defined as  
\begin{equation}\label{eq: riskdef}
\mathcal R_{\varepsilon}(y)=\inf\big\{\delta>0 ~\big|~ \mathbb P \big(|y|~\in~U_\delta\big)<\varepsilon \big\}.
\end{equation}
The design parameter $\varepsilon \in (0,1)$ determines the rigidity of $\mathcal R_\varepsilon$. For the values of $\varepsilon$  close to 1, the index \eqref{eq: riskdef} becomes less reliable; whereas for $\varepsilon$ close to $0$, the index may become overly rigid\footnote{For more details on this family of risk measures, we refer to \cite{follmer11,Rockafellar_Royset_2015} or \cite{somarakisnader_tac_1} for more details and their applications in networked control systems.}. The next result provides a closed form expression on risk in phase incoherence. For its exposition, we recall the root value $\nu_\varepsilon$  of $\int_{\nu_\varepsilon}^{-\nu_\varepsilon} e^{-t^2/2}\,dt=\sqrt{2\pi}(1-\varepsilon)$.

\begin{thm}     \label{thm: risk}
	The risk of steady-state phase incoherence $|\overline{y}^{(i,j)}|=\lim_{ t\rightarrow +\infty} |\theta_t^{(i)}-\theta_t^{(j)}|$ between generators $i$ and $j$ with error dynamics \eqref{eq: model} is:
	\begin{equation}\label{eq: risk}
	\mathcal R_{\varepsilon}^{(i,j)}=
		\begin{cases}
		0 &~~\text{if}~~\sigma_{ij} \leq \frac{\zeta}{c \nu_\varepsilon}\\
		\frac{\sigma_{ij}\,\nu_{\varepsilon}\,c-\zeta}{\zeta-\sigma_{ij}\,\nu_{\varepsilon}} & ~~\text{if}~~ \frac{\zeta}{c \nu_\varepsilon} < \sigma_{ij} < \frac{\zeta}{\nu_\varepsilon} \\
		+\infty & ~~\text{if}~~ \sigma_{ij} \geq \frac{\zeta}{\nu_\varepsilon}
		\end{cases}
	\end{equation} where $\sigma_{ij} =\sqrt{\frac{1}{2\pi}\sum_{l=2}^n (q_{il}-q_{jl})^2\, \mathfrak{f}_l},$ and $\big\{\mathfrak{f}_l\big\}_{l=2}^n $ as in Theorem \ref{thm: statistics}. 
\end{thm}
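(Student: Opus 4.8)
The plan is to reduce the whole statement to a one–dimensional Gaussian tail computation followed by an elementary case analysis. By Theorem~\ref{thm: statistics}, $\overline y = B_n\overline\theta$ is a centered Gaussian vector with covariance $\frac{1}{2\pi}B_n Q\,\textnormal{diag}\{\mathfrak f_l\}\,Q^T B_n^T$. The component $\overline y^{(i,j)}=\overline\theta^{(i)}-\overline\theta^{(j)}$ corresponds to the row $e_i^T-e_j^T$ of $B_n$, so it is a scalar centered normal random variable with variance $\frac{1}{2\pi}\sum_{l=1}^n(q_{il}-q_{jl})^2\mathfrak f_l$; since $\mathfrak f_1=0$ this equals $\sigma_{ij}^2$. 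Hence $|\overline y^{(i,j)}|$ is half-normal with scale $\sigma_{ij}$ (the degenerate case $\sigma_{ij}=0$ falls trivially into the first branch, so we may take $\sigma_{ij}>0$).

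Next I would rewrite the event in the risk functional \eqref{eq: riskdef}. With $a_\delta:=\zeta\frac{1+\delta}{c+\delta}$ we have $U_\delta=(a_\delta,+\infty)$, and therefore
$$\mathbb P\big(|\overline y^{(i,j)}|\in U_\delta\big)=\mathbb P\big(|\overline y^{(i,j)}|>a_\delta\big)=\frac{2}{\sqrt{2\pi}}\int_{a_\delta/\sigma_{ij}}^{\infty}e^{-t^2/2}\,dt.$$
By the defining relation $\int_{-\nu_\varepsilon}^{\nu_\varepsilon}e^{-t^2/2}\,dt=\sqrt{2\pi}(1-\varepsilon)$, the right-hand side equals $\varepsilon$ exactly when $a_\delta/\sigma_{ij}=\nu_\varepsilon$, and since the Gaussian tail is continuous and strictly decreasing, $\mathbb P(|\overline y^{(i,j)}|\in U_\delta)<\varepsilon$ holds if and only if $a_\delta>\nu_\varepsilon\,\sigma_{ij}$.

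It then remains to solve $a_\delta>\nu_\varepsilon\sigma_{ij}$ for $\delta>0$ and take the infimum of the solution set. Since $c>1$, the map $\delta\mapsto a_\delta$ is strictly increasing on $(0,\infty)$ and sweeps the interval $(\zeta/c,\zeta)$; clearing denominators, the inequality is equivalent to $\delta(\zeta-\nu_\varepsilon\sigma_{ij})>\nu_\varepsilon\sigma_{ij}c-\zeta$. Three cases arise. If $\nu_\varepsilon\sigma_{ij}\le\zeta/c$, i.e. $\sigma_{ij}\le\frac{\zeta}{c\nu_\varepsilon}$, the right-hand side is $\le0$ while the coefficient of $\delta$ is positive, so every $\delta>0$ works and $\mathcal R_\varepsilon^{(i,j)}=0$. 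If $\zeta/c<\nu_\varepsilon\sigma_{ij}<\zeta$, i.e. $\frac{\zeta}{c\nu_\varepsilon}<\sigma_{ij}<\frac{\zeta}{\nu_\varepsilon}$, both $\zeta-\nu_\varepsilon\sigma_{ij}$ and $\nu_\varepsilon\sigma_{ij}c-\zeta$ are positive, so the solution set is $\big(\frac{\nu_\varepsilon\sigma_{ij}c-\zeta}{\zeta-\nu_\varepsilon\sigma_{ij}},\infty\big)$ with infimum $\frac{\sigma_{ij}\nu_\varepsilon c-\zeta}{\zeta-\sigma_{ij}\nu_\varepsilon}$. Finally, if $\nu_\varepsilon\sigma_{ij}\ge\zeta$, i.e. $\sigma_{ij}\ge\frac{\zeta}{\nu_\varepsilon}$, then $\zeta-\nu_\varepsilon\sigma_{ij}\le0$ while $\nu_\varepsilon\sigma_{ij}c-\zeta>0$ (strictly, using $c>1$ at the boundary), so no $\delta>0$ satisfies the inequality and $\mathcal R_\varepsilon^{(i,j)}=\inf\emptyset=+\infty$. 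This reproduces \eqref{eq: risk}.

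The argument is essentially routine; the points demanding care are the correct treatment of the strict inequality in the definition of $\mathcal R_\varepsilon$ (so that the infimum in the middle branch is attained at the stated value), the sign change of the coefficient $\zeta-\nu_\varepsilon\sigma_{ij}$ across the cases, and the two boundary values $\sigma_{ij}\in\{\frac{\zeta}{c\nu_\varepsilon},\frac{\zeta}{\nu_\varepsilon}\}$ — the former landing in the zero-risk branch and the latter in the infinite-risk branch, both consequences of $c>1$.
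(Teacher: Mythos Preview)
Your proof is correct and follows essentially the same route as the paper: identify $\overline y^{(i,j)}$ as a centered scalar Gaussian with variance $\sigma_{ij}^2$ via Theorem~\ref{thm: statistics}, reduce the event $\{|\overline y^{(i,j)}|\in U_\delta\}$ to a one-sided Gaussian tail, use the defining property of $\nu_\varepsilon$ to convert the probability constraint into the algebraic inequality $a_\delta>\nu_\varepsilon\sigma_{ij}$, and then perform the three-way case split on the location of $\nu_\varepsilon\sigma_{ij}$ relative to $\zeta/c$ and $\zeta$. Your treatment of the boundary cases and the monotonicity of $\delta\mapsto a_\delta$ is, if anything, slightly more explicit than the paper's own argument.
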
 
The above theorem calculates the value-at-risk of phase incoherence for pairs of power generators. One can also construct the risk profile of the entire network by stacking  the risk values for all pairs  into one vector as 
\begin{equation*}
	\mathfrak R_{\varepsilon}=\big( \,\mathcal R_{\varepsilon}^{(1,2)},\mathcal R_{\varepsilon}^{(1,3)},\dots,\mathcal R_{\varepsilon}^{(n-1,n)} \, \big)^T.
\end{equation*}

\subsection{Special Cases}
We conclude this section by considering the risk in two simplified scenarios for the communication network.

\vspace{2mm}
\subsubsection{Synchronous State-Feedback ($\tau=0$)}
When there is no time-delay in the feedback control, contrary to \eqref{eq: risk}, the risk formula for $\tau=0$ admits a quite simple form that is worth reporting. 

\begin{cor}\label{cor: risk2}
	The risk of steady-state phase incoherence $|\overline{y}^{(i,j)}|=\lim_t |\theta_t^{(i)}-\theta_t^{(j)}|$ between generators $i$ and $j$ with synchronous ($\tau=0$) closed-loop control error dynamics \eqref{eq: model} is given by \eqref{eq: risk}
	with 
	$$\sigma_{ij} =\sqrt{\sum_{l=2}^n (q_{il}-q_{jl})^2\frac{\frac{\eta^2}{J^2}+\eta'^2(\kappa_l^2+\mu_l^2)}{2(d+\kappa_l)(\lambda_l+\mu_l)}}.$$
\end{cor}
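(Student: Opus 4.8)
The plan is to obtain Corollary~\ref{cor: risk2} as the $\tau \to 0^+$ specialization of Theorem~\ref{thm: risk}, so the only real work is showing that the variance coefficients $\mathfrak{f}_l$ of Theorem~\ref{thm: statistics} degenerate, in the delay-free limit, to the quantities appearing inside the square root of $\sigma_{ij}$. The risk formula \eqref{eq: risk} itself is a purely distributional statement about a scalar Gaussian $\overline{y}^{(i,j)} \sim \mathcal{N}(0,\sigma_{ij}^2)$ intersected with the nested sets $U_\delta$, and its derivation does not reference $\tau$ at all; hence once the correct $\sigma_{ij}$ is identified for $\tau=0$, the three-branch expression carries over verbatim. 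So the proof reduces to a limiting computation on the spectral data.

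First I would set $\tau = 0$ directly in the closed-loop model \eqref{eq: model}: the delayed block $\mathbf{K}\big[\theta_{t-\tau},\omega_{t-\tau}\big]^T$ collapses onto the instantaneous state, so the system becomes the finite-dimensional linear SDE with drift matrix
\[
\widetilde{\mathbf A} \;=\; \begin{bmatrix} 0 & I \\ -(L + J^{-1}M) & -(D + J^{-1}K) \end{bmatrix}
\]
and diffusion $\mathbf H$ as in \eqref{eq: H}. Under Assumptions~\ref{assum: uniformity} and~\ref{assum: commute}, $L,M,K$ are simultaneously diagonalized by $Q$ and $D = dI$, so conjugating by $\mathrm{diag}\{Q,Q\}$ decouples the dynamics into $n$ scalar second-order modes indexed by $l$, the $l$-th having effective stiffness $\lambda_l + \mu_l$ (since $J^{-1}M$ has eigenvalue $\mu_l/J$, and with the uniform $J$ the normalization in Table~\ref{table: stability} puts things in the stated form) and effective damping $d + \kappa_l$. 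The mode $l=1$ is the consensus direction $\bm q_1 = n^{-1/2}\bm 1_n$; there $\lambda_1 = 0$ and, if $\mu_1 = \kappa_1 = 0$, it is marginally stable but is annihilated by $B_n$, contributing $\mathfrak{f}_1 = 0$ exactly as in Theorem~\ref{thm: statistics}.

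Next I would solve the stationary Lyapunov equation mode by mode. For a scalar damped oscillator $\ddot x = -a x - b \dot x + \sigma\,\dot\xi$ with $a,b>0$, the stationary variance of the position is the classical value $\sigma^2/(2ab)$. In the $l$-th mode the noise intensity entering the frequency equation is, from $\mathbf H$, the $l$-th diagonal entry of $Q^T(\eta^2 J^{-2} + \eta'^2 M^2 + \eta'^2 K^2)Q$, namely $\eta^2/J^2 + \eta'^2(\mu_l^2 + \kappa_l^2)$; with $a = \lambda_l + \mu_l$ and $b = d + \kappa_l$ this gives the modal position variance
\[
\frac{\eta^2/J^2 + \eta'^2(\kappa_l^2 + \mu_l^2)}{2(d+\kappa_l)(\lambda_l+\mu_l)},
\]
which is precisely the $l$-th summand factor in the corollary's $\sigma_{ij}$ (note the absence of the $\tfrac{1}{2\pi}$ prefactor — in the delay-free case the spectral integral $f$ evaluates to a rational expression that already absorbs that constant, so I would either verify $\tfrac{1}{2\pi}\mathfrak{f}_l \to$ the above as $\tau\to 0$ using the explicit form of $f$ in Appendix~\ref{app: spectralfunctions}, or re-derive the variance directly from the Lyapunov equation as just sketched). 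Then $\overline{y}^{(i,j)} = \sum_{l} (q_{il}-q_{jl})\,\overline{z}_l$ with the modal components $\overline z_l$ independent and mean-zero Gaussian, so $\mathrm{Var}(\overline y^{(i,j)}) = \sum_{l=2}^n (q_{il}-q_{jl})^2 \cdot (\text{modal variance})_l = \sigma_{ij}^2$, and applying Theorem~\ref{thm: risk} (equivalently, re-running its $\mathcal R_\varepsilon$ computation with this $\sigma_{ij}$) yields \eqref{eq: risk}.

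The main obstacle is the reconciliation of normalizing constants between the delayed and delay-free formulations: Theorem~\ref{thm: statistics} packages the steady-state variance through the improper spectral integral $f(s;k)$ and an explicit $\tau^3$ scaling together with a $\tfrac{1}{2\pi}$ factor, and one must check that $\tfrac{\tau^3}{2\pi}\big[\eta^2 J^{-2} + \eta'^2(\kappa_l^2 \tau^{-4}(k_1)_l^2 + \cdots)\big] f$ does not blow up or vanish as $\tau \to 0$ but limits to the clean rational expression above. I would handle this by evaluating $f(s_1, s_2; k_1, k_2)$ at $s_1 = d\tau,\ s_2 = \lambda_l\tau^2,\ k_1 = \mu_l\tau^2,\ k_2 = \kappa_l\tau$ and taking $\tau\to 0$; the Laplace/Fourier representation of $f$ should show $f \sim C/\tau^{?}$ with exponents exactly cancelling the $\tau^3$ and the $\eta'^2$ denominators, leaving $\tfrac{1}{2(d+\kappa_l)(\lambda_l+\mu_l)}$. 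Everything else — the Gaussianity of $\overline y$, the orthogonality of $Q$, the collapse of the consensus mode under $B_n$, and the three-case structure of $\mathcal R_\varepsilon$ — is inherited unchanged from Theorems~\ref{thm: statistics} and~\ref{thm: risk}.
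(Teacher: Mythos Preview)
Your proposal is correct. The paper does not give a separate proof of this corollary at all --- it simply states it as the $\tau=0$ specialization of Theorem~\ref{thm: risk}, so there is nothing to compare against beyond the implicit understanding that one re-evaluates the modal variance without delay. Your direct route via the scalar Lyapunov equation (stationary position variance $\sigma^2/(2ab)$ for $\ddot x = -ax - b\dot x + \sigma\dot\xi$) is exactly the computation the paper leaves to the reader, and it is cleaner than trying to pass to the limit $\tau\to 0^+$ through the spectral function $f$: the transformation \eqref{eq: transform} in Appendix~\ref{appendix1} suppresses $\tau$ into the other parameters and is singular at $\tau=0$, so Theorem~\ref{thm: statistics} is, strictly speaking, formulated only for $\tau>0$. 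If you want that limiting check anyway, the quickest way is to go back one step in the proof of Theorem~\ref{thm: statistics}: at $\tau=0$ the characteristic polynomial is $c_l(\eta)=\eta^2+(d+\kappa_l)\eta+(\lambda_l+\mu_l)$, and Parseval gives
\[
\int_0^\infty \big[\tilde\Phi_{\vartheta\varpi}^{(ll)}(t)\big]^2\,dt \;=\; \frac{1}{2\pi}\int_{\mathbb R}\frac{d\gamma}{|c_l(i\gamma)|^2}
\;=\; \frac{1}{2\pi}\int_{\mathbb R}\frac{d\gamma}{(\lambda_l+\mu_l-\gamma^2)^2+(d+\kappa_l)^2\gamma^2}
\;=\; \frac{1}{2(d+\kappa_l)(\lambda_l+\mu_l)},
\]
which is the standard rational evaluation and produces the summand in $\sigma_{ij}^2$ directly, with the $\tfrac{1}{2\pi}$ absorbed --- confirming your observation about the missing prefactor. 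Everything else in your write-up (annihilation of the $l=1$ mode by $B_n$, independence of the modal Gaussians, and the unchanged three-branch structure of $\mathcal R_\varepsilon$) is correct and matches the paper's setup.
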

The above result illustrates how the risk of systemic events is associated with the power network properties, the two sources of disturbances, and the Laplacian eigenvalues of feedback gain matrices for the simultaneously diagonalizable type of control that concerns the present work.

\vspace{2mm}

\subsubsection{Noise-Free Measurements $(\eta'=0)$}
When we can neglect sensor imperfections, the remaining source of uncertainty is the exogenous input in the power network that models the load volatility, which is parameterized by $\eta$. In this case, one also obtains a simple risk formula that follows directly from Theorems \ref{thm: statistics} and \ref{thm: risk}. 

\begin{cor}\label{cor: risk3}
The risk of steady-state phase incoherence $|\overline{y}^{(i,j)}|=\lim_t |\theta_t^{(i)}-\theta_t^{(j)}|$ between generators $i$ and $j$ with perfect measurement data to control the error dynamics \eqref{eq: model} is given by \eqref{eq: risk}
with 
$$
    \sigma_{ij} =\frac{\tau^{3/2}\eta}{J\sqrt{2\pi}}\sqrt{\sum_{l=2}^n (q_{il}-q_{jl})^2\, f\big((s;k)_l\big)},
$$  
in which the spectral function $f$ is evaluated as in Theorem \ref{thm: statistics}. 
\end{cor}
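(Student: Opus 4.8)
The plan is to obtain Corollary~\ref{cor: risk3} as a direct specialization of Theorems~\ref{thm: statistics} and~\ref{thm: risk} to the noise-free-measurement regime $\eta'=0$, so nearly all of the work is already in place. First I would observe that $\eta'$ enters the closed-loop model \eqref{eq: model} only through the diffusion matrix $\mathbf{H}$ of \eqref{eq: H}: the drift data $\mathbf{A}$ and $\mathbf{K}$ --- and hence the transition matrix $\Phi(t)$, the stability window of Theorem~\ref{thm: stability}, and the spectral arguments $(s;k)_l=(d\tau,\lambda_l\tau^2;\mu_l\tau^2,\kappa_l\tau)$ --- are unaffected. Therefore the hypotheses of Theorem~\ref{thm: statistics} (Assumptions~\ref{assum: uniformity}, \ref{assum: delay}, \ref{assum: commute}) still hold verbatim, and the steady-state output $\overline y=\lim_{t\rightarrow+\infty}B_n\theta_t$ still converges in distribution to the centered Gaussian described there, with the \emph{same} spectral function $f$ (Appendix~\ref{app: spectralfunctions}) evaluated at the \emph{same} $(s;k)_l$.

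Next I would propagate $\eta'=0$ through the coefficients $\mathfrak{f}_l$. For $l>1$, Theorem~\ref{thm: statistics} gives $\mathfrak{f}_l=\tau^3\left[\frac{\eta^2}{J^2}+\eta'\,^2\left(\frac{(k_1)_l^2}{\tau^4}+\frac{(k_2)_l^2}{\tau^2}\right)\right]f\big((s;k)_l\big)$, which collapses to $\mathfrak{f}_l=\dfrac{\tau^3\eta^2}{J^2}\,f\big((s;k)_l\big)$, while $\mathfrak{f}_1=0$ as before; the $l=1$ term of every sum below also drops because $q_{i1}-q_{j1}=0$ for the uniform Laplacian eigenvector $\bm{q}_1=\tfrac{1}{\sqrt{n}}\bm{1}_n$. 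Substituting into the variance $\sigma_{ij}^2=\tfrac{1}{2\pi}\sum_{l=2}^n(q_{il}-q_{jl})^2\,\mathfrak{f}_l$ supplied by Theorem~\ref{thm: risk}, the $l$-independent factor $\tau^3\eta^2/J^2$ comes out of the sum and a square root yields
$$
\sigma_{ij}=\frac{\tau^{3/2}\eta}{J\sqrt{2\pi}}\sqrt{\sum_{l=2}^n(q_{il}-q_{jl})^2\,f\big((s;k)_l\big)}.
$$
Feeding this $\sigma_{ij}$ into the piecewise formula \eqref{eq: risk} --- whose thresholds $\zeta/(c\nu_\varepsilon)$, $\zeta/\nu_\varepsilon$ and middle rational branch depend only on $\sigma_{ij},\zeta,c,\varepsilon$, none of which sees $\eta'$ --- reproduces exactly the stated risk.

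There is essentially no hard step here: the argument is a bookkeeping specialization of two already-proved results. The only point I would spell out for rigor is that ``perfect measurements'' is a continuous rather than a degenerate limit of the model, which holds precisely because the $\eta'$-dependence lives entirely in $\mathbf{H}$ and thus only in the (still convergent) covariance integral appearing in \eqref{eq: solution}, leaving the stability-carrying transition matrix $\Phi(t)$ --- and therefore the very definition and domain of the spectral function $f$ --- intact.
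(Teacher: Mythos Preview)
Your proposal is correct and mirrors the paper's own treatment: the text states that the result ``follows directly from Theorems~\ref{thm: statistics} and~\ref{thm: risk},'' and your argument is precisely that specialization, setting $\eta'=0$ in $\mathfrak{f}_l$ and pulling the resulting $l$-independent factor $\tau^3\eta^2/J^2$ out of the sum for $\sigma_{ij}$.
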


\section{Fundamental Limits and Trade-Offs}\label{sect: limits} 
The value-at-risk analysis reveals the vulnerability of WAC policies in the presence of time-delay and corrupted measurements. This section explores these limitations separately, establishes universal design bounds in mitigating the risk of power network phase incoherence, and draws remarks on their combinatorial effect. 


\vspace{2mm}

\subsubsection{Delay-Induced Limitations} 
The emergence of delay-induced fundamental lower limits of systemic risk has been previously reported in the context of multi-agent systems in \cite{somarakisnader_tac_1,8884747}. We show that a non-trivial lower bound on the steady-state variance in presence of time-delay and exogenous noise imposes a lower limit on the networked control law to mitigate the risk of systemic events. For the exposition of this result, we assume noiseless measurement, i.e., $\eta'=0$. Our analysis of the spectral function $f$ that is conducted in Appendix \ref{app: spectralfunctions}  helps us  identify the existence of a lower bound for $f(s_l;k)$ for fixed $s=s_l=(d\tau,\lambda_l\tau^2)$, where $\lambda_l$ is the $l$'th eigenvalue of $L$. Then, let us define $$\underline{f_l}:=  \min_{(s_l;k) \in \bigcup_{r=1}^3 \mathbb W_r}\,f\big((s_l;k)\big).$$ 




The following result highlights how pairwise standard deviation is always lower bounded by the load uncertainty parameter and the time-delay.    
\begin{prop}\label{prop: limitsigma} 
	For any pair of generators, the variance of phase differences is lower bounded by 
	$$\sigma_{ij}\geq \sigma_*:=\frac{\tau^{3/2}\eta}{J\sqrt{2\pi}}\min_{(i\neq j)}\sqrt{\, \sum_{l=2}^n (q_{il}-q_{jl})^2\, \underline{f_l}}.$$  
\end{prop}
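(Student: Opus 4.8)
The plan is to derive the lower bound on $\sigma_{ij}$ by exploiting the structure given by Theorem~\ref{thm: statistics} together with the spectral-function lower bound $\underline{f_l}$ established in Appendix~\ref{app: spectralfunctions}. Setting $\eta'=0$ as assumed, the formula for $\mathfrak{f}_l$ in Theorem~\ref{thm: statistics} collapses to $\mathfrak{f}_l = \tau^3\frac{\eta^2}{J^2} f\big((s;k)_l\big)$ for $l>1$, with $\mathfrak{f}_1=0$. Substituting this into the expression $\sigma_{ij}=\sqrt{\frac{1}{2\pi}\sum_{l=2}^n (q_{il}-q_{jl})^2\,\mathfrak{f}_l}$ from Theorem~\ref{thm: risk} immediately yields
\[
    \sigma_{ij} = \frac{\tau^{3/2}\eta}{J\sqrt{2\pi}}\sqrt{\sum_{l=2}^n (q_{il}-q_{jl})^2\, f\big((s;k)_l\big)},
\]
which is exactly the form appearing in Corollary~\ref{cor: risk3}.

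The next step is to bound the summand from below term-by-term. For each $l=2,\dots,n$ the relevant spectral parameter is $(s_l;k) = \big(d\tau,\lambda_l\tau^2;\mu_l\tau^2,\kappa_l\tau\big)$, which by Theorem~\ref{thm: stability} (in conjunction with the stability hypothesis) lies in $\bigcup_{r=1}^3 \mathbb W_r$. Hence by the definition of $\underline{f_l}$ as the minimum of $f\big((s_l;k)\big)$ over all admissible $k$ with $s$ fixed at $s_l=(d\tau,\lambda_l\tau^2)$, we have $f\big((s;k)_l\big)\ge \underline{f_l}$. Since $(q_{il}-q_{jl})^2\ge 0$ and these weights are independent of the feedback parameters, the inequality is preserved under the weighted sum, giving
\[
    \sum_{l=2}^n (q_{il}-q_{jl})^2\, f\big((s;k)_l\big)\;\ge\;\sum_{l=2}^n (q_{il}-q_{jl})^2\, \underline{f_l},
\]
and taking square roots (both sides nonnegative) and multiplying by the positive prefactor $\tau^{3/2}\eta/(J\sqrt{2\pi})$ yields $\sigma_{ij}\ge \frac{\tau^{3/2}\eta}{J\sqrt{2\pi}}\sqrt{\sum_{l=2}^n (q_{il}-q_{jl})^2\,\underline{f_l}}$. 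Finally, since this holds for every pair $i\neq j$, it holds in particular for the pair minimizing the right-hand side, which is the claimed $\sigma_*$.

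The only genuinely nontrivial ingredient is that $\underline{f_l}$ is well-defined and strictly positive, i.e., that $f(s_l;k)$ attains a positive minimum over the (possibly unbounded) admissible parameter region $\bigcup_{r=1}^3\mathbb W_r$ with $s$ held fixed; this is precisely what the analysis of the spectral function in Appendix~\ref{app: spectralfunctions} supplies, so I would cite it rather than reprove it here. Everything else is routine substitution and monotonicity of the sum and square-root. I therefore expect the proof to be short: the ``hard part'' has effectively been front-loaded into the appendix lemma on $f$, and the proposition itself is essentially a corollary of Theorem~\ref{thm: statistics}, Theorem~\ref{thm: risk}, and that lemma. One subtlety worth stating explicitly in the write-up is that the bound is \emph{uniform} over all admissible WAC gains $M,K$ satisfying Assumption~\ref{assum: commute} and the stability window of Theorem~\ref{thm: stability}, which is what makes it a genuine fundamental limit rather than a statement about one particular design.
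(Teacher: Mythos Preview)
Your proposal is correct and follows essentially the same approach as the paper: both arguments reduce $\sigma_{ij}$ (under $\eta'=0$) to the weighted sum of $f\big((s;k)_l\big)$, invoke the appendix result that $f$ is positive and attains a minimum on the stability region (the paper adds the remark that $f$ diverges on $\partial\bigcup_{r=1}^3\mathbb W_r$ to justify attainment), and then pass the termwise bound through the nonnegative weighted sum and square root before minimizing over pairs. Your write-up is in fact more explicit than the paper's terse sketch, but the logical skeleton is identical.
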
%

\vspace{2mm}

\subsubsection{Noise-Induced Limitations}
The measurement noise also imposes hard limits. This limit is independent of time-delay and can be deduced from the form of $\sigma_{ij}$ in Theorem \ref{cor: risk3}. It can be shown that the pair standard deviation $\sigma_{ij}$ is a function of eigenvalues of feedback matrices $K,M$ and that  $\{\mu_l,\kappa_l\}_{l=2}^n$ can be optimized to a global minimum, which depend on the basic characteristics of the power network and noise parameters. This suggests strong evidence of hard limitation on the feedback control strategy\footnote{For a more technical argument on the matter, the interested reader can use basic calculus and show that functions of the form $\zeta(x,y)=\frac{A^2+B^2(x^2+y^2)}{(d+x)(\lambda +y)}$, that constitute $\sigma_{ij}$ in Corollary \ref{cor: risk2}, always attain a positive global minimum.}.

\vspace{2mm}

The coexistence of both time-delay and measurement noise in our model elevates emergence of  stronger and more complex limitations on WAC policies that can be studied via function $\mathfrak{f}$ introduced in Theorem \ref{thm: statistics}. Such analysis is algebraically tedious and beyond the scope of this work. Nevertheless, it is straightforward to conclude that the existence of minimum values of $\sigma_{ij}$ characterizes limits on risk-aware control design. We provide the following technical result on lower bound of the risk measure for the class of simultaneously diagonalizable feedback controllers.

\begin{thm}\label{thm: limit} 
	Given systemic set parameters $\zeta$, $c$, and the acceptance level $\varepsilon \in (0,1)$, one of the following statements must hold.
	\begin{itemize}
		\item When $\sigma^* \leq \frac{\zeta}{c\,\nu_{\varepsilon}}$, the gain matrices $M$ and $K$ can be tuned to make $\mathfrak R_{\varepsilon}$ arbitrarily small. 
		\item When $\frac{\zeta}{c\,\nu_{\varepsilon}} < \sigma^*<\frac{\zeta}{\nu_{\varepsilon}}$, the least achievable risk value cannot be reduced beyond  
		$$
		\mathfrak R_{\varepsilon} \succeq  \left( \frac{\sigma^* \, \nu_\varepsilon\, c -\zeta}{\zeta-\sigma^*\,\nu_{\varepsilon}} \right) 1_{r},
		$$ 
		for $r=\frac{n(n-1)}{2}$.
		\item  When $\sigma^* \geq \frac{\zeta}{\nu_{\varepsilon}}$,  any choice of $M,K$ results in infinite risk of phase incoherence. 
	\end{itemize}
\end{thm}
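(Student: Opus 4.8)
\textbf{Proof proposal for Theorem \ref{thm: limit}.}

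The plan is to reduce the statement about the risk \emph{profile} vector $\mathfrak R_\varepsilon$ to a pointwise statement about each scalar risk $\mathcal R_\varepsilon^{(i,j)}$, and then to exploit monotonicity of the map appearing in \eqref{eq: risk} together with the lower bound $\sigma_{ij}\ge\sigma^*$ from Proposition \ref{prop: limitsigma}. First I would record the elementary fact that, for fixed $\zeta,c,\nu_\varepsilon$, the middle branch of \eqref{eq: risk}, namely $g(\sigma):=\frac{\sigma\nu_\varepsilon c-\zeta}{\zeta-\sigma\nu_\varepsilon}$, is strictly increasing on the interval $\big(\tfrac{\zeta}{c\nu_\varepsilon},\tfrac{\zeta}{\nu_\varepsilon}\big)$: differentiating gives $g'(\sigma)=\frac{\nu_\varepsilon\zeta(c-1)}{(\zeta-\sigma\nu_\varepsilon)^2}>0$ since $c>1$. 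Moreover $g$ extends continuously to a monotone map of the whole half-line once we set $g\equiv 0$ on $\big(0,\tfrac{\zeta}{c\nu_\varepsilon}\big]$ and $g\equiv+\infty$ on $\big[\tfrac{\zeta}{\nu_\varepsilon},\infty\big)$, so that $\mathcal R_\varepsilon^{(i,j)}=g(\sigma_{ij})$ is a nondecreasing function of $\sigma_{ij}$. This is the only structural property of \eqref{eq: risk} I need.

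Next I would treat the three cases. \emph{Case $\sigma^*\ge \zeta/\nu_\varepsilon$:} by Proposition \ref{prop: limitsigma}, $\sigma_{ij}\ge\sigma^*\ge\zeta/\nu_\varepsilon$ for every pair, so the third branch of \eqref{eq: risk} applies and $\mathcal R_\varepsilon^{(i,j)}=+\infty$ regardless of the admissible choice of $M,K$; hence every coordinate of $\mathfrak R_\varepsilon$ is $+\infty$. \emph{Case $\tfrac{\zeta}{c\nu_\varepsilon}<\sigma^*<\tfrac{\zeta}{\nu_\varepsilon}$:} again $\sigma_{ij}\ge\sigma^*$ for all pairs, and since $g$ is nondecreasing, $\mathcal R_\varepsilon^{(i,j)}=g(\sigma_{ij})\ge g(\sigma^*)=\frac{\sigma^*\nu_\varepsilon c-\zeta}{\zeta-\sigma^*\nu_\varepsilon}$ (note $g(\sigma^*)$ is finite and positive because $\sigma^*$ lies in the open middle interval). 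Stacking over all $r=n(n-1)/2$ pairs gives $\mathfrak R_\varepsilon\succeq g(\sigma^*)\,1_r$, which is the claimed bound. \emph{Case $\sigma^*\le\tfrac{\zeta}{c\nu_\varepsilon}$:} here I must produce a sequence of admissible gains driving every $\mathcal R_\varepsilon^{(i,j)}$ to zero, i.e.\ driving every $\sigma_{ij}$ below the threshold $\zeta/(c\nu_\varepsilon)$. I would argue that one can choose $\{\mu_l,\kappa_l\}_{l\ge2}$ so that each $\sigma_{ij}$ approaches its infimum over admissible controllers; combining the definition of $\sigma^*$ in Proposition \ref{prop: limitsigma} with the noise-induced analysis (the footnote observation that each summand $\zeta(x,y)=\frac{A^2+B^2(x^2+y^2)}{(d+x)(\lambda+y)}$ has a finite positive minimum, and the delay-induced bound through $\underline{f_l}$) shows that $\inf \sigma_{ij}$ is controlled by $\sigma^*$, so the hypothesis $\sigma^*\le\zeta/(c\nu_\varepsilon)$ places each attainable $\sigma_{ij}$ arbitrarily close to, or below, the zero-risk threshold; then the first branch of \eqref{eq: risk} yields $\mathcal R_\varepsilon^{(i,j)}$ arbitrarily small (in fact eventually $0$), uniformly in $(i,j)$, hence $\mathfrak R_\varepsilon$ arbitrarily small.

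The three cases are mutually exclusive and exhaustive since $\sigma^*$ is a fixed nonnegative number compared against the two thresholds $\tfrac{\zeta}{c\nu_\varepsilon}<\tfrac{\zeta}{\nu_\varepsilon}$, which completes the argument. The main obstacle I anticipate is the first case: establishing the tuning claim rigorously requires knowing that the achievable infimum of $\sigma_{ij}$ over admissible $(M,K)$ is \emph{attained or approached} by gains consistent with Assumptions \ref{assum: uniformity}--\ref{assum: commute} and with the internal-stability window of Theorem \ref{thm: stability} — in other words, that the minimizing eigenvalue pairs $(\mu_l,\kappa_l)$ do not push the delay system outside $\bigcup_r \mathbb W_r$. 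Handling this cleanly likely needs the monotonicity/coercivity properties of $f$ and of $\mathfrak f_l$ developed in Appendix \ref{app: spectralfunctions}, together with a verification that the stability region has nonempty interior around the relevant operating point; the other two cases are immediate consequences of Proposition \ref{prop: limitsigma} and the monotonicity of $g$.
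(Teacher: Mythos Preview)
The paper does not include a stand-alone proof of Theorem \ref{thm: limit}; it is presented as an immediate corollary of Proposition \ref{prop: limitsigma} and the explicit risk formula of Theorem \ref{thm: risk}, with the text remarking only that ``it is straightforward to conclude that the existence of minimum values of $\sigma_{ij}$ characterizes limits on risk-aware control design.'' Your proposal supplies exactly that implicit argument: you isolate the monotonicity of the map $\sigma\mapsto g(\sigma)$ defining \eqref{eq: risk}, combine it with the uniform lower bound $\sigma_{ij}\ge\sigma^*$, and read off the three regimes. This is precisely the mechanism the paper intends, so for the second and third bullets your argument is complete and matches the paper.

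The reservation you raise about the first bullet is genuine and is not resolved in the paper either. The paper never makes precise whether $\sigma^*$ in Theorem \ref{thm: limit} coincides with $\sigma_*$ of Proposition \ref{prop: limitsigma} (note the subscript/superscript discrepancy) or with a more general infimum accounting for $\eta'>0$, and it never verifies that the minimizing eigenvalue pairs $(\mu_l^*,\kappa_l^*)$ can be realized simultaneously within the stability window. Your observation that the diagonalization of Assumption \ref{assum: commute} lets one choose each $(\mu_l,\kappa_l)$ independently, together with the interior/coercivity properties of $f$ in Appendix \ref{app: spectralfunctions}, is the right direction to close that gap; the paper simply leaves this implicit.
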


\vspace{2mm}

\subsubsection{Limits and Trade-Offs in Consensus WAC}
A particular type of state feedback controller that satisfies Assumption \ref{assum: commute} is when $M=\mu L$ and $K = \kappa L$, where $L$ is the power grid Laplacian matrix and $\kappa>0$, $\mu>0$ are scaling factors. This type of feedback controller obtains a consensus structure since $M$ and $K$ are Laplacian matrices. The sensor network is now a consensus network that applies time-delayed feedback control to the grid. The connectivity of the network can be quantified via the notion of the effective resistance $\Xi$ \cite{klein1993}  and its convenient spectral form that for our initial power network graph reads $\Xi_{L}=\sum_{l=2}^n \lambda_l^{-1}$. It follows that
\begin{equation}
	\Xi_{K}=\sum_{l=2}^n \frac{1}{\kappa_{l}}=\frac{1}{\kappa}\sum_{l=2}^n \frac{1}{\lambda_{l}} = \frac{\mu}{\kappa} \sum_{l=2}^n \frac{1}{\mu_{l}}=\frac{\mu}{\kappa}\, \Xi_{M},
\end{equation} 
where $\Xi_M$ and $\Xi_K$ are the corresponding effective resistances of phase and frequency consensus control of the power network.
It is known that the stronger the inter-connectivity of a graph, the smaller its effective resistance. The stability region $\bigcup_{r=1}^3 \mathbb W_{r}$ introduced in Sec. \ref{sect: preliminary} imposes lower bounds on both $\Xi_{K}$ and $\Xi_{M}$ that we can express as follows. Let $\lambda_n=\|L\|$ be the maximum Laplacian eigenvalue of the underlying graph and define the set $\mathbb Q =\big \{(\mu,\kappa) \in \mathbb R^2_+~|~ (d\tau,\|L\|\tau^2,\|L\|\mu \tau^2,\|L\|\kappa \tau)\in \partial\{\bigcup_{r=1}^3 \mathbb W_{r}\} \big\}$. The effective resistances of $K$ and $M$ are lower bounded by 
\begin{equation}    \label{eq: resistances}
	\begin{split}
		\Xi_{K}&=\sum_{l=2}^n\frac{1}{\kappa_l}>\frac{(n-1)}{\max \big\{ \kappa \hspace{0.05cm} \big| \hspace{0.05cm} (\kappa,\mu)\in \mathbb Q \big\} \|L\|},\\
		\Xi_{M}&=\sum_{l=2}^n\frac{1}{\mu_l}>\frac{(n-1)}{\max \big\{\mu\hspace{0.05cm} \big| \hspace{0.05cm}(\kappa,\mu)\in \mathbb Q\big\} \|L\|}.
	\end{split}
\end{equation} 


Contrary to the lower limits of the systemic risk \cite{somarakisnader_tac_1,8884747}, the above limitations on the graph structure of WACs are only due to the time-delay. At this point, the risk of phase incoherence and the effective resistance are restricted by independent factors.
\begin{thm}\label{thm: trade-off} 
	Given systemic set parameters $\zeta$, $c$, and the acceptance level $\varepsilon \in (0,1)$,  there exists a common limit for the product of the systemic risk and the effective resistance,
	
	\begin{equation}\label{eq: tradeoff}
    	\mathfrak R_{\varepsilon} \cdot \sqrt{\Xi_K+\Xi_M} \succeq \mathrm 1_{r}\, \Omega,
    \end{equation} 
    where $\Omega>0$ is a universal constant depending on the grid properties, time-delay, and uncertainty constants $\eta'$, $\eta$. 
\end{thm}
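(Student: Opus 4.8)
The plan is to combine the noise-induced lower bound on the pairwise standard deviations $\sigma_{ij}$ (the $\sigma^*$ of Theorem \ref{thm: limit}, here specialized to the consensus structure $M=\mu L$, $K=\kappa L$) with the delay-induced lower bounds on the effective resistances $\Xi_K$ and $\Xi_M$ in \eqref{eq: resistances}, and to observe that these two bounds are coupled only through the common scalars $(\mu,\kappa)$. Concretely, for the consensus WAC the expression in Corollary \ref{cor: risk2} becomes $\sigma_{ij}^2=\sum_{l=2}^n (q_{il}-q_{jl})^2\,g_l(\mu,\kappa)$ with $g_l(\mu,\kappa)=\big(\tfrac{\eta^2}{J^2}+\eta'^2(\kappa^2+\mu^2)\lambda_l^2\big)\big/\big(2(d+\kappa\lambda_l)(1+\mu)\lambda_l\big)$ in the $\tau=0$ case, and the analogous $\mathfrak f_l$-based expression of Theorem \ref{thm: statistics} in the delayed case. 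First I would show that each such $g_l$ (equivalently each $\mathfrak f_l$) is bounded below by a strictly positive quantity $\underline g_l>0$ \emph{uniformly over the admissible region} $\mathbb Q$ of $(\mu,\kappa)$ — this is exactly the ``positive global minimum'' claim footnoted after the Noise-Induced Limitations paragraph, applied to functions of the form $\zeta(x,y)=\frac{A^2+B^2(x^2+y^2)}{(d+x)(\lambda+y)}$. Taking the minimum over pairs $(i,j)$ then yields $\mathcal R_\varepsilon^{(i,j)}\succeq \rho(\sigma^*)\,1_r$ via the monotone increasing branch of the risk formula \eqref{eq: risk}, where $\sigma^*$ is the resulting uniform lower bound and $\rho(s)=\frac{s\nu_\varepsilon c-\zeta}{\zeta-s\nu_\varepsilon}$, provided $\sigma^*$ sits in the middle regime; in the saturated regime \eqref{eq: tradeoff} is trivial since the left side is $+\infty$.

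Next I would bound $\sqrt{\Xi_K+\Xi_M}$ from below. From \eqref{eq: resistances}, $\Xi_K+\Xi_M > \frac{(n-1)}{\|L\|}\big(\frac{1}{\kappa_{\max}}+\frac{1}{\mu_{\max}}\big)$, where $\kappa_{\max}=\max\{\kappa:(\kappa,\mu)\in\mathbb Q\}$ and $\mu_{\max}=\max\{\mu:(\kappa,\mu)\in\mathbb Q\}$ are finite positive constants determined by the boundary of the stability window $\bigcup_{r=0}^3\mathbb W_r$, the damping ratio $d$, the time-delay $\tau$, and $\|L\|$ — crucially independent of the particular choice of gains within the admissible set. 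Hence $\sqrt{\Xi_K+\Xi_M}$ is bounded below by a strictly positive constant $c_{\Xi}$ depending only on the grid and $\tau$.

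The key structural point, which I would state explicitly, is that the factor controlling $\mathcal R_\varepsilon$ (namely $\sigma^*$, governed by the noise parameters $\eta,\eta'$ and the uniform-in-$\mathbb Q$ minimum of the spectral quantities) and the factor controlling $\sqrt{\Xi_K+\Xi_M}$ (namely $c_\Xi$, governed only by the delay $\tau$ and the stability window) are established over the \emph{same} admissible region $\mathbb Q$ but cannot both be driven to their favorable extremes simultaneously: pushing $(\mu,\kappa)$ large to shrink $\sqrt{\Xi_K+\Xi_M}$ only moves the $\sigma^*$ minimization around without dropping it below its positive floor, while shrinking the gains toward zero worsens the resistances. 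Therefore, componentwise, $\mathfrak R_\varepsilon\cdot\sqrt{\Xi_K+\Xi_M}\succeq \rho(\sigma^*)\,c_{\Xi}\,1_r=:\Omega\,1_r$ with $\Omega>0$ depending on the grid properties, $\tau$, $\eta'$ and $\eta$ but not on the admissible tuning; in the regime $\sigma^*\ge \zeta/\nu_\varepsilon$ one simply sets $\Omega$ to any positive value since the left-hand side is infinite.

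I expect the main obstacle to be making the ``uniform positive lower bound'' step fully rigorous in the \emph{delayed} case: whereas for $\tau=0$ one has the clean closed form $\sigma_{ij}$ of Corollary \ref{cor: risk2} and the elementary calculus fact in the footnote, for $\tau>0$ the relevant quantity is $\tau^3\big[\frac{\eta^2}{J^2}+\eta'^2(\cdots)\big]f((s;k)_l)$ with $f$ only implicitly defined by an improper integral over the stability region $\bigcup_{r=1}^3\mathbb W_r$. One must invoke the properties of $f$ developed in Appendix \ref{app: spectralfunctions} — in particular that $f$ is continuous and strictly positive and that the closure of the relevant parameter slice is compact (or that $f$ is coercive toward the boundary) — to conclude that the infimum over the admissible $(\mu,\kappa)$ is attained and strictly positive. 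Handling the boundary $\partial\{\bigcup_{r=1}^3\mathbb W_r\}$ carefully, and ruling out that $f\to 0$ there, is the delicate analytic point; everything else reduces to monotonicity of the risk branch in \eqref{eq: risk} and the explicit bounds \eqref{eq: resistances}.
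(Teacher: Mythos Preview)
The paper's own ``proof'' of this theorem is a two-sentence pointer to \cite{somarakisnader_tac_1} and \cite{8884747}; it gives no argument beyond saying the trade-off is derived as in those references for first- and second-order consensus dynamics. Your sketch is therefore already more detailed than what the paper supplies, and your strategy---multiply a uniform-in-$(\mu,\kappa)$ lower bound on $\mathcal R_\varepsilon^{(i,j)}$ (coming from the positive infimum of $\mathfrak f_l$ over the stability window) by the delay-induced lower bound on $\sqrt{\Xi_K+\Xi_M}$ from \eqref{eq: resistances}---is the natural reading of what the cited works do and is a legitimate route to \eqref{eq: tradeoff}.

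That said, there is one genuine gap you should confront rather than parenthesize. You handle the middle branch of \eqref{eq: risk} and the saturated branch ($\sigma^*\ge \zeta/\nu_\varepsilon$, left side $+\infty$), but you never treat the first branch $\sigma^*\le \zeta/(c\nu_\varepsilon)$. In that regime Theorem~\ref{thm: limit} says the risk vector can be driven to zero by an admissible choice of $(\mu,\kappa)$, while $\sqrt{\Xi_K+\Xi_M}$ remains finite; the product is then zero and no strictly positive $\Omega$ can bound it from below. Your ``separate lower bounds then multiply'' argument simply collapses here, because the lower bound on the risk factor is zero. Either the theorem carries an implicit hypothesis (the nontrivial regime of Theorem~\ref{thm: limit}) that you should state, or one needs a genuinely coupled argument---e.g., showing that any $(\mu,\kappa)$ achieving $\sigma_{ij}\le \zeta/(c\nu_\varepsilon)$ forces $\Xi_K+\Xi_M$ to diverge---which is not what you wrote and is not what \eqref{eq: resistances} gives (those bounds are finite constants independent of the tuning). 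The paper does not resolve this either; it simply defers to the references. If you want a self-contained proof, you must either add the hypothesis $\sigma^*>\zeta/(c\nu_\varepsilon)$ explicitly, or replace the product-of-minima step with an argument that actually tracks how the two factors co-vary along the admissible set.
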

The above theorem asserts that it is impossible to design a feedback control that simultaneously minimizes the risk of systemic events and maximizes the networked-control connectivity, which are measures by $\Xi_K$ and $\Xi_M$, beyond a specific threshold. It can be shown that the total resistive power loss for the linearized model \ref{eq: model} depends directly to the network parameters as well as $\Xi_K$ and $\Xi_M$ \cite{6759860,7086037}. The discussion of such trade-offs are beyond the scopes of this paper as in this paper we only consider transmission lines with purely imaginary admittance, i.e., zero resistive losses. 








\section{Case Studies}\label{sect: casestudy}
The first example to be studied is a two-machine structure, which is usually addressed in literature as suppressed multi-machine networks or single-machine infinity bus systems \cite{pai1989}. The second case involves risk-aware analysis and synthesis in the IEEE-39 standard.  

\begin{figure*}[t]
    \begin{subfigure}[t]{.33\linewidth}
        \centering
        \includegraphics[width=\linewidth]{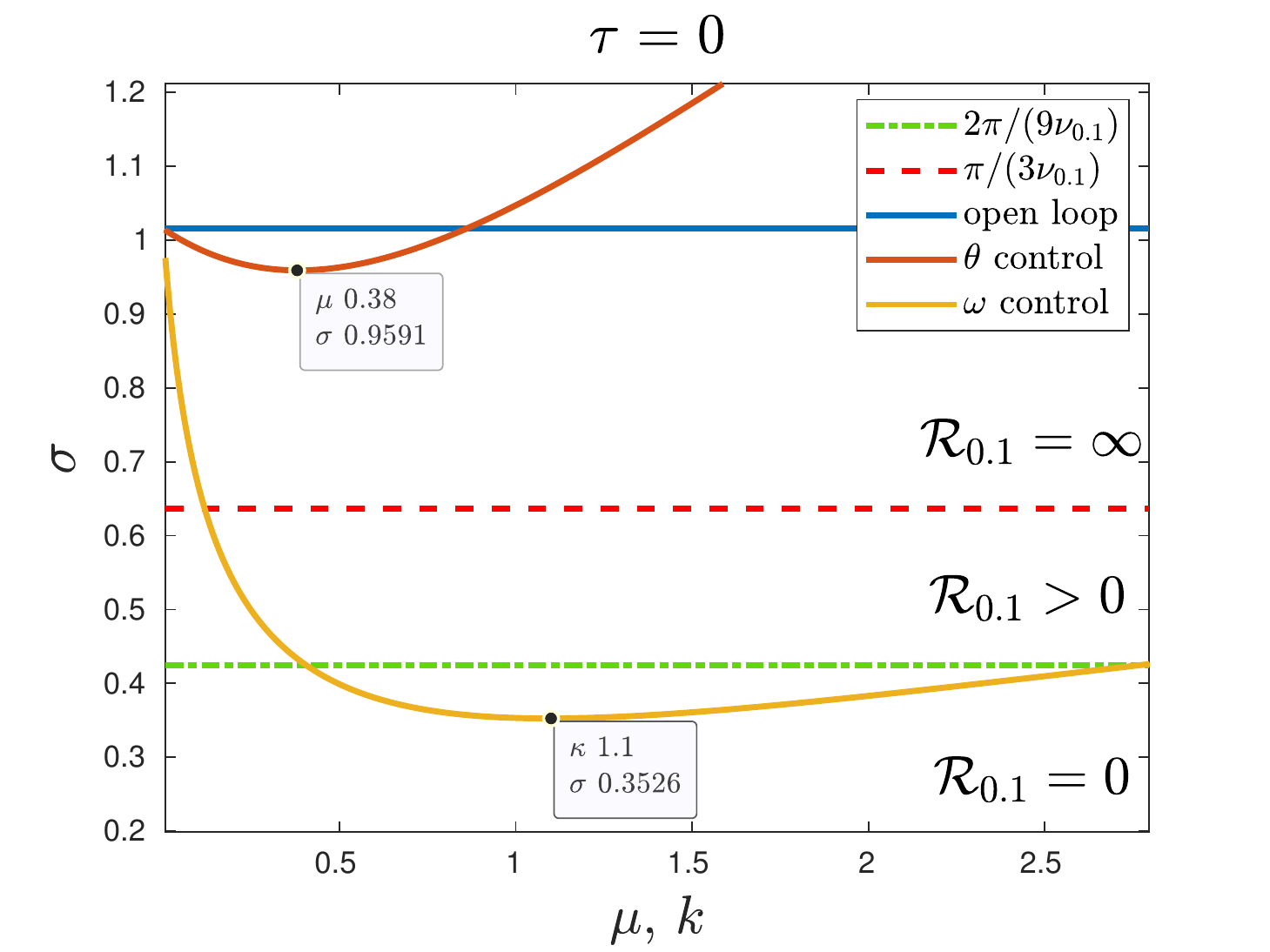}
        \caption{Separate $\omega$ and $\theta$ controls.}
        \label{fig: example1a}
    \end{subfigure}
    \hfill
    \begin{subfigure}[t]{.32\linewidth}
        \centering
        \includegraphics[width=\linewidth]{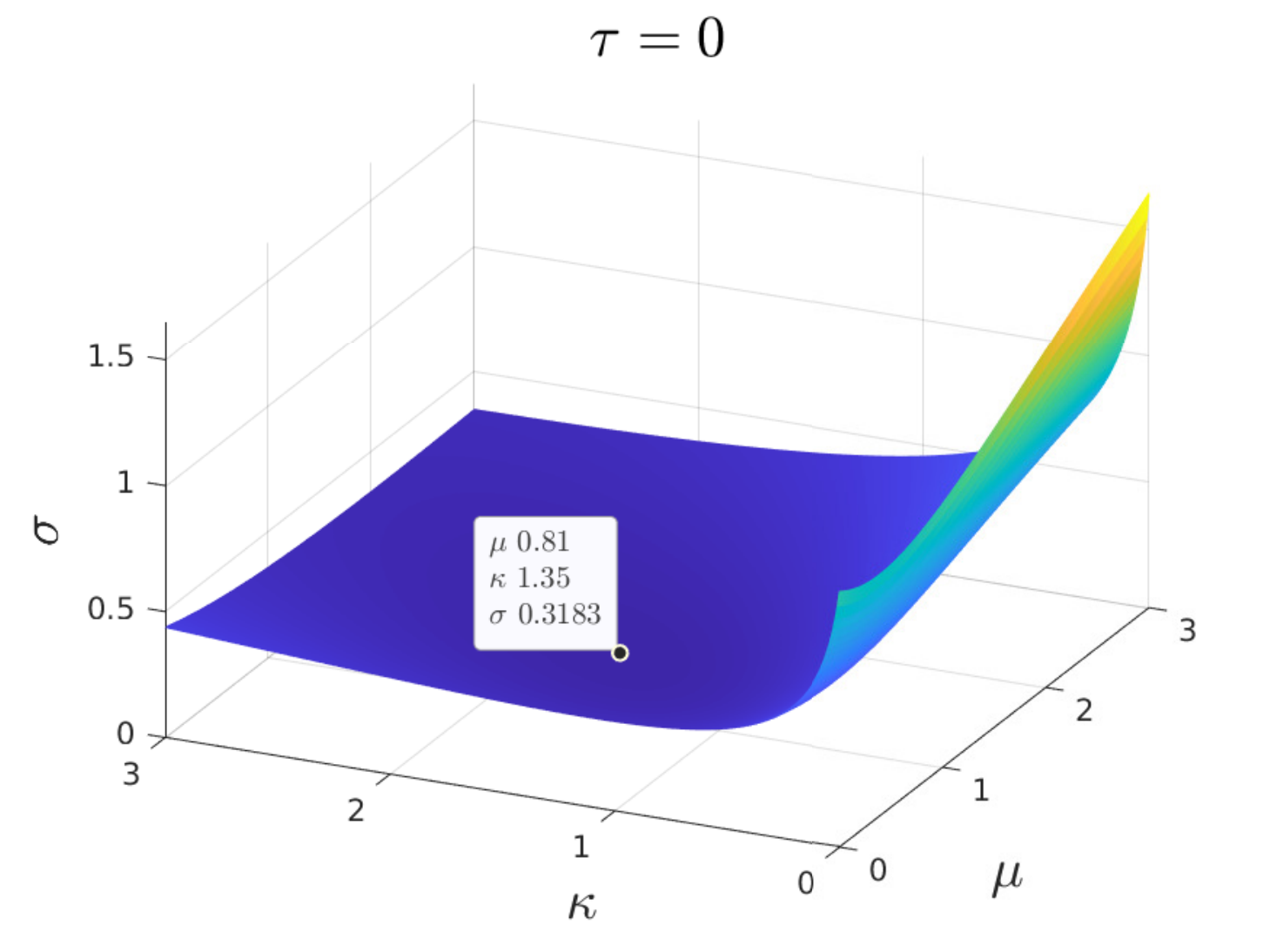}
        \caption{Joint $(\omega,\theta)$ control.}
        \label{fig: example1c}
    \end{subfigure}
    \hfill
    \begin{subfigure}[t]{.33\linewidth}
        \centering
        \includegraphics[width=\linewidth]{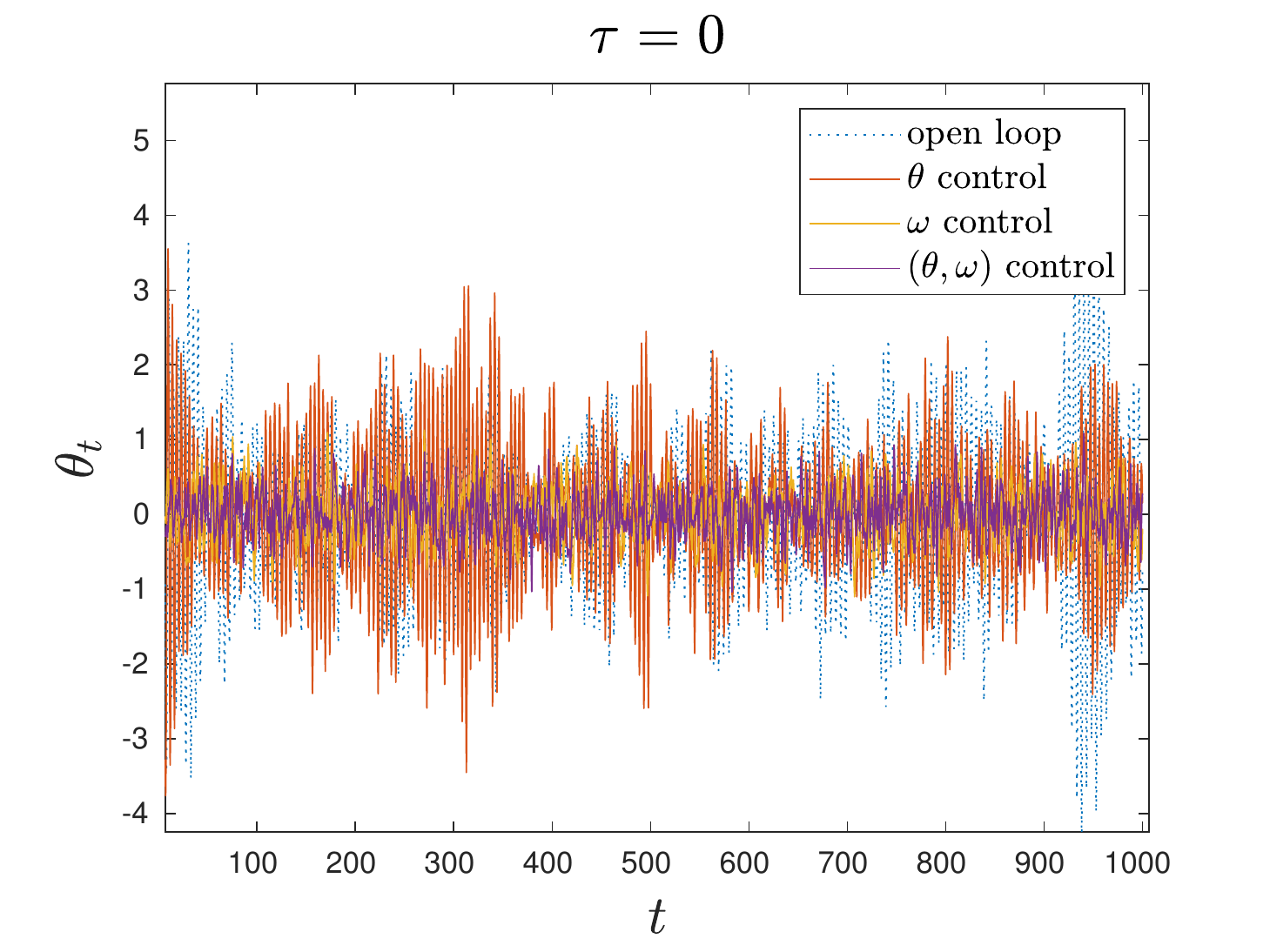}
        \caption{Solution realizations.}
        \label{fig: example1b}
    \end{subfigure}
    \medskip
    \begin{subfigure}[t]{.33\linewidth}
        \centering
        \includegraphics[width=\linewidth]{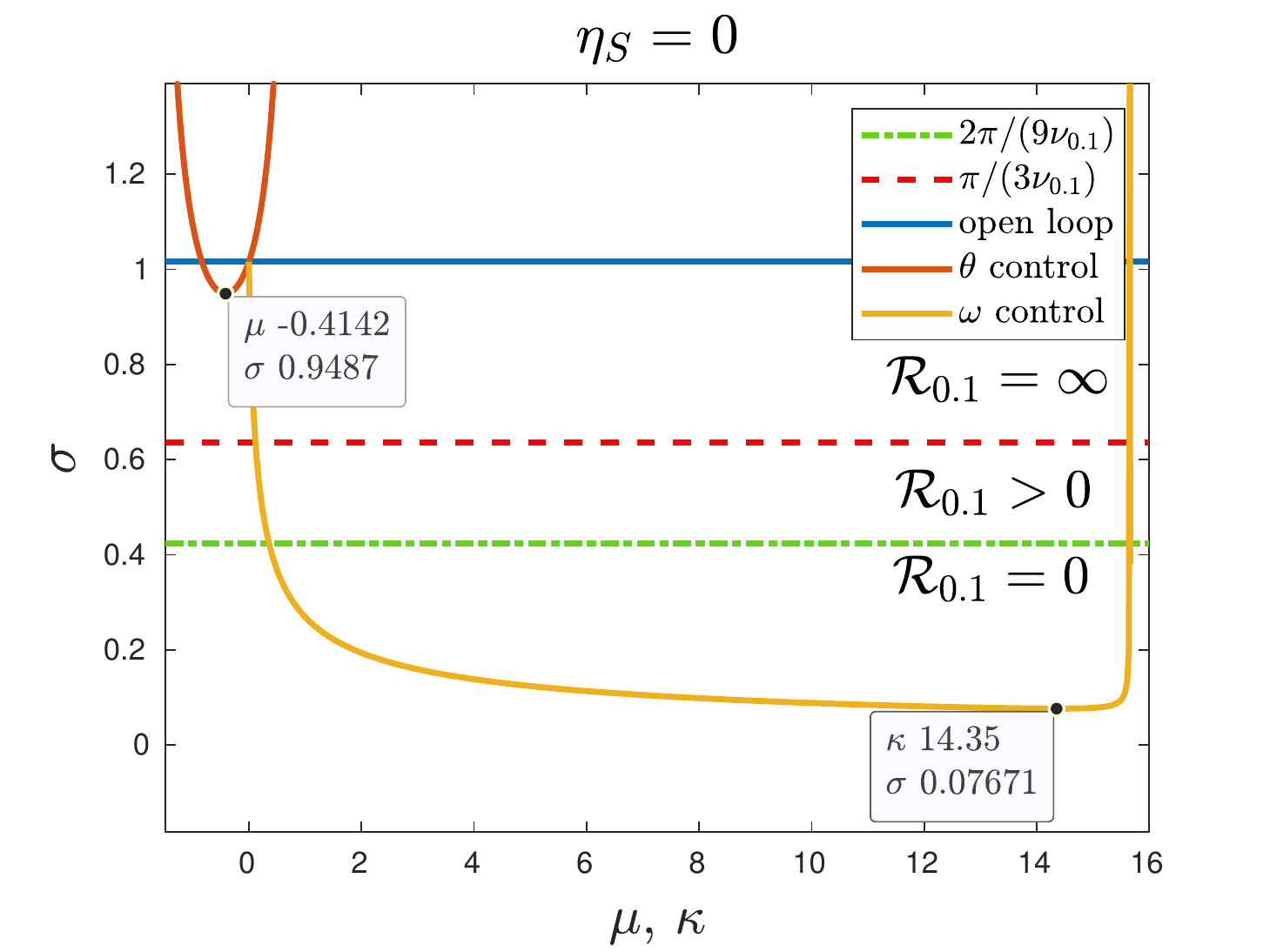}
        \caption{Separate $\omega$ and $\theta$ controls.}
        \label{fig: example1d}
    \end{subfigure}
    \hfill
    \begin{subfigure}[t]{.32\linewidth}
        \centering
	 \includegraphics[width=\linewidth]{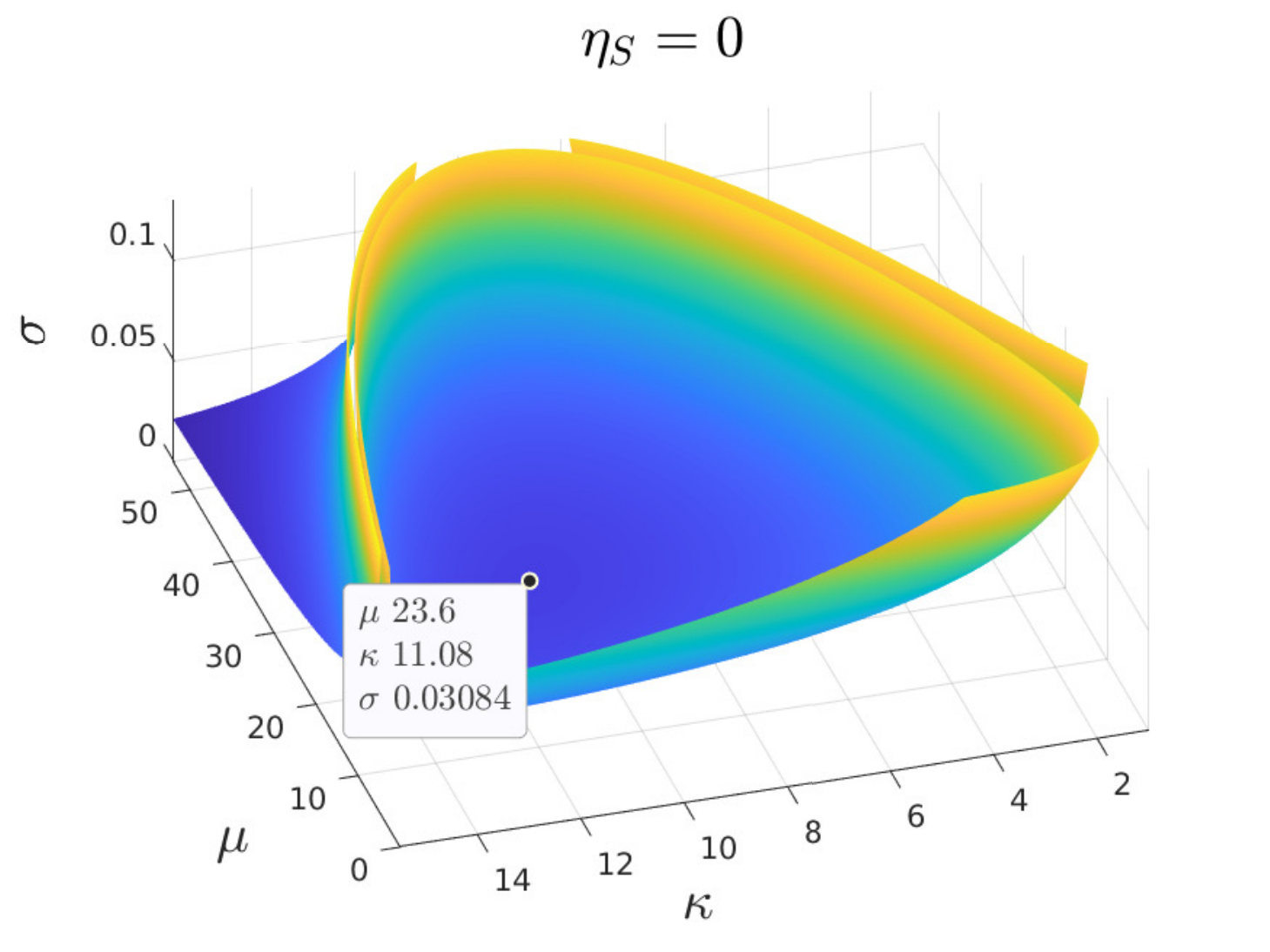}
	\caption{Joint $(\omega,\theta)$ control.}\label{fig: example1e}
    \end{subfigure}
    \hfill
    \begin{subfigure}[t]{.33\linewidth}
    \centering
	 \includegraphics[width=\linewidth]{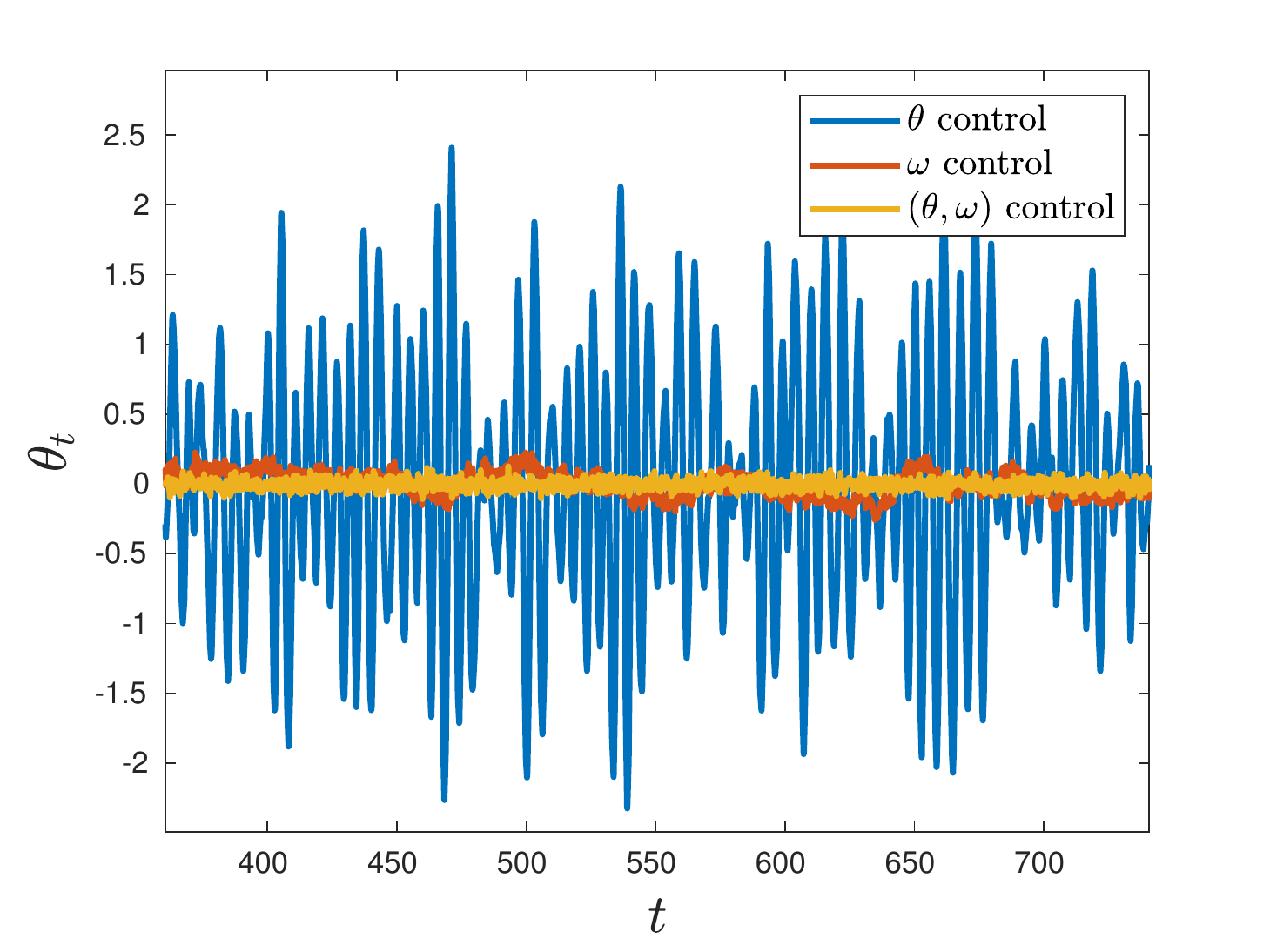}
	\caption{Solution realizations.}\label{fig: example1f}
    \end{subfigure}
	\caption{Risk-Aware Control for the Two-Machines System from Example 1. The first row depicts the case where adding measurement noise $(\eta'=0.3)$ to sensors, but the information is processed without time-delay. The second row illustrates the case where sensors have no measurement noise but lagged information processing with $\tau=0.1$. We compare single phase and frequency controls with joint phase and frequency, where the latter outperforms the former ones.}
	\label{fig: example1}
\end{figure*}

\subsection{The Two-Machine System}
Two synchronous generators are connected by a pure reactance $X=0.3$ per unit, as shown in Figure \ref{fig: 2machines}. The generators' transient reactances are $X_{d_1}=0.16 $ per unit and $X_{d_2}=0.20$ per unit respectively. Furthermore, the generators' inertia is set to $J= (2MJ)/(MVA)$ and the damping torque $\beta=0.15$. The system is operating in the steady-state with $E_1=1.2$ and $E_2=2$ per unit, at the equilibrium point $(0,0)\in \mathbb S$. In this case, the open-loop linearized swing equations read
\begin{equation*}
	\begin{split}
		2\, \ddot{\theta}^{(1)}_{t}=-0.15\,\dot{\theta}^{(1)}_t+1.584\,(\theta_{t}^{(2)}-\theta_{t}^{(1)})+\text{distrb}_1 \\
		2\, \ddot{\theta}^{(2)}_{t}=-0.15\,\dot{\theta}^{(2)}_t+1.584\,(\theta_{t}^{(1)}-\theta_{t}^{(2)})+\text{distrb}_2
	\end{split}
\end{equation*} 
\begin{figure}
	\begin{center}
		\begin{circuitikz}[american voltages]
			\draw
			(0,1) to [american voltage source, l_=$E_1\angle \theta_1$] (0,0); 
			\draw (0,1)  to [short, -] (0,2)
			  to [american inductor, l_=$X_{d_1}$] (2,2)
			  to [short, o-] (2,2)
			  to [american inductor, l_=$X$] (4,2)
			  to [short, -o] (4,2)
			  to [american inductor, l_=$X_{d_2}$] (6,2)
			  to [short, -] (6,2)
			  to [short] (6,1);
			  \draw (6,1) to [american voltage source, l_=$E_2\angle \theta_2$] (6,0);
			\draw (0,0) to (0,0) node[ground]{};
			\draw (6,0) to (6,0) node[ground]{};
		\end{circuitikz}
	\end{center}
	\caption{The two-machine system.}\label{fig: 2machines}
\end{figure}
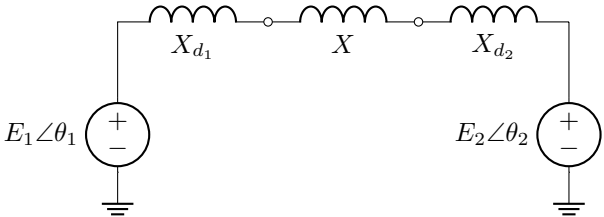 

The load volatility is modeled via white noise with standard deviation $\eta=0.7$. The structure is simple enough to consider the phase difference as $\theta_t=\theta_t^{(1)}-\theta_{t}^{(2)}$. When the uniform phase and the frequency feedback control gains are imposed on both generators, the phase difference satisfies the scalar stochastic delay differential equation
\begin{equation} \label{eq: 2machinesdifference}
	2\, \ddot{\theta}_{t}=-0.15\,\dot{\theta}_{t}-3.168\,\theta_t-\kappa\,\dot{\theta}_{t-\tau}-\mu\,\theta_{t-\mu}+ \text{distrb}
\end{equation} 
with gains $\mu>0$ and $\kappa>0$, that \eqref{eq: model} in this case reads
\begin{equation*}
	\begin{split}
	\mathcal A=\begin{bmatrix}
	0 && 1\\
	-1.584 && -0.075
	\end{bmatrix},~~~\mathcal K=\begin{bmatrix}
	0 && 0\\
	-\mu && -\kappa
	\end{bmatrix}\end{split}
\end{equation*} 

We will explore a few cases that fall within the scope of our framework. The distribution matrix is 
\begin{equation*}
\begin{split}
\mathcal H=\sqrt{2}\begin{bmatrix}
0 && 0 && 0 \\
0.35 && \eta' \,\mu && \eta'\, \kappa
\end{bmatrix}.
\end{split}
\end{equation*} 
where $\eta'$ is statistics of the sensor error. At first we find that the open-loop system \eqref{eq: 2machinesdifference} fluctuates converging to normal distributed with zero mean and standard deviation $\overline{\sigma}=1.0155$. Next, we consider a few scenarios as an application of Theorem \ref{thm: statistics}, Corollaries \ref{cor: risk2} and \ref{cor: risk3}. The results are all illustrated in Figure \ref{fig: example1}. 

\vspace{2mm}

\paragraph*{$\tau=0$} The first round of results assumes that WAC sensors have no time-delay but corrupted measurements as additive white noise with $\eta'=0.5$. We apply Corollary \ref{cor: risk2} to obtain the result. Firstly, we attempt to control the system using phase and frequency state feedback. The dependence of steady-state standard deviation concerning $\mu$ and $\kappa$ gains is illustrated in Figure \ref{fig: example1a}. We observe that the phase state feedback performs very poorly, and it only manages to decrease the standard deviation by $5.55\%$ (from 1.0155 to 0.9591). By increasing $\mu$ beyond 0.87, the closed-loop system performs worse than the open-loop system. On the other hand, the frequency state feedback decreases the initial standard deviation by $65.3\%$. Finally, the joint phase and frequency state feedback achieve a decrease by $68.66\%$ as illustrated in Figure \ref{fig: example1c}. Solution realizations are provided in Figure \ref{fig: example1b} for visual inspection of corresponding dynamics.
%
 
\vspace{2mm}
 
\paragraph*{$\eta'=0$} The second round of simulations regards perfect measurement recordings but lumped time-delay parameter as $\tau=0.1$. The narrative is similar to the previous case. Frequency state feedback control outperforms the phase state feedback control significantly, and the joint control outperforms both. Results are illustrated in Figures \ref{fig: example1d}, \ref{fig: example1e} and \ref{fig: example1f}. The maximum decrease from the open-loop system is $6.58\%$ for the phase state feedback control, $92.45\%$, and $96.96\%$ for the frequency state feedback control and the joint control.

\vspace{2mm}

\paragraph*{Risk-Aware Synthesis} 
Assume the systemic sets $U_\delta =\big( \frac{\pi}{3} \frac{1+\delta}{3/2+\delta},+\infty \big)$. The corresponding risk of phase incoherence is zero if the absolute value of phase difference is below $2\frac{\pi}{9}$, non-negative if it is up to $\frac{\pi}{3}$ and infinite, otherwise. The risk is calculated with the acceptance level $\varepsilon=0.1$, i.e., we evaluate the risk with the probability of staying outside the systemic set is at least $90\%$.
Figures \ref{fig: example1a} and \ref{fig: example1d} illustrate cut-off values of these systemic sets. Following the risk formulas in Corollary \ref{cor: risk2} or Corollary \ref{cor: risk3}, we can extract the feedback gain margins (for the individual control scenario) that ensure a zero, positive or infinite risk. We observe that, in general, the phase state feedback control fails to mitigate the risk of our systemic event at the imposed level of confidence. On the other hand, frequency and joint control perform considerably better. For example, in Figure \ref{fig: example1a} we see that frequency control achieves zero risk for gain in the interval $(0.41,2.76)$, and similarly for case of $\eta'=0$.


\subsection{The IEEE-39 standard}
Also known as the 10-machine New-England Power System, the IEEE-39 standard is a small-scale grid with ten generators and 39 buses. It is initially introduced in \cite{athay79} and its parameters are taken from \cite{pai1989}. The system's diagram is illustrated in Figure \ref{fig: IEEE39png}. The equivalent system created using the network reduction technique is illustrated in Figure \ref{fig: IEEE39reduced}, and it presents an exact reproduction of the transfer impedances of the reduced system as seen from its generator buses. Each non-diagonal element represents the admittance between each pair of generator buses. The bus admittance matrix of the IEEE-39 bus system is reduced to only include the generator nodes in the network. Therefore all of the off-diagonal elements represent the effective admittance between each generator node in the network. Thus, the width of every edge is proportional to the magnitude of admittance between connected generators.

\begin{figure}[t]
    \begin{subfigure}[t]{.45\linewidth}
        \centering
    	\includegraphics[width=\linewidth]{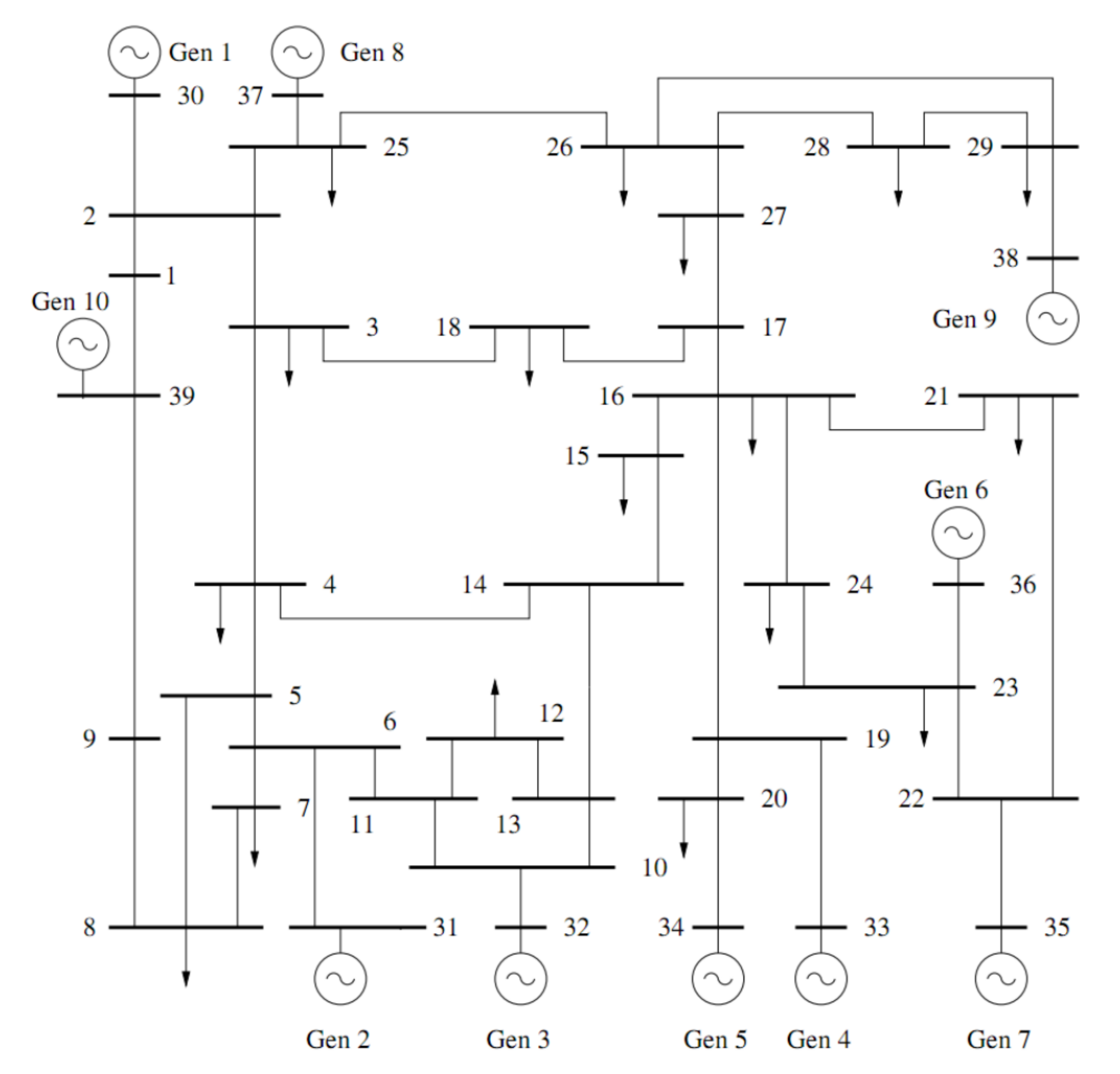}
    	\caption{Single Line Diagram.}
    	\label{fig: IEEE39png}
    \end{subfigure}
    \hfill
    \begin{subfigure}[t]{.54\linewidth}
        \centering
    	\includegraphics[width=\linewidth]{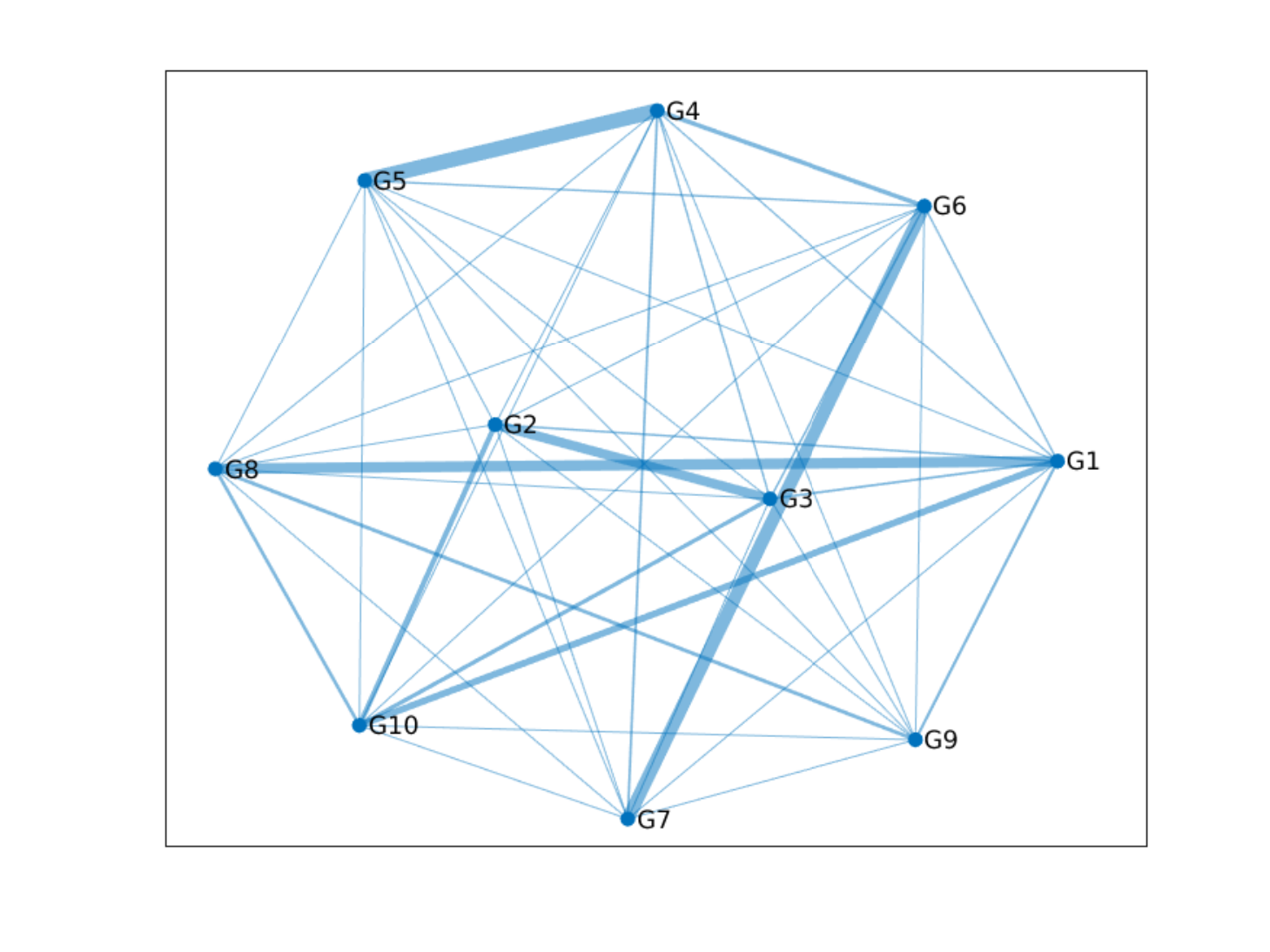}
    	\caption{Kron reduced graph.}
    	\label{fig: IEEE39reduced}
    \end{subfigure}
    \caption{The IEEE-39 Test-System for Example 2.}
    \label{fig: example2}
\end{figure}

We assume that the system is operating in steady-state equilibrium $(\theta^*,\omega^*)=(0,0)\in \mathbb S$ with some nominal voltages $\{E_{i}\}_{i=1}^{10}$ and the admittance matrix $Y$ is derived after reduction with system parameters from \cite{pai1989}. All generators have the same damping and inertia parameters as in Example 1, and load volatility has a standard deviation $\eta=1.1$. The sensor parameters are with $\eta'=0.2$ and $\tau=0.03$. 
We will examine the feasibility of synthesizing simultaneously diagonalizable controllers mitigating the risk of phase incoherence below deviation for various $\zeta$ limits of the systemic set with a probability of at least $95\%$. Following steps of Theorems \ref{thm: stability}, \ref{thm: statistics} and \ref{thm: risk} we calculate the optimal gains with nonzero eigenvalue of $L$, which is presented in Table \ref{tbl: example2}. For $Q$, the eigenvector matrix of $L$ (aligned with eigenvalues as in Table \ref{tbl: example2}), the phase and frequency optimal controllers are $M^*=Q \,\text{Diag}\{\mu_{j}^*\} \,Q^T$ and $K^*=Q\,\text{Diag}\{\kappa_{j}^*\}\, Q^T$, respectively.

Results are depicted in Figure \ref{fig: Example2}. The upper left plot illustrates the extreme risk values with and without optimal controllers $K^*$ and $M^*$ for various systemic set margins. We observe that as we decrease parameter $\zeta$ of the systemic set, the risk increases. In the open-loop case, all pairs of generators fall into incoherence concerning $\mathcal R_{0.05}$ for values of roughly $\zeta=0.7$ and below. For the case of optimized control, we do not see the divergence of any pair of generators for values of $\zeta$ larger than $0.58$, while even for ultimately conservative strict systemic sets, some pairs are still risk-free. In the remaining three plots, we illustrate the systemic risk distribution over all pairs of generators\footnote{The enumeration of pairs is held row-wise: The first value accounts for $(1,2)$ pair, the second for $(1,3)$, the tenth for $(2,3)$, and so forth.}. We observe that the optimal controllers can mitigate the risk of phase incoherence efficiently. We also note that the minimal risk controllers are following Theorem \ref{thm: limit} when it comes to relative values of minimal standard deviation and systemic set. The absence of open-loop risk values for the first two plots indicates infinite risk in all of its pairs of generators' phases. On the contrary, the absence of optimal risk values in the bottom right plot indicates the optimal controllers achieve risk-free fluctuations for the systemic set with $\zeta=\frac{\pi}{4}$, also indicated in the first subplot of Figure \ref{fig: Example2}.


\begin{figure}[t]
    \begin{subfigure}[t]{.49\linewidth}
        \centering
    	\includegraphics[width=\linewidth]{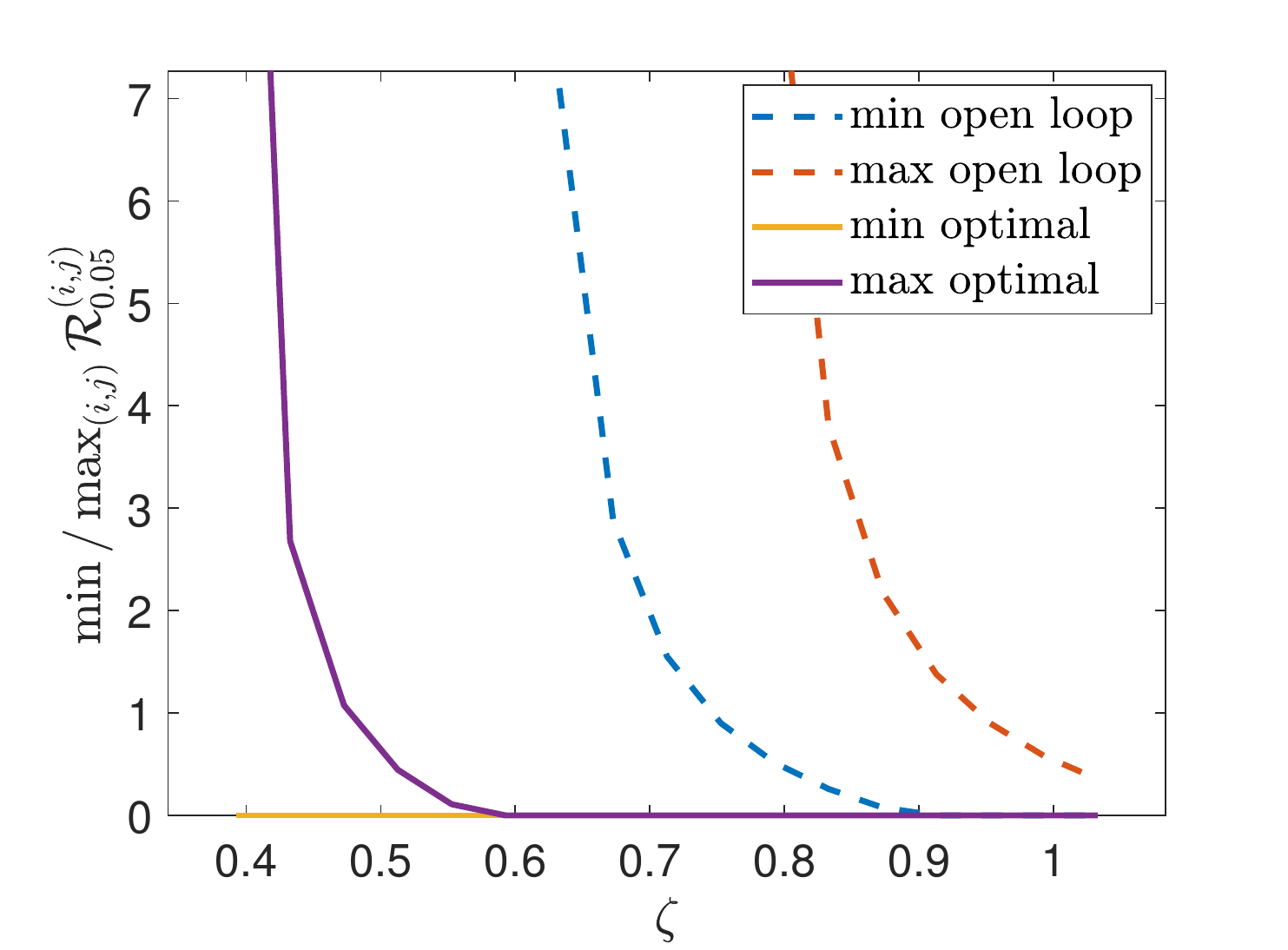}
    \end{subfigure}
    \hfill
    \begin{subfigure}[t]{.49\linewidth}
        \centering
    	\includegraphics[width=\linewidth]{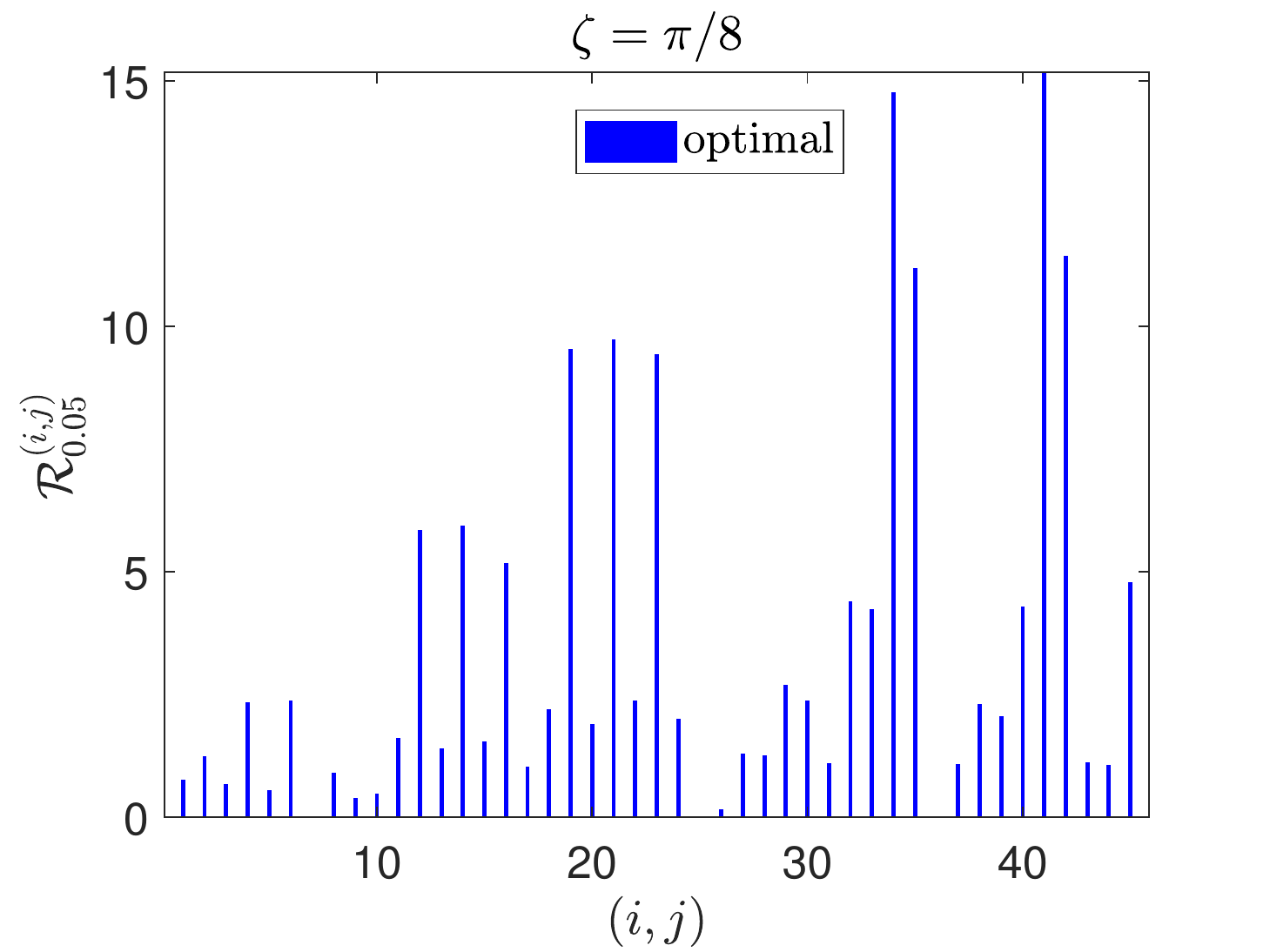}
    \end{subfigure}
    \medskip
    \begin{subfigure}[t]{.49\linewidth}
        \centering
    	\includegraphics[width=\linewidth]{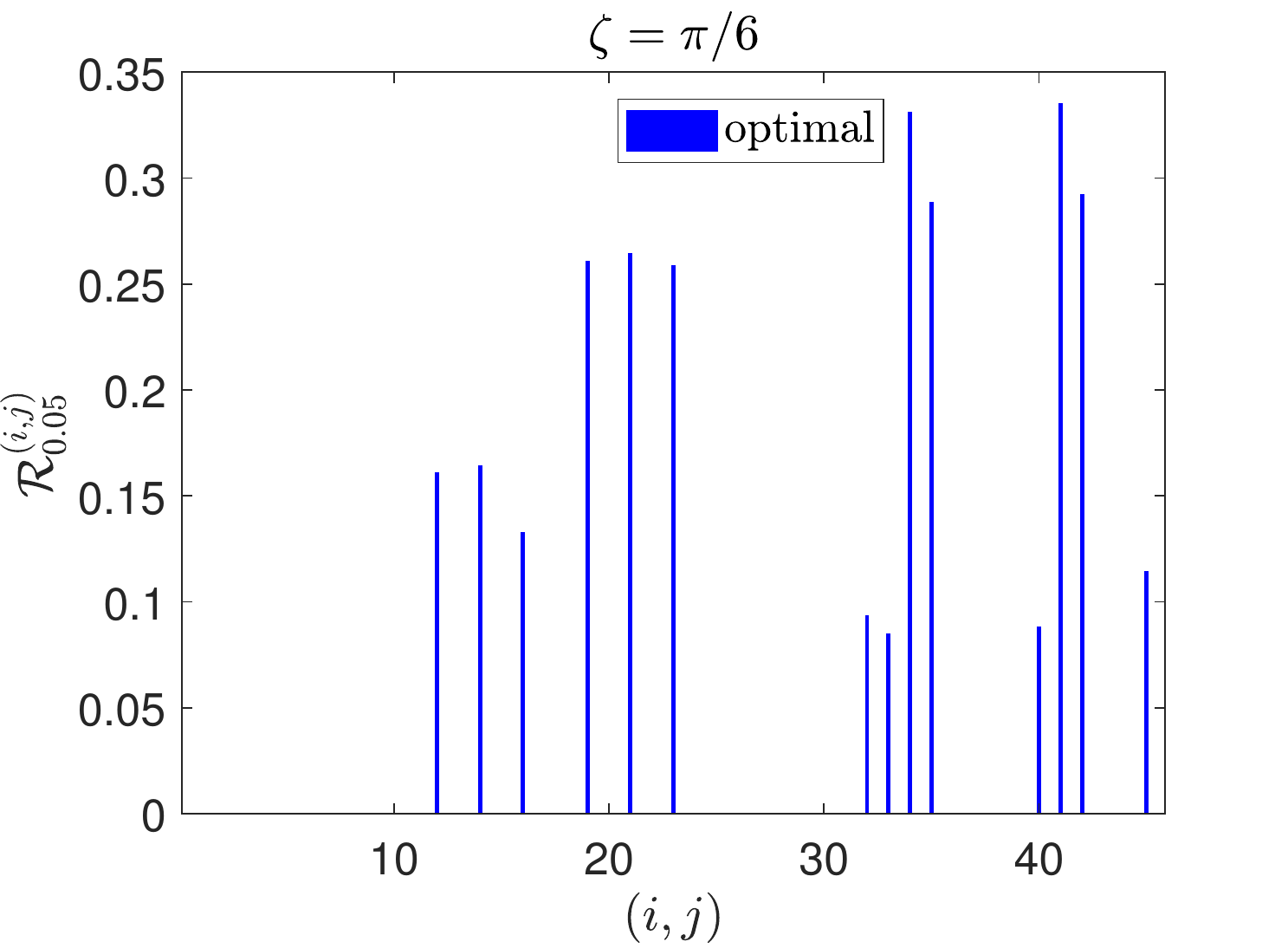}
    \end{subfigure}
    \hfill
    \begin{subfigure}[t]{.49\linewidth}
        \centering
    	\includegraphics[width=\linewidth]{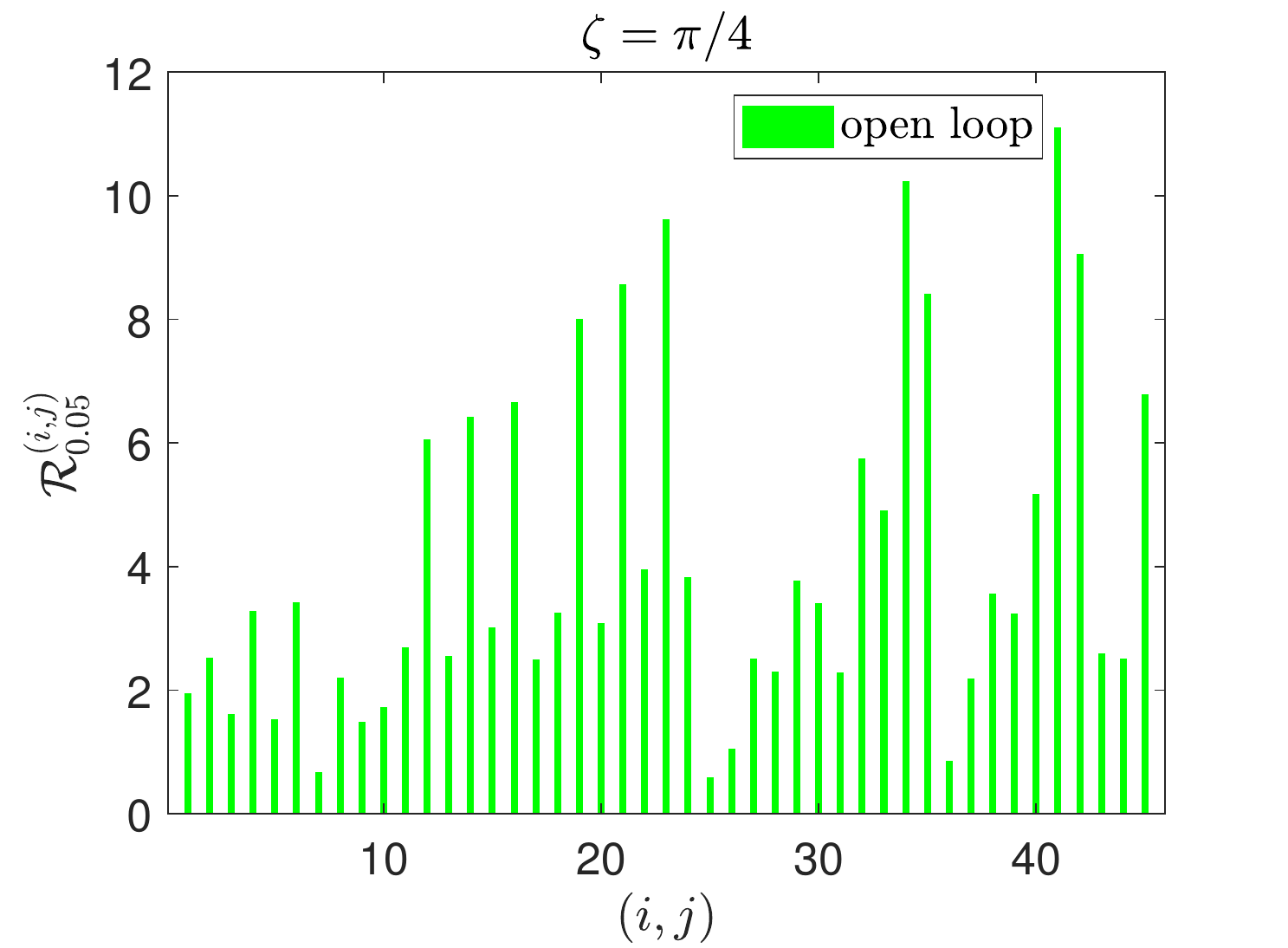}
    \end{subfigure}
    \caption{Risk measures on Example 2. (Upper Left): Sub-plot of maximum and minimum risk concerning systemic set parameters. (Upper Right, Lower Left, Lower Right): Sub-plots of steady state risk distribution over all 45 pairs of generators for the optimal at $\zeta=\pi/8$, $\zeta=\pi/6$ and for the open-loop case at $\zeta=\pi/4$, respectively.}
	\label{fig: Example2}
\end{figure}

\begin{table}[t]                           
	\centering	
	\resizebox{\linewidth}{!}
	{ 
		\begin{tabular}{c c c c}
            \toprule
            $\hspace{0.7cm} \lambda_j \hspace{0.7cm}$ & $\hspace{0.7cm} \mu^{*}_j \hspace{0.7cm}$ & $\hspace{0.7cm} \kappa_j^{*} \hspace{0.7cm}$ & $\hspace{0.7cm} \mathfrak f \hspace{0.7cm}$ \\
            \midrule
            23.8762 & 0.25   & 2.75 &  0.0672 \\ [0.5ex]
            31.8500 & 0.20 & 2.75 & 0.0584 \\[0.5ex]
            34.9876 & 0.15 & 2.75 & 0.0557 \\[0.5ex]
            44.5137  & 0.10  & 2.75 & 0.0495 \\[0.5ex]
            55.6556 &  0.10  & 2.70  &  0.0444 \\ [0.5ex]
            64.0023 & 0.05 & 2.70  &   0.0415 \\[0.5ex]
            88.7335  &  0.05  & 2.70 & 0.0355 \\[0.5ex]
            94.8997 & 0.05  & 2.70  & 0.0343 \\[0.5ex]
            103.9912  & 0.05 & 2.70  &  0.0329 \\
			\bottomrule
	    \end{tabular} }                        
    \caption{Example 2. $(\mu*,\kappa*)$ minimizers of $\mathfrak{f}$ in Theorem \ref{thm: statistics}.}
    \label{tbl: example2}
\end{table}


\section{Discussion}    \label{sect: conclusion}
The notion of transient stability in power networks is traditionally studied via generalized Lyapunov methods that compute the critical energy the system can absorb before the post-fault system is restored into some stable state. However, these methods provide conservative results and do not scale with  network size. To this end, we adopt a risk-based framework to study robustness of a class of wide-area control laws for phase incoherence in presence of exogenous noise and communication time-delay. Our results suggest that time-delay and erroneous measurements can severely impact and increase the systemic risk of phase incoherence, which potentially can steer the network to undesirable modes of operation. Our analysis relies on stability results and the evaluation of implicit spectral functions. We fully characterize the stability margin and establish delay-and-noise induced fundamental limits of a class of WAC laws that show there exists a trade-off between the best achievable levels of risk of phase incoherence and performance. 


The main body of the computational load in our setup was to calculate eigen-decomposition of the underlying Laplacian matrix. Thus, we do not need to calculate the spectral functions for every individual mode, instead we have to evaluate the functions over the Laplacian eigen-spectrum. This dramatically simplifies risk analysis and reduces the time complexity to that of  decomposing a $n \times n$ symmetric matrix, which is $\mathcal O(n^3)$.

Assumptions \ref{assum: commute} and \ref{assum: delay} have facilitated our analysis to obtain several analytical formulae. Both lumped time-delay $\tau>0$ and simultaneous diagonalization helped us fully classify the stability region of the resulting  closed-loop system as well as deriving closed-form expressions for the risk measure, although with excessively complicated formulae. One can relax either or both conditions and explore generalized setups within the context of systemic risk by following the same analysis steps outlined in this work. The price to pay is that it will be impossible to obtain  closed-form and the exact representation of stability regions and risk expressions. General feedback matrices and heterogeneous  time-delays typically are handled with standard textbook techniques  \cite{gu2003stability} that provide sufficient and conservative conditions. Then, the risk measure can probably be approximated with an upper bound. Despite this drawback, generalized universal limitations and trade-offs can still be derived along the lines described in the paper.

\appendix \label{appendix}

\subsection{Proofs}\label{ap: proofs}
\subsubsection{Theorem \ref{thm: stability}}
Given the structure of $\mathbb S$ regarding the phase of generators, we explore conditions for convergence of the zero-input system \eqref{eq: model} with respect to $(1_n \, \rho,0)$ for some $\rho\in \mathbb R$. Based on Assumptions \ref{assum: uniformity} - \ref{assum: commute}, it is convenient to work with the transformed coordinates 
$$
\vartheta_t : = Q^T \theta_t ~~~\text{and}~~~ \varpi_t : =Q^T \omega_t.
$$ 
The desirable type of convergence is equivalent to $\vartheta^{(j)}_t\rightarrow 0$ for $j>1$ and $\vartheta^{(1)}_t\rightarrow \rho$, while $\varpi_t^{(j)} \rightarrow 0$ for all $j\geq 1$. The transformed dynamics
\begin{equation*}
\begin{split}
\begin{bmatrix}
\dot{\vartheta}_t \\ \dot{\varpi}_t
\end{bmatrix}&=\begin{bmatrix}
O_n & I_n \\
-\Lambda_L & -\Lambda_D
\end{bmatrix} \begin{bmatrix}
\vartheta_t \\
\varpi_t
\end{bmatrix}+\begin{bmatrix}
O_n & O_n \\
-\Lambda_{M} & -\Lambda_{K}
\end{bmatrix} \begin{bmatrix}
\vartheta_{t-\tau} \\
\varpi_{t-\tau}
\end{bmatrix}
\end{split}
\end{equation*} are fully decomposed and facilitate separated analysis for $j=1$ and $j>1$. By linearity, we seek solutions of the form $e^{\eta t}$ with $\eta=\eta_\tau \in \mathbb C$. Condition 
$\mathrm{Re}\{\eta\}=0$ signifies the onset of marginal stability and, in most cases, instability of the zero solution. Locating such solutions is accomplished with the study of the characteristic equation $c_j(\eta)=0$ of the  $j^{th}$ pair $\big(\vartheta_t^{(j)},\varpi_t^{(j)}\big)$, that reads 
\begin{equation}\label{eq: char} 
	c_j(\eta)=\eta^2 + d \eta + \kappa_j \, \eta \, e^{-\eta \tau } + \lambda_j+\mu_j \, e^{-\eta \tau}.
\end{equation} 

Analytical results on solutions of $c_j(\eta)=0$ for $\tau \geq 0$ can be found in standard textbooks of the subject, such as \cite{kuang1993delay}, from where we draw results to establish our proof. Heuristically speaking, $c_j(\eta)=0$ may have zero, one, or two distinct imaginary roots. The stability of the time-delayed system dynamics is then associated with the stability properties for $\tau=0$. In the case of no imaginary roots for $c_j(\eta)=0$, the stability properties of decomposed sub-system are retained for all time-delays, e.g., if it is unstable at $\tau=0$, the system will remain unstable for all $\tau \geq 0$. When $c_j(\eta)$ attains precisely one imaginary roots, the instability at $\tau=0$ will be retained for all $\tau \geq 0$. Stability at $\tau=0$ will switch only once at a critical time-delay, beyond which the system is unstable. For the two distinct imaginary roots of $c_j(\eta)=0$, there will be a finite number of switches between instability and stability before the termination into the final instability .
For more details, we refer to Theorem 3.1 in Sec 3.3  of  \cite{kuang1993delay}. Since the requirements for $j=1$ differ from $j>1$, we distinguish between the two cases. Finally, all results will be expressed after suppressing time-delay $\tau>0$ into the rest parameters for convenience
\begin{equation}\label{eq: transform} 
	d \leftrightarrow d\tau,~~\lambda_j \leftrightarrow \lambda_j\tau^2,~~\mu_j \leftrightarrow \mu_j \tau^2,~~\kappa_j \leftrightarrow \kappa_j \tau.
\end{equation}

\vspace{2mm}

\noindent{\underline{Case $j=1$:}} Here we have $\lambda_1=0$. For $\mu_1=0$ as well, the sub-system $j=1$ becomes decoupled in dynamics of $\varpi^{(1)}_t$:
$$\frac{d}{dt}\vartheta_t^{(1)}=\varpi^{(1)}_t~~~~~\text{and}~~~~~\frac{d}{dt}\varpi_t^{(1)}=-d\varpi^{(1)}_t-\kappa_1 \varpi_{t-\tau}^{(1)} $$
The convergence of $\vartheta_t^{(1)}$ in $\mathbb R$ is associated to the asymptotic stability of $\varpi^{(1)}_t$ to zero. This is achieved if and only if $-d<\kappa_1<d$, or $|\kappa_1|>d$ with $\tau<\overline{\tau}_{s;k}$, for critical $\overline{\tau}_{s;k}$ as in \eqref{eq: bound1}. In addition, for $\kappa_1=0$, we can completely solve the system and calculate the converging point $\rho$ as in the statement of Theorem \ref{thm: stability}. Inducing transformation \eqref{eq: transform}, the stability region is equivalent to the set $\mathbb W_0(s;k)$, evaluated at $(d\tau, 0~;~0 , \kappa_1 \tau)$.

When $\mu_1\neq 0$, $c_1(\eta)=0$ can have only one imaginary root (see the proof of Theorem 3.1 in \cite{kuang1993delay}). The stability at $\tau=0$ implies the stability for the range of $\tau\in \Big[0,\tau_{s;k}^{0,+}\Big)$, where $\tau_{s;k}^{0,+}$ as in \eqref{eq: bound2}. After time-delay suppression with the new coordinates \eqref{eq: transform}, the range of parameters is covered by $\mathbb W_2(s;k)$, which is evaluated at $(d\tau, 0 ~;~ \mu_1\tau^2, \kappa_1 \tau)$.

\vspace{2mm}

\noindent{\underline{Case $j>1$:}} For projections of dynamics over the disagreement space, we strictly require the asymptotic stability. The first step is to verify this for $\tau=0$. Applying the Hurwitz criterion, the stability holds if and only if \cite{kuang1993delay}
\begin{equation}\label{eq: stabilitytau=0}
	\kappa_j+d>0 ~~\text{and}~~\lambda_j+\mu_j>0.
\end{equation}
\noindent $I.$ The equation $c_j(\eta)=0$ attains no imaginary roots if and only if $\lambda_j^2>\kappa_j^2$ and $\kappa_j^2+2\lambda_j-d^2\leq 2\sqrt{\lambda_j^2-\mu_j^2}$. The stability is equivalent to the transformed coordinates \eqref{eq: transform}, which belong to the set $\mathbb W_1(s;k)$. 

\noindent $II.$  The equation $c_j(\eta)=0$ attains exactly one imaginary root if and only if $\lambda_j^2\leq \mu_j^2$. The asymptotic stability is guaranteed for a finite range of time-delay $\tau\geq 0$ if and only if the condition \eqref{eq: stabilitytau=0} holds and the maximum allowed time-delay is $\tau^{0,+}_{s;k}$ as  defined in \eqref{eq: bound2}. Implementing \eqref{eq: transform}, the parameter region that fulfills this condition belongs to $\mathbb W_2(s;k)$.
 
 

\noindent $III.$ The equation $c_j(\eta) = 0$ attains two distinct roots with the initial stability at $\tau=0$. As time-delay is increasing, there will be a finite number of stability switches before the eventual instability. The stability region for time-delay $\tau$ is determined with set $\mathfrak T_{k;d}$ as in \eqref{eq: tauset}. The rest of parameter values ought to be members of $\mathbb W_2(s;k)$, see Theorem 3.1 in  \cite{kuang1993delay}. \hfill$\blacksquare$

\begin{table*}\centering
    \begin{minipage}{1.0\textwidth} 
    \hrule
    \begin{align*}
    \mathbb W_0(s;k)&=\bigg\{s\in \mathbb R_+^2,~k\in \mathbb R^2~:~s_2=k_1=0,~\big\{ ~ |k_2|<s_1 ~ \} \cup \big\{~  k_2 >s_1,~ \sqrt{k_2^2-s_1^2}<\text{arccot}\big(-s_1 / \sqrt{k_2^2-s_1^2} \big)~ \big\} \bigg\} \\
    \mathbb W_1(s;k)&=\bigg\{s\in \mathbb R_+^2,~k\in \mathbb R^2~:~s_2^2>k_1^2,~k_2+s_1>0,~k_1+s_2>0,~k_2^2+2s_2-s_1^2\leq 2\sqrt{s_2^2-k_1^2} \bigg\} \\
    \mathbb W_2(s;k)&=\bigg\{s\in \mathbb R_+^2,~k\in \mathbb R^2~:~s_2^2\leq k_1^2,~ k_2+s_1>0,~ k_1+s_2>0,~\gamma_+(s;k) < \varphi_+(s;k) \bigg\} \\
    \mathbb W_3(s;k)&=\bigg\{s\in \mathbb R_+^2,~k\in \mathbb R^2~:~s_2^2> k_1^2,~ k_2+s_1>0,~ k_1+s_2>0,~k_2^2+2s_2-s_1^2> 2\sqrt{s_2^2-k_1^2},~ \big(\gamma_{\pm}(s;k),\varphi_{\pm}(s;k)\big)\in \mathfrak{I}_{s;k}\bigg\}
    \end{align*}  
    \hrule 
    \end{minipage}
    \centering 
    \caption{Parameter Sets of Stability Area}
    \label{table: stability}
\end{table*}

\subsubsection{Theorem \ref{thm: statistics}}
We write 
$$
y_t = C \begin{bmatrix}
\theta_t \\
\omega_t
\end{bmatrix}= \begin{bmatrix}
B_n & O_{r\times n}
\end{bmatrix} \begin{bmatrix}
\theta_t \\
\omega_t
\end{bmatrix},
$$ for $r=\frac{n(n-1)}{2}$. Vector $(\theta_t,\omega_t)^T$ is normally distributed and so will $y_t$, although with different first and second moments. In particular, the mean vector of $y_t$ will be $C\,T_t$ and the covariance matrix will be $$  \Sigma_t =  \int_{0}^{t} C \,\Phi(s) \, \mathcal H \mathcal H^T \, \Phi^T(s)\, C^T \,ds $$ 
%
The transition matrix $\Phi(t)$ of system \eqref{eq: model} can be written in terms of decomposed transition matrix $\tilde{\Phi}(t)$ as follows $$\Phi(t)=\begin{bmatrix}
Q & O_n \\ O_n & Q
\end{bmatrix} \tilde{\Phi}(t) = \begin{bmatrix}
Q & O_n \\ O_n & Q
\end{bmatrix} \begin{bmatrix}
\tilde{\Phi}_{\vartheta\vartheta}(t) & \tilde{\Phi}_{\vartheta\varpi}(t) \\ \tilde{\Phi}_{\varpi\vartheta}(t) & \tilde{\Phi}_{\varpi\varpi}(t)
\end{bmatrix}. $$ Here the $(i,j)$-th element of matrix block $\tilde{\Phi}_{\gamma \zeta}$ represents the response of the $i$-th element of $\gamma$ state sub-vector, to excitation implemented to the $j$-th element of $\zeta$ state sub-vector. It is not hard to check that $C\,T_t$ converges to zero. Indeed, $$C\,T_t = \int_{-\tau}^0 C \Phi(t-s)\begin{bmatrix}
\phi^{\theta}(s) \\ \phi^{\omega}(s)
\end{bmatrix}\,d\mu(s)~~~~\text{and}$$\begin{equation*} \begin{split} C \Phi(s)&= \begin{bmatrix}
B_n Q & O_{r\times n}
\end{bmatrix} \begin{bmatrix}
\tilde{\Phi}_{\vartheta\vartheta}(s) & \tilde{\Phi}_{\vartheta\varpi}(s) \\ \tilde{\Phi}_{\varpi\vartheta}(s) & \tilde{\Phi}_{\varpi\varpi}(s)
\end{bmatrix} \begin{bmatrix}
Q^T & O_n \\ O_n & Q^T
\end{bmatrix}   \\ & = \begin{bmatrix}
B_n\, Q \, \tilde{\Phi}_{\vartheta\vartheta}(s)\,Q^T &  B_n \, Q\, \tilde{\Phi}_{\vartheta\varpi}(s)\,Q^T
\end{bmatrix}\end{split} 
\end{equation*} 
Now, since $B_n 1_n=0_{r}$, $B_n Q$ is an $r\times n$ matrix with zero first column. This will, in turn, nullify the first element of diagonal sub-matrices $\tilde{\Phi}_{\vartheta\vartheta}$ and $\tilde{\Phi}_{\vartheta\varpi}$, which refer to possibly marginally stable projections of $\theta_t$ and $\omega_t$. The remaining dynamics are, in the view of Theorem \ref{thm: stability} exponentially stable, validating the claim on $C\, T_t$. Consequently, the non-constant elements of $\Sigma_t$ are also exponentially stable, yielding integrable integrands and well-posedness of covariance matrix as $t\rightarrow +\infty$. Now, $$\mathcal H \mathcal H^T=\begin{bmatrix}
O_n & O_n \\ O_n & H H^T
\end{bmatrix}=\begin{bmatrix}
O_n & O_n \\ O_n & \frac{\eta^2}{J^{2}}I_n+\eta'^2(MM^T+KK^T)
\end{bmatrix}$$ the integrand of $\Sigma_t$ equals 
\begin{equation*}
	\begin{split} 
		&B_n\, Q \, \tilde{\Phi}_{\vartheta \varpi}(s) \, Q^T\,  H  H^T\, Q \, \tilde{\Phi}_{\vartheta\varpi}(s)\, Q^T B_n^T  = \\ 
		&=\text{diag}\bigg\{\frac{\eta^2_E}{J^2}+\eta'^2\big(\mu_j^2+\kappa_j^2\big)\bigg\} B_n\, Q \, \tilde{\Phi}_{\vartheta \varpi}^2(s)\, Q^T B_n^T.   
	\end{split} 
\end{equation*} 
Using the similar argumentation, the first element of diagonal matrix has no contribution as it is nullified by $B_n Q$. Eventually, $\lim_t \Sigma_t$ is equal to 
$$
\text{diag}\bigg\{\frac{\eta^2_E}{J^2}+\eta'^2\big(\mu_j^2+\kappa_j^2\big)\bigg\} \int_{0}^{\infty}  B_n \, Q\, \tilde{\Phi}^2_{\vartheta\varpi}(s)\,Q^T\, B_n^T \,ds. 
$$  
Now, let us note that $\tilde{\Phi}_{\vartheta\varpi}(t)=\text{diag}\{\tilde{\Phi}_{\vartheta\varpi}^{(jj)}(t)\}$ is a transition matrix of the decomposed dynamics. In addition to being a diagonal matrix, all but the first non-zero elements are exponentially stable. From Parseval's theorem
\begin{equation*}
\int_0^\infty\big[\tilde{\Phi}_{\vartheta\varpi}^{(jj)}(t)\big]^2\,dt=\frac{1}{2\pi}\int_{\mathbb R}\frac{d\gamma}{\|c_j(i\gamma)\|^2},
\end{equation*} 
for $c_j(\eta)$ is the characteristic equation of the $j$-th decomposed sub-system from \eqref{eq: char} and $i^2=-1$. Tedious algebra with the transformation \eqref{eq: transform} yields $$\int_0^\infty\big[\tilde{\Phi}_{\vartheta\varpi}^{(jj)}(t)\big]^2\,dt = \frac{\tau^{3}}{2\pi } f\big(d\tau, \lambda_j\tau^2 ; \mu_j\tau^2, \kappa_j \tau \big),$$ for $j=2,\dots,n$  and function $f(s;k)$ is defined in \eqref{eq: f}. Then the result follows with simple algebra. \hfill$\blacksquare$

\subsubsection{Theorem \ref{thm: risk}}
Let us denote by $\overline{y}^{(i,j)}$ the element of $\lim_t y_t$ that describes the steady-state phase difference between the $i^{th}$ and $j^{th}$ generators. By virtue of Theorem \ref{thm: statistics}, $\overline{y}^{(i,j)}$
is normally distributed with zero mean and variance 
\begin{align*}
	\sigma_{ij}^2 & =\frac{1}{2\pi}  \sum_{k=2}^n (e_i-e_j)^TB_n Q \,\text{diag}\{\mathfrak f_k\}\, Q^T B_n^T (e_i-e_j) \\
	& = \frac{1}{2\pi}  \sum_{k=2}^n \big(q_{ik}-q_{jk} \big)^2 \,\mathfrak f_k.
\end{align*}
We calculate the risk of phase incoherence at $\overline{y}^{(i,j)}$ concerning $\{U_\delta\}_{\delta >0}$ as follows: Given $\varepsilon\in (0,1)$ and $\delta>0$, we write 
\begin{equation*}
	\mathbb P\big(~\big|\overline{y}^{(i,j)}\big|\in U_\delta~\big)<\varepsilon
	~\Leftrightarrow~ 
	\mathbb P \left( ~\zeta~ \frac{1+\delta}{c+\delta} < |\,\overline{y}^{(i,j)}|~ \right)<\varepsilon,
\end{equation*} 
which is equivalent to 
$$ 
\mathbb P\left(-\zeta~\frac{1+\delta}{c+\delta}<\overline{y}^{(i,j)}< \zeta~\frac{1+\delta}{c+\delta}~ \right) \geq 1-\varepsilon. 
$$ 
Clearly, $\mathcal R^{ij}=\mathcal R(\overline{y}^{ij})$ is equal to $\delta>0$ where $$\inf\bigg\{\delta>0 ~:~ \frac{1}{\sqrt{2\pi}\sigma_{ij}}\int_{-\zeta\frac{1+\delta}{c+\delta}}^{\zeta\frac{1+\delta}{c+\delta}}e^{-\frac{t^2}{2\sigma_{ij}^2}}\,dt>1-\varepsilon \bigg\}, $$ equivalently, 
\begin{equation}\label{eq: riskcond}
	\inf\bigg\{\delta>0 ~:~ \frac{1}{\sqrt{2\pi}}\int_{-\frac{\zeta}{\sigma_{ij}}\frac{1+\delta}{c+\delta}}^{\frac{\zeta}{\sigma_{ij}}\frac{1+\delta}{c+\delta}}e^{-\frac{t^2}{2}}\,dt>1-\varepsilon \bigg\}. 
\end{equation} 
Let us denote by $\nu_{\varepsilon}>0$ the solution of $$\int_{-\nu_\varepsilon}^{\nu_{\varepsilon}}e^{-t^2/2}\,dt=\sqrt{2\pi}(1-\varepsilon),$$ a simple monotonicity argument suffices to explain the three branches of $\mathcal R^{(i,j)}$:

\noindent If $ \frac{\zeta}{\sigma_{ij}}\frac{1}{c}\geq \nu_{\varepsilon}$, then clearly the infimum $\delta>0$ that satisfies \eqref{eq: riskcond} is $\delta=0$. 

\noindent If $ \frac{\zeta}{\sigma_{ij}}\leq \nu_{\varepsilon}$, we have the other extreme, where no finite value of $\delta$ satisfies \eqref{eq: riskcond}. In this case, we have $\delta=+\infty$.

\noindent If $\frac{\zeta}{\sigma_{ij}}\frac{1}{c} < \nu_{\varepsilon}< \frac{\zeta}{\sigma_{ij}}$ there is unique $\delta^*>0$ such that $$\frac{\zeta}{\sigma_{ij}}\frac{1+\delta^*}{c+\delta^*}=\nu_\varepsilon.$$ The result follows after solving for $\delta^*$.  \hfill$\blacksquare$

\subsubsection{Proposition \ref{prop: limitsigma}}
Recall from properties of function $f(s;k)\in \cup_{l=1}^3 \mathbb W_l$ discussed in Appendix \ref{app: spectralfunctions}, which $f(s;k)$ is positive and finite. In addition, $f$ diverges on the boundary of $\bigcup_{l=1}^3\mathbb W_{l}$. Consequently, $f$ must attain a global minimum in the interior of $\bigcup_{l=1}^3\mathbb W_{l}$. With a little abuse of notation, the global minimum in the interior of $\bigcup_{l=1}^3\mathbb W_{l}$ will also reveal $\mathfrak f_l(k_1,k_2):=\mathfrak f(d\tau,\lambda_l\tau~;~k_1\tau^2,k_2\tau)$, as well as every finite (and positive) weighted sum of functions $\{\mathfrak f_l(k_1,k_2)\}_{l=2}^n$. Consequently, $\sigma_{ij}^2$ will attain a minimum over all $i\neq j$.\hfill$\blacksquare$

\subsubsection{Theorem \ref{thm: trade-off}}
Following the argument in \cite{somarakisnader_tac_1} and \cite{8884747}, one can calculate a common limit for the product of the systemic risk and the effective resistance. The trade-off equation \eqref{eq: tradeoff} can be derived similarly to risk/connectivity trade-off conditions detailed in \cite{somarakisnader_tac_1} and  \cite{8884747} for the first- or second-order consensus dynamics. \hfill$\blacksquare$

\subsection{Construction of sets $\mathbb W_i$,~$i=0,1,2,3$.}\label{appendix1} 
For the exposition of Theorem \ref{thm: stability}, we will define the sets that constitute the stability area of dynamics. To this end it is helpful to introduce notation $(s;k)$, for $s=(s_1,s_2)\in \mathbb R^2$ and $k=(k_1,k_2)\in \mathbb R^2$. Such distinction assists in separating arguments into $s_1,s_2$ (i.e. the systemic quantities in power system) and $k_1,k_2$ (i.e. the control quantities).
The first cut-off limit to be defined is
\begin{equation}\label{eq: bound1}
	\overline{\tau}_{s;k}=\frac{1}{\sqrt{k_2^2-s_1^2}}\text{arccot}\bigg(\frac{-s_1}{\sqrt{k_2^2-s_1^2}} \bigg),
\end{equation} 
whenever $k_2^2>s_1^2$. Next, for $\Delta_{s;k}=k_2^2+2s_2-s_1^2$, define $\gamma_{\pm}^2=\gamma_{\pm}^2(s;k)$ as  $$\gamma_{\pm}^2=\frac{1}{2}\bigg\{\Delta_{s;k} \pm \sqrt{\Delta_{s;k}^2-4(s_2^2-k_1^2)}  \bigg\}.$$ If $s_2^2\leq k_1^2$ there is only one positive solution $\gamma_{+}>0$. Otherwise if $s_2^2 > k_1^2$, there are two positive solutions $\gamma_+>\gamma_->0$. Whenever solutions are defined, the next quantity to be considered are the angles $\varphi_{\pm}=\varphi_{\pm}(s;k)\in [0,2\pi)$ via trigonometric numbers $$\cos(\varphi_{\pm})=-\frac{s_1 k_2 \gamma_{\pm}^2+k_1\big(s_2-\gamma_{\pm}^2\big)}{k_2^2 \gamma_\pm^2+k_1^2}, $$ $$ \sin(\varphi_{\pm})=\frac{s_1 k_1\gamma_{\pm}-k_2\gamma_\pm\big(s_2-\gamma_\pm^2\big)}{k_2^2 \gamma_\pm^2+k_1^2}. $$ 
Next, the critical cut-offs 
\begin{equation}\label{eq: bound2}
\tau_{+}^{(l)}=\frac{\varphi_{+}+2l\pi}{\gamma_{+}},~~~\tau_{-}^{(l)}=\frac{\varphi_{-}+2l\pi}{\gamma_{-}}, 
\end{equation} for $l=0,1,2,\dots$, are sorted as \begin{equation*}
\begin{split} 
0<&\tau_{+}^{(0)}<\tau_{-}^{(0)}<\tau_{+}^{(1)}<\tau_{-}^{(1)}<\dots \\ 
&\dots<\tau_{+}^{(l^*-1)}<\tau_{-}^{(l^*-1)}<\tau_{+}^{(l^*)}<\tau_{-}^{(l^*)}<\tau^{(l^*+1)}_+ 
\end{split}
\end{equation*} for the minimum positive integer $l^*$ that satisfies 
$$ \frac{\gamma_-(\varphi_++2\pi)-\varphi_- \gamma_+}{2\pi(\gamma_+-\gamma_-)}<l^*<\frac{(2\pi-\varphi_-)\gamma_+ +\varphi_+\gamma_-}{2\pi\big(\gamma_+ -\gamma_-\big)}.$$ 
The assortment of critical cut-offs signifies the transition of system \eqref{eq: linearizedwithfeedback} from stability to instability and back to stability. Under additional assumptions (explained in Theorem \ref{thm: stability}), the system is asymptotically stable for $\tau\in[0,\tau_+^{(0)})$, unstable for $\tau\in\big(\tau_+^{(0)},\tau_{-}^{(0)}\big)$, stable for $\tau\in \big(\tau_{-}^{(0)},\tau_{+}^{(1)}\big)$ and so forth, up until the last region of stability, that is when $\tau\in \big(\tau_{-}^{(l^*-1)},\tau_{+}^{(l^*)}\big)$ (see p. 77 of \cite{kuang1993delay}). The union of all stability regions constitute the set $\mathfrak{T}=\mathfrak{T}(s;k)$
\begin{equation}\label{eq: tauset}
\mathfrak{T}=\big[0,\tau_{}^{(0)}\big)\cup\bigg\{\bigcup_{l=1}^{l^*-1}\big(\tau_{-}^{(l)},\tau_{+}^{(l+1)}\big) \bigg\}.
\end{equation}
After implementing the transformation \eqref{eq: transform}\footnote{Here we have generic $s_2, k_1, k_2$ instead of $\lambda_j,\mu_j,\kappa_j$.}, the set \eqref{eq: tauset} is, in turn, transformed to set $\mathfrak{I}=\mathfrak{I}_{s;k}$  \begin{equation*}
\begin{split}
&\mathfrak{I}=\big\{\gamma_{+}<\varphi_{+}\big\}\bigcup  \\ 
&\bigg\{\bigcup_{l=1}^{l^*} \big\{\gamma_->\varphi_- + 2(l-1)\pi \big\}~ \cap ~\big\{ \gamma_+<\varphi_++2l\pi\big\} \bigg\}. 
\end{split}
\end{equation*} 
Based on the aforementioned quantities, we can proceed to define the sets $\mathbb W_i(s;k)\subset \mathbb R^{2}\times \mathbb R^{2}_+$  as in Table I.

\begin{prop}
Let $s=(s_1,s_2)\in \mathbb R_+^2$ be arbitrary but fixed. Union $\bigcup_{j=1}^3 \overline{\mathbb W}_{j}$ parametrized by $k=(k_1,k_2)\in \mathbb R^2$ contains the origin, it is compact and connected.
\end{prop}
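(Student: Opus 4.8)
The plan is to verify the three claims—membership of the origin, compactness, and connectedness—in that order, with essentially all the difficulty in the last. Fix $s=(s_1,s_2)$ with $s_1,s_2>0$ and view $\mathbb W_1,\mathbb W_2,\mathbb W_3$ as subsets of the $k=(k_1,k_2)$-plane. Membership of the origin is a direct check: substituting $k=(0,0)$ into the relations defining $\mathbb W_1$ in Table~\ref{table: stability} gives $s_2^2>0$, $s_1>0$, $s_2>0$, and $2s_2-s_1^2\le 2s_2$, all of which hold (the last strictly because $s_1>0$), so $(0,0)$ lies in the interior of $\mathbb W_1$ and hence in $\bigcup_{j=1}^3\overline{\mathbb W_j}$.

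For compactness only boundedness needs argument, closedness being automatic for a finite union of closed sets. On $\mathbb W_1$ one reads off $|k_1|<s_2$ and, from $k_2^2+2s_2-s_1^2\le 2\sqrt{s_2^2-k_1^2}\le 2s_2$, also $k_2^2\le s_1^2$. On $\mathbb W_2\cup\mathbb W_3$ I would use that $\varphi_\pm(s;k)\in[0,2\pi)$ by construction, so the membership conditions—$\gamma_+<\varphi_+$ on $\mathbb W_2$, and $(\gamma_\pm,\varphi_\pm)\in\mathfrak I_{s;k}$ on $\mathbb W_3$, the latter forcing $\gamma_+<\varphi_++2l^*\pi$ with $l^*$ as in \eqref{eq: tauset}—bound $\gamma_+(s;k)$ from above, once one checks (from $\gamma_+^2\gamma_-^2=s_2^2-k_1^2$ and the explicit inequality defining $l^*$) that $l^*$ remains bounded for fixed $s$. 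Then the identity $k_1^2=\gamma_+^4-\gamma_+^2(k_2^2+2s_2-s_1^2)+s_2^2$, obtained by squaring the definition of $\gamma_+^2$, together with the trivial bound $\gamma_+^2\ge\frac{1}{2}\Delta_{s;k}=\frac{1}{2}(k_2^2+2s_2-s_1^2)$, shows that an upper bound on $\gamma_+$ caps $k_2$ and in turn $k_1$. Hence $\bigcup_{j}\overline{\mathbb W_j}$ lies in a ball and is compact.

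Connectedness is the crux, and my plan is to connect every point of the set, by a path inside it, to the subset $\overline{\mathbb W_1}$. First I would show $\mathbb W_1$ is convex: on $\{|k_1|<s_2\}$ the map $k_1\mapsto s_1^2-2s_2+2\sqrt{s_2^2-k_1^2}$ is concave, so on the subinterval where it is nonnegative its square root is concave as well, whence $\{\,|k_2|\le\sqrt{\ \cdot\ }\,\}$ is an intersection of two convex sets; intersecting with the half-planes $k_2+s_1>0$ and $k_1+s_2>0$ preserves convexity, so $\overline{\mathbb W_1}$ is convex, hence connected, and contains the origin. Then I would connect an arbitrary $k\in\mathbb W_2$ (necessarily with $k_1\ge s_2$) to $\overline{\mathbb W_1}$ by sliding $k_1$ downward with $k_2$ fixed toward the common boundary $\{k_1=s_2\}=\partial\mathbb W_1\cap\partial\mathbb W_2$, and an arbitrary $k\in\mathbb W_3$ (with $|k_1|<s_2$) by sliding $k_2$ with $k_1$ fixed toward $\{k_2^2+2s_2-s_1^2=2\sqrt{s_2^2-k_1^2}\}=\partial\mathbb W_1\cap\partial\mathbb W_3$. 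That these transition surfaces lie in the closure of the stability set on either side is built into the proof of Theorem~\ref{thm: stability} at the instant where $c_j(\eta)$ in \eqref{eq: char}—after the rescaling \eqref{eq: transform}—acquires its first, resp.\ second, imaginary root without loss of stability. The step I expect to be the genuine obstacle is that these coordinate-aligned segments remain inside $\bigcup_j\mathbb W_j$ all the way to the surface—equivalently, that a slice of the stability region in the $k_1$- or $k_2$-direction is an interval. I would attack this via the crossing conditions given by the real and imaginary parts of $c_j(i\gamma)=0$,
\begin{equation*}
 -\gamma^2+s_2+k_2\gamma\sin\gamma+k_1\cos\gamma=0,\qquad s_1\gamma+k_2\gamma\cos\gamma-k_1\sin\gamma=0,
\end{equation*}
ruling out the creation of a further imaginary root transversal to the path before the surface is reached; since $\mathbb W_3$ encodes several stability windows, this requires the root-counting bookkeeping behind Theorem~3.1 of \cite{kuang1993delay}, so as to exclude a ``restabilization island'' lying off the path. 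Should the direct slicing prove unwieldy, the fallback is a D-decomposition argument: the finitely many analytic boundary arcs partition the $k$-plane into finitely many cells, on each of which the number of right-half-plane roots of \eqref{eq: char} is constant, $\bigcup_j\mathbb W_j$ is the union of the zero-count cells, and one shows—again from the explicit descriptions of $\mathbb W_1,\mathbb W_2,\mathbb W_3$ and their adjacency through the convex set $\overline{\mathbb W_1}$—that there is exactly one such cell, hence connected.
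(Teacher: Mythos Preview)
Your proposal is sound and, on connectedness, considerably more thorough than what the paper actually does. The origin check and the compactness argument match the paper's in spirit; the paper handles boundedness of $\overline{\mathbb W}_3$ slightly differently, arguing directly that as $k_2\to+\infty$ one has $\gamma_-\to 0$ while $\varphi_-$ tends to a value in $[\pi,\tfrac{3\pi}{2}]$, so the constraint $\gamma_->\varphi_-+2(l-1)\pi$ is eventually violated for every $l\ge 1$ (and $\gamma_+<\varphi_+$ fails since $\gamma_+\to\infty$). This avoids your detour through a uniform bound on $l^*$ and the algebraic identity for $k_1^2$, but both routes land in the same place.

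The real divergence is in connectedness. The paper does not attempt anything like your path-slicing program or a D-decomposition; it simply records that $\overline{\mathbb W}_1\cap\overline{\mathbb W}_3\neq\emptyset$ (the shared curve $k_2^2+2s_2-s_1^2=2\sqrt{s_2^2-k_1^2}$) and that $\overline{\mathbb W}_1\cap\overline{\mathbb W}_2\neq\emptyset$ (at $k_1=s_2$, where $\gamma_+=0$), and stops there. In other words, the paper treats ``the three closed pieces touch through $\overline{\mathbb W}_1$'' as the whole argument, implicitly taking for granted that each $\overline{\mathbb W}_j$ is itself connected. Your convexity proof for $\overline{\mathbb W}_1$ already supplies more than the paper offers; the coordinate-slice monotonicity you flag as ``the genuine obstacle'' is a question the paper never confronts. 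So your approach is genuinely different and more rigorous, at the cost of substantially more work; the paper buys brevity by leaving the per-piece connectedness of $\overline{\mathbb W}_2$ and $\overline{\mathbb W}_3$ unargued.
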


\begin{proof} First, one can observe that the origin belongs to $\mathbb W_1$. 

\noindent{\underline{Compactness}}: It is straightforward to show that given $s_1,s_2>0$, the set $\overline{\mathbb W}_1$ is compact: $| k_1| \leq s_2$ and $|k_2|\leq \sqrt{s_1^2-2s_2+2\sqrt{k_2^2-s_1^2}} $. Set $\overline{\mathbb W}_2$ is also compact : $k_2\geq -s_1$ $k_1\geq -s_2$ and in the view of monotonicity of $\gamma_+(k_1,k_2)$, condition $\gamma_+\leq \varphi_+\in [0,2\pi]$ provides a sharp bound that is always achieved when $k_1$ and/or $k_2$ are large enough. Finally, the compactness of $\overline{\mathbb W}_3$ is proved using similar arguments: $|k_1|\leq s_2$ and $k_2\geq -s_1$. Now, as $k_2$ increases, $\gamma_+$ increases to $+\infty$, and $\gamma_->0$ vanishes monotonically. Elementary analysis reveals that for a fixed $k_1$, $\lim_{k_2\rightarrow +\infty}\varphi_{-}(k_1,k_2)\in [\pi, \frac{3\pi}{2}]$. Consequently, for any such fixed $k_1$ with $(k_1,k_2)~\in~\mathbb W_3$, the boundary $\gamma_-=\varphi_- + 2(l-1)\pi$ is achieved for $l\geq 1$ and finite $k_2$.

\noindent{\underline{Connectedness}}: Clearly, $\overline{\mathbb W}_1 \cap \overline{\mathbb W}_3 \neq \emptyset$. This concerns all points in $\mathbb W_1$ on the curve $k_2^2+2s_2-s_1^2=2\sqrt{s_2^2-k_1^2}$ that satisfies the last condition of $\mathbb W_3$. Similarly, $\overline{\mathbb W}_1 \cap \overline{\mathbb W}_2\neq \emptyset$. This follows after setting $k_1=s_2$, which necessarily yields $\gamma_+(s_2,k_2)=0$ for $(s_2,k_2)\in \overline{\mathbb W}_1$. 

\end{proof}

\subsection{The Spectral Function $f(s;k)$ }\label{app: spectralfunctions}
Function $f$, defined in \eqref{eq: f} in the form of improper integral, plays a central role in this work. Its arguments are presented as $(s;k)$, where  $s=(s_1,s_2)^T \in \mathbb R_+^2$ correspond to the fixed parameters of nominal system and $k=(k_1,k_2)^T\in \mathbb R^2$ corresponds to controllable parameters. Its domain is the union of sets  $\mathbb W_i(s;k)\subset \mathbb R^2\times \mathbb R_+^2$ and its range is $\mathbb R_+$.

\begin{prop} Function $f$ attains the properties:
\begin{enumerate}
\item For fixed $s\in \mathbb R_+^2$, $(s;k)\in \text{dom} f $ , $f(s;k)\geq\underline{f}>0$.
\item For a increasing sequence $\big\{s_1^{(j)}\big\}_{j\geq 1}$ with $\lim_{j} s_1^{(j)}\rightarrow + \infty$, and $(s_1^{(j)},s_2;k)\in \bigcup_{l=1}^3 \overline{\mathbb W}_{l} $ for $j\geq 1$, we have
$\lim_{j} f(s_1^{(j)},s_2;k)= 0$

\item For a increasing sequence $\big\{s_2^{(j)}\big\}_{j\geq 1}$ with $\lim_{j} s_2^{(j)}\rightarrow + \infty$, and $(s_1,s_2^{(j)};k)\in \bigcup_{l=1}^3 \overline{\mathbb W}_{l} $ for $j\geq 1$, we have $\lim_{j} f(s_1,s_2^{(j)};k)= 0$.
\end{enumerate}
\end{prop}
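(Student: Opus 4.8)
The plan is to work throughout with the integral representation of $f$ that is produced in the proof of Theorem~\ref{thm: statistics}: after the substitution $\gamma=u/\tau$ one has $f(s;k)=\int_{\mathbb R}|P(u;s,k)|^{-2}\,du$ (up to the fixed positive constant in \eqref{eq: f}), where $P(u;s,k)=(s_{2}-u^{2})+\mathrm{i}s_{1}u+(k_{1}+\mathrm{i}k_{2}u)e^{-\mathrm{i}u}$, so that $\mathrm{Re}\,P=s_{2}-u^{2}+k_{1}\cos u+k_{2}u\sin u$ and $\mathrm{Im}\,P=u(s_{1}+k_{2}\cos u)-k_{1}\sin u$. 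Membership of $(s;k)$ in the stability region $\bigcup_{r}\mathbb W_{r}$ means precisely that the characteristic quasi-polynomial $c_{j}$ has no purely imaginary root, i.e.\ $|P(u;s,k)|>0$ for every $u$, so $f$ is finite there. Each of the three assertions then reduces to a pointwise lower bound for $|P|^{2}$ together with a splitting of $\mathbb R$ into a bounded ``resonant'' frequency window and its complement.

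For item (1), $f>0$ is automatic and $f$ is continuous on the interior of its domain by dominated convergence. At a boundary point $(s_{0},k_{0})$ the quasi-polynomial has an imaginary root $u_{0}$, and a first-order Taylor estimate gives $|P(u;s,k)|^{2}\le C\big((u-u_{0})^{2}+\|(s,k)-(s_{0},k_{0})\|^{2}\big)$ near $(u_{0},s_{0},k_{0})$; integrating over $|u-u_{0}|\le1$ shows $f(s;k)\to+\infty$ along any sequence approaching the boundary. Since by the preceding Proposition the admissible $k$ for fixed $s$ has compact closure, and $f$ is finite at $k=0\in\mathbb W_{1}$, a minimizing sequence for $f(s;\cdot)$ cannot converge to the boundary; hence its limit lies in the interior and, by continuity and positivity, $\underline f=\underline f(s)$ is attained and strictly positive.

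For items (2) and (3) the strategy is the same dominant-balance decomposition. If $s_{1}\to\infty$ with $s_{2},k$ fixed, then eventually $s_{1}>2|k_{2}|$, so $s_{1}+k_{2}\cos u\ge s_{1}/2$ and $|\mathrm{Im}\,P|\ge\frac12 s_{1}|u|-|k_{1}|$; also $\mathrm{Re}\,P\to s_{2}+k_{1}>0$ as $u\to0$ (the inequality $k_{1}+s_{2}>0$ is built into $\mathbb W_{1},\mathbb W_{2},\mathbb W_{3}$), and $|\mathrm{Re}\,P|\ge u^{2}/2$ for $|u|\ge\sqrt{s_{1}}$. Splitting $\mathbb R$ at $|u|=4|k_{1}|/s_{1}$ and $|u|=\sqrt{s_{1}}$ and bounding $|P|^{2}$ from below by $(\mathrm{Re}\,P)^{2}$ on the two outer pieces and by $(\mathrm{Im}\,P)^{2}$ on the middle piece, each contribution is $O(s_{1}^{-1})$, so $f\to0$. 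If $s_{2}\to\infty$ with $s_{1},k$ fixed, then $\mathrm{Re}\,P\ge\frac12 s_{2}$ on $|u|\le\frac12\sqrt{s_{2}}$ and $|\mathrm{Re}\,P|\ge u^{2}/2$ on $|u|\ge2\sqrt{s_{2}}$, so those two regions contribute $O(s_{2}^{-3/2})$; the residual band $\frac12\sqrt{s_{2}}\le|u|\le2\sqrt{s_{2}}$, where $u^{2}\approx s_{2}$ and $\mathrm{Re}\,P$ may be small, is controlled by $|\mathrm{Im}\,P|\gtrsim\sqrt{s_{2}}$ \emph{provided} $s_{1}>|k_{2}|$. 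This last inequality holds in the main case: expanding $2\sqrt{s_{2}^{2}-k_{1}^{2}}=2s_{2}-k_{1}^{2}/s_{2}+O(s_{2}^{-3})$ turns the defining inequality of $\mathbb W_{1}$ in Table~\ref{table: stability} into $k_{2}^{2}\le s_{1}^{2}-k_{1}^{2}/s_{2}+O(s_{2}^{-3})$, and $\mathbb W_{2}$ is empty once $s_{2}>|k_{1}|$; so for large $s_{2}$ either $|k_{2}|<s_{1}$ and $(s;k)\in\mathbb W_{1}$, in which case $|P|^{2}\gtrsim s_{2}$ on the band, its contribution is $O(s_{2}^{-1/2})$ and $f\to0$, or $|k_{2}|\ge s_{1}$ and $(s;k)\in\mathbb W_{3}$.

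The remaining regime --- $(s;k)\in\mathbb W_{3}$ with $|k_{2}|\ge s_{1}$ as $s_{2}\to\infty$ --- is the step I expect to be the main obstacle, because a characteristic root can then approach the imaginary axis at frequency $\asymp\sqrt{s_{2}}$, so the crude bounds above no longer keep $\inf_{u}|P|$ away from zero. The plan there is to localize the integral near each such near-root, to use that $|P'(u)|\gtrsim\sqrt{s_{2}}$ on the band (so the localized contribution is at most $O\big(1/(\sqrt{s_{2}}\,\inf_{u}|P|)\big)$), and then to extract a quantitative lower bound on $\inf_{u}|P|$ directly from the explicit stability-window membership $(\gamma_{\pm},\varphi_{\pm})\in\mathfrak{I}_{s;k}$ that defines $\mathbb W_{3}$ --- equivalently, from how far $\tau$ lies from the critical delays $\tau_{\pm}^{(l)}$ of \eqref{eq: bound2} --- showing it is large compared with $s_{2}^{-1/2}$. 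Making this precise, presumably by rescaling $u=\sqrt{s_{2}}\,v$ and tracking the fast phase $e^{-\mathrm{i}\sqrt{s_{2}}\,v}$ against the window condition, is the only genuinely delicate part of the argument; everywhere else the dominant-balance estimates suffice.
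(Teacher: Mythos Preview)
Your approach is correct in outline but diverges substantially from the paper's, and in one place is considerably more elaborate than what the paper actually does.

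For item (1), the paper does not argue via boundary blow-up and compactness at all. It simply bounds the denominator of \eqref{eq: f} from above by a piecewise polynomial --- $\beta_{1}r+\alpha_{0}$ on $(0,1)$ and $\beta_{2}r^{4}$ on $(1,\infty)$, with explicit $\alpha_{i},\beta_{i}$ depending on $(s;k)$ --- and integrates to get $f(s;k)>\frac{2}{\beta_{1}}\ln(1+\alpha_{0}/\beta_{1})+\frac{2}{3\beta_{2}}>0$. This is only a pointwise lower bound; the uniform minimum over $k$ that you establish (and that is needed for Proposition~\ref{prop: limitsigma}) is not proved here but is asserted later, in the proof of Proposition~\ref{prop: limitsigma}, via exactly the compactness-plus-boundary-divergence reasoning you give. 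So your argument for (1) is really the paper's argument for Proposition~\ref{prop: limitsigma}, not for this proposition.

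For item (2), the paper's route is shorter: it observes that $\partial g_{r}/\partial s_{1}<0$ once $s_{1}$ is large enough, so the integrands decrease monotonically to zero and dominated convergence finishes. Your three-region dominant-balance estimate works too, but the monotonicity shortcut avoids all the splitting.

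For item (3), the paper splits at a \emph{fixed} $M$ (depending only on $\varepsilon$), not at $\sqrt{s_{2}}$. On $[0,M]$ it bounds the denominator below by $(s_{2}+k_{1})^{2}$, giving a contribution $2M/(s_{2}+k_{1})^{2}\to0$; on $[M,\infty)$ it asserts $g(r)<r^{-4}$ and integrates. This sidesteps your resonant-band analysis entirely --- in particular, the paper makes no distinction between the $\mathbb W_{1}$ and $\mathbb W_{3}$ regimes and never invokes $\mathfrak{I}_{s;k}$. Your concern about $|k_{2}|\ge s_{1}$ is therefore not something the paper addresses; indeed, the paper's tail bound $g(r)<r^{-4}$ is not obviously valid near $r\approx\sqrt{s_{2}}$, where the denominator is of order $s_{2}(s_{1}-|k_{2}|)^{2}$ rather than $r^{4}\approx s_{2}^{2}$. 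So the delicate $\mathbb W_{3}$ step you flag as ``the main obstacle'' is real, and the paper's simpler argument glosses over precisely this point rather than resolving it. Your plan to quantify the distance from the stability boundary via the window condition in $\mathfrak{I}_{s;k}$ is the honest way through, but it goes well beyond what the paper provides.
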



\begin{figure}
    \center
	\includegraphics[width=0.8\linewidth]{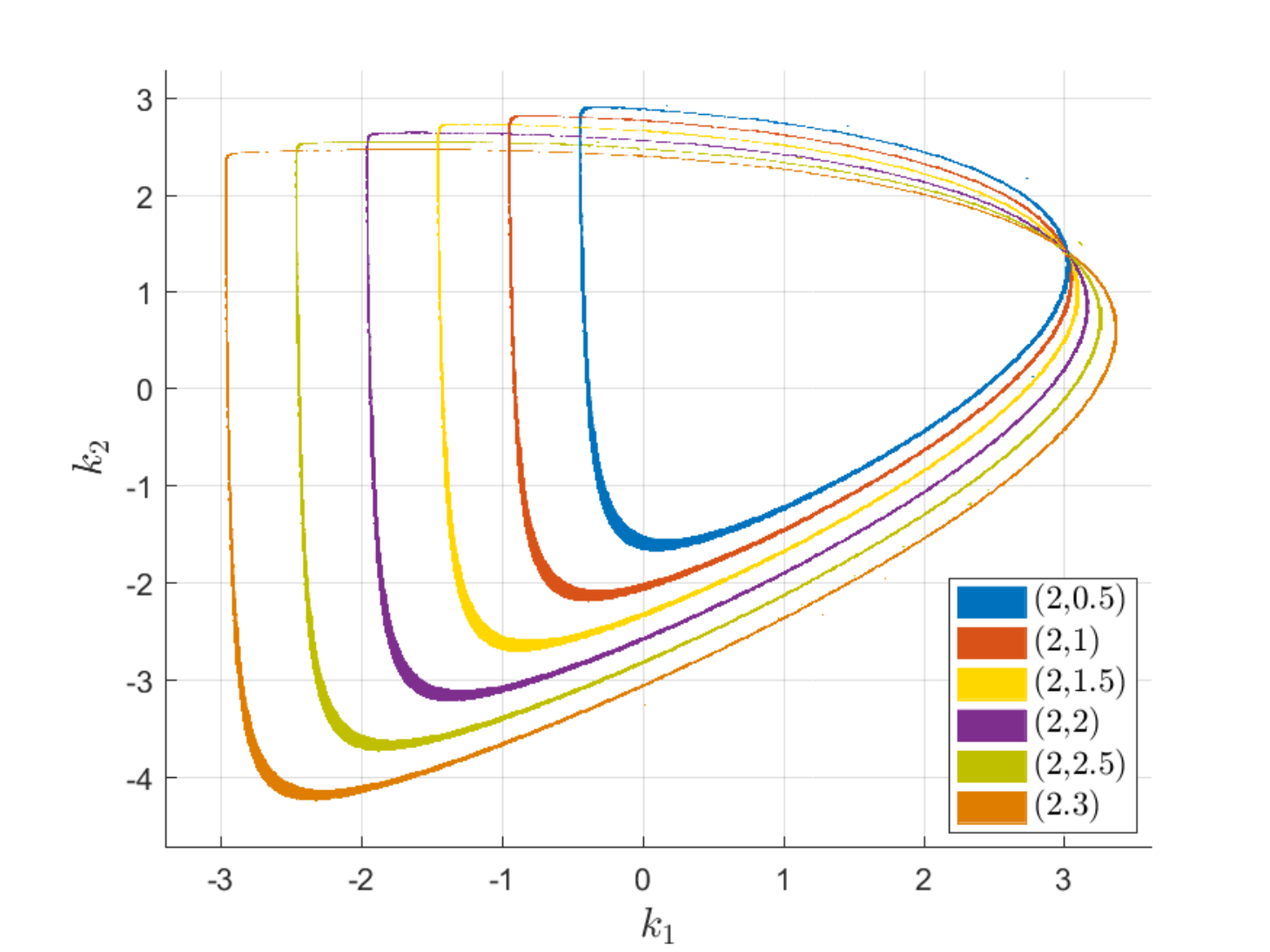}
	\caption{Stability areas for various selections $s=(s_1,s_2)$. The curves are constructed from boundaries of $\{\mathbb W_j\}_{j=1}^3$.}
\end{figure}

\begin{figure}
    \center
	\includegraphics[width=0.9\linewidth]{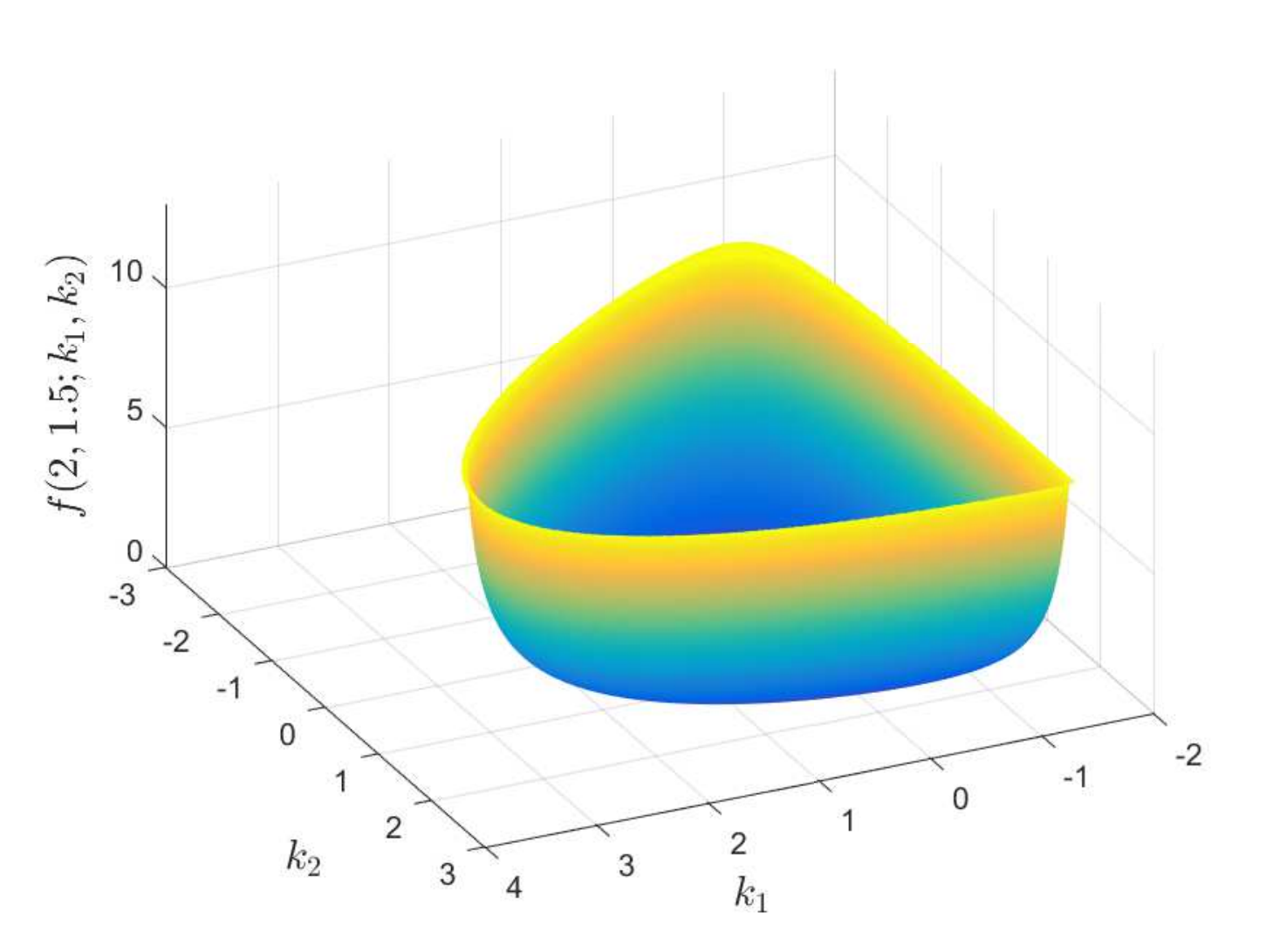}
	\caption{The graph of $f(2,1.5 ; k_1,k_2)$.}
\end{figure}

\begin{table*}\centering
\begin{minipage}{1.0\textwidth} \hrule \center
\begin{equation}\label{eq: f} f(s;k)=\int_{\mathbb R}\frac{ dr }{2\big( (s_1k_2-k_1)r^2+s_2k_1\big)\cos(r)-2r(k_2r^2+s_1k_1-k_2s_2)\sin(r)+r^4+(s_1^2+k_2^2-2s_2)r^2+s_2^2+k_1^2}
\end{equation}
\hrule
\end{minipage}
\end{table*}

\begin{proof}
\noindent (1): Recall that the denominator in the integrand in \eqref{eq: f} is the square of magnitude of characteristic function. It is, therefore, an even and non-negative function for every $r$. It is not hard to see that the denominator is bounded as follows 
\begin{equation*}
r^4+ \alpha_3 |r| r^2 + \alpha_2 r^2 +\alpha_1 |r| +\alpha_0 < \begin{cases}
 \beta_1 r   + \alpha_0 &,~~  r\in (0,1) \\
 \beta_2 r^4   &,~~ r>1
\end{cases}
\end{equation*}    
with $\alpha_3=2k_2$, $\alpha_2=2|s_1k_2-k_1|+|s_1^2+k_2^2-2s_2|$, $\alpha_1=2|s_1k_1-s_2k_2 |$, $\alpha_0=s_2^2+k_1^2$, $\beta_1=\sum_{i=1}^3 \alpha_i$, $\beta_2=\sum_{i=0}^3\alpha_i$, all as functions of $s,k$. Consequently,
\begin{equation*}
    \begin{split}
        f(s;k) & > \int_0^1 \frac{2\,dr}{\beta_1 r+\alpha_0} + \int_1^\infty \frac{2\,dr}{ \beta_2 r^4}\\
        & = \frac{2}{\beta_1}\ln\bigg(1+\frac{\alpha_0}{\beta_1} \bigg)+\frac{2}{3\beta_2}>0.
    \end{split}
\end{equation*}
\noindent (2): Denote by $g_r(s;k)$ the integrand of \eqref{eq: f}. It can be shown that $\frac{\partial}{\partial s_1}g_r(s;k)<0$ for  $s_1>|k_1|-k_2$. Choosing a large enough $j^*$, to end up with sequence $g^{(j)}(r):=g_r(s_1^{(j)},s_2;k)$ for $j>j^*(r)$, each of which is dominated by $g_{j^*}$. Also, $\lim_j g_j(r)\equiv 0$. The application of the dominated convergence theorem yields
\begin{equation*}
    \begin{split}
    \lim_{j} f(s_1^{(j)},s_2;k) & =\lim_j 2\int_{0}^\infty g_j(r)\,dr\\ 
    							& = 2\int_{0}^\infty \lim_j g_j(r)\,dr=0.
    \end{split}
\end{equation*}
\noindent (3): For any $\varepsilon>0$, pick any $M>\sqrt[3]{\frac{4}{3\varepsilon}}$ and choose $j^*$ large enough such that $s_2^{(j)}>\max\big\{ \sqrt[3]{\frac{4}{3\varepsilon}}-k_1,M^2-k_1-k_2\big\}$ for $j>j^{*}$. Fix such $j$ and observe that 
\begin{equation*}
\begin{split}
f(s_1,s_2^{(j)};k)& = 2 \int_{0}^M g(r)\,dr + 2 \int_M^\infty g(r)\,dr   \\
&<2 \int_{0}^M g(r)\,dr  +  2 \int_M^\infty \frac{dr}{r^4}\\
&< \frac{2M}{2(s_2^{(j)})k_1+(s_2^{(j)})^2+k_1^2}+ \frac{2}{3 M^3}\\& < \frac{\varepsilon}{2} + \frac{\varepsilon}{2}=\varepsilon.
\end{split}
\end{equation*}
The last step is due to the choices of $M$ and index $j$. For an arbitrarily small $\varepsilon>0$, the result follows.
\end{proof}

\bibliographystyle{IEEEtran}    
\bibliography{ifacconf}

 \begin{IEEEbiography}[{\includegraphics[width=1in,height=1.25in,clip,keepaspectratio]{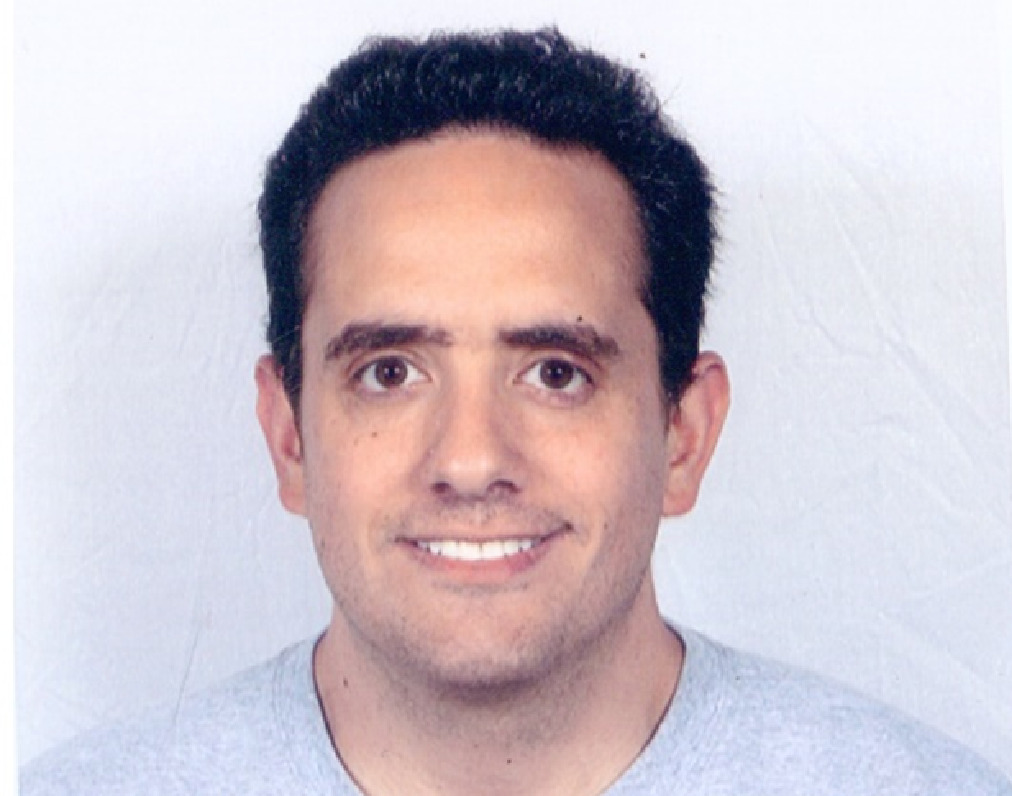}}]{Christoforos Somarakis}
Christoforos Somarakis received the B.S. degree
in Electrical Engineering from the National Technical
University of Athens, Athens, Greece, in 2007
and the M.S. and Ph.D. degrees in applied
mathematics from the University of Maryland at
College Park, in 2012 and 2015, respectively. He
was a Post-Doctoral scholar and a Research Scientist
with the Department of Mechanical Engineering and
Mechanics at Lehigh University from 2016 to 2019.
He is currently member of research staff with the
System Sciences Lab at Palo Alto Research Center.
 \end{IEEEbiography}

\begin{IEEEbiography}[{\includegraphics[width=1in,height=1.25in,clip,keepaspectratio]{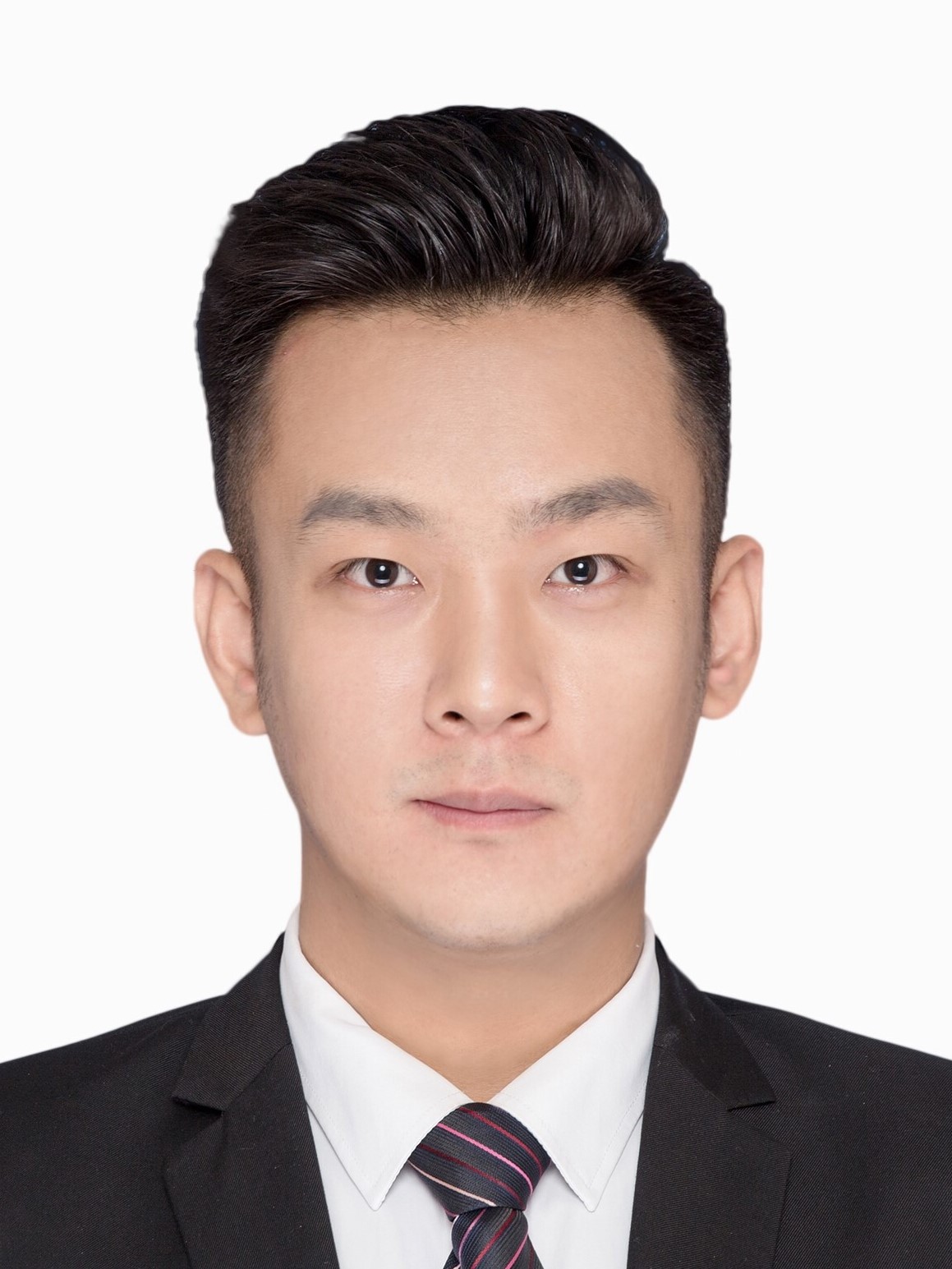}}]{Guangyi Liu}
    Guangyi Liu received his B.E. degree in
    aircraft design and engineering from the Beijing Institute of Technology in 2016 and the M.S. degree in mechanical engineering from the Lehigh University in 2018. He is currently pursuing a Ph.D. degree
    in the Department of Mechanical Engineering and
    Mechanics at Lehigh University. His research interests include the risk analysis of networked control systems and multi-agent machine learning.
\end{IEEEbiography}
 
 \begin{IEEEbiography}[{\includegraphics[width=1in,height=1.25in,clip,keepaspectratio]{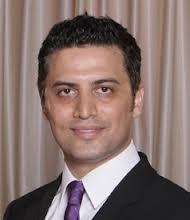}}]{Nader Motee}
 Nader Motee (Senior Member, IEEE) received 
the B.Sc. degree in electrical engineering from 
the Sharif University of Technology, Tehran, 
Iran, in 2000, and the M.Sc. and Ph.D. degrees 
in electrical and systems engineering from the 
University of Pennsylvania, Philadelphia, PA, 
USA, in 2006 and 2007, respectively. 
From 2008 to 2011, he was a Postdoctoral 
Scholar with the Control and Dynamical Systems Department, California Institute of Technology, Pasadena, CA, USA. He is currently a 
Professor with the Department of Mechanical Engineering and Mechanics, Lehigh University, Bethlehem, PA, USA. His research interests
include distributed control systems and real-time robot perception. 
Dr. Motee was the recipient of several awards including the 2019 Best 
SIAM Journal of Control and Optimization Paper Prize, the 2008 AACC 
Hugo Schuck Best Paper Award, the 2007 ACC Best Student Paper 
Award, the 2008 Joseph and Rosaline Wolf Best Thesis Award, the 
2013 Air Force Office of Scientific Research Young Investigator Program 
Award, 2015 NSF Faculty Early Career Development Award, and a 2016 
Office of Naval Research Young Investigator Program Award.
 \end{IEEEbiography}

\end{document}